\PassOptionsToPackage{colorlinks=true, linkcolor=blue, citecolor=blue, urlcolor=blue, anchorcolor=blue}{hyperref}
\documentclass[a4paper]{iopjournal}

\usepackage[utf8]{inputenc}
\usepackage[english]{babel}
\usepackage[T1]{fontenc}
\usepackage{amsmath,amssymb,bm,amsfonts,mathrsfs,bbm}
\usepackage{tikz}
\usetikzlibrary{patterns, calc, decorations.pathreplacing, shapes, intersections, backgrounds}
\usepackage{tikz-3dplot}
\usepackage[colorlinks=true, linkcolor=blue, citecolor=blue, urlcolor=blue, anchorcolor=blue]{hyperref}

\usepackage{orcidlink}
\usepackage{lipsum}
\usepackage{amsthm}
\usepackage{subcaption}
\usepackage{comment}
\usepackage{diagbox}
\usepackage{xcolor}
\usepackage{soul}
\usepackage{dsfont}
\usepackage{graphicx}
\usepackage{physics}
\usepackage[numbers]{natbib}
\usepackage{wrapfig}
\usepackage{float} 
\usepackage{ragged2e}
\newtheorem{theorem}{Theorem}

\newtheorem{corollary}{Corollary}[theorem]
\newtheorem{definition}{Definition}

\newtheorem{example}{Example}
\newtheorem{proposition}[theorem]{Proposition}


\newcommand{\R}{\mathbb{R}} 
\newcommand{\mc}[1]{\mathcal{#1}} 
\newcommand{\mk}[1]{\mathfrak{#1}} 
\newcommand{\ms}[1]{\mathsf{#1}} 
\newcommand{\tx}[1]{\mathrm{ \ {#1} \ }} 
\newcommand{\then}{\Rightarrow} 
\newcommand{\ra}{\rightarrow} 

\begin{document}

\articletype{Paper} %

\title{Reachable sets under Kolmogorov dynamics in the  probability simplex}

\author{Alfonso Fernández de Bobadilla$^1$\orcid{0009-0003-7361-1849}, Mykhailo Hontarenko$^{1,2}$\orcid{0009-0005-1354-7149},  Guillem Müller-Rigat$^1$\orcid{0000-0003-0589-7956
}  and \\  Karol {\.Z}yczkowski$^{1,3}$\orcid{0000-0002-0653-3639}}

\affil{$^1$Institute of Theoretical Physics, Jagiellonian University, ul. {\L}ojasiewicza 11, Krak{\'o}w, 30-348, Poland}

\affil{$^2$Doctoral School of Exact and Natural Sciences, Jagiellonian University, ul. Łojasiewicza 11, 30-348 Kraków, Poland}

\affil{$^3$Center for Theoretical Physics, Polish Academy of Sciences, Al. Lotników 32/46, 02-668 Warszawa, Poland}





\begin{abstract}
\justifying
The finite-time reachability problem -- whether two probability vectors can be transformed in a time $t$ through a generator from a given set -- is a key question in Markovian dynamics. We address this problem for normalized Kolmogorov generators with tools from differential geometry. The shortest time $T$ connecting an ordered pair of states defines an (asymmetric) quasi-distance in the probability simplex, which distinguishes between outgoing and incoming reachable sets, and captures the inherent irreversibility of Markov dynamics. The Finsler metric corresponding to normalized generators with bounded diagonal entries is characterized, and time-independent evolutions that saturate speed limits are presented. Analytical boundaries of both reachable sets are derived and the ratio of their corresponding volumes 
assess 
the bounds obtained. The asymmetry with respect to the uniform probability vector is shown to be related to the Kullback--Leibler relative entropy.
\end{abstract}

\section{Introduction}
\justifying
The evolution of classical states -- $N$-point probability vectors -- under Markov dynamics is governed by the Kolmogorov equation~\cite{Kolmogoroff1931}. It has broad applications across many disciplines~\cite{Anderson1991}, including models of neuronal activity~\cite{Buice2010}, particle diffusion processes~\cite{Einstein1905,Smoluchowski1906}, and in the early quantum theory, where it was used to describe driven dissipative harmonic oscillators in radiation fields~\cite{Planck1917}. These ideas later became foundational in quantum optics, where they play a central role in the description of open quantum systems~\cite{Carmichael1999,Gardiner2004-va}.

A particularly important question in Markovian dynamics is the reachability problem, i.e. to characterize the set of states that can be reached from a given initial one. To date, this problem has been mostly considered in the context of controllability of dynamical systems~\cite{Sontag1998, khaneja2000}, which ask whether a state can be produced or not given driving inputs under physically-inspired constraints, e.g. forbidding certain state transitions. Reachable subsets of the probability simplex under Markovian dynamics were recently discussed in the context of thermal resources \cite{lostaglio2018elementary,lostaglio2022continuous}. Numerous techniques from convex analysis~\cite{Borkar1990}, Lie wedges and semigroup orbits ~\cite{wedges1,wedges2,wedges3}, graph theory ~\cite{Elamvazhuthi2021}, and dynamical programming~\cite{Avila2022} have been exploited to address the question. However, these approaches either consider discrete time-dynamics in a Markov chain or focus on the absolute reachability of the target (in an arbitrarily large time). Nonetheless, time-constrained reachable sets have recently attracted attention, for example in the context of covariance control~\cite{Liu2024}, or in the classical counterparts of quantum speed based on entropy production~\cite{Gu2023SpeedLimit, Shanahan2018}.

In this work, we consider the finite-time reachability problem in the probability simplex which, to be well posed, requires bounding the generators (as if not, arbitrary pairs of states may be reached arbitrarily fast). To this end, we inspect dynamics whose Kolmogorov generators satisfy $\|Q\|\leq 1$ for some chosen norm, and ask for the evolution transforming an initial state $\bf p$ into a final state ${\bf q}$ in the shortest possible time. Note that the norm of $Q$ can be interpreted as a restricted resource (e.g.  mean energy or average coupling strength) one wishes to optimize. In this resource-constrained setting, the minimum time to join pairs of states becomes a quasi-distance, which complements other divergences on the probability simplex ~\cite{stojmirovic2008quasi,eckstein2025maxtypequasimetricsprobabilitysimplices}. Unlike a standard symmetric distance, this quasi-distance distinguishes between outgoing and incoming sets, i.e. the sets of states that are reachable from or can reach a given initial probability vector. These sets are different due to the inherent irreversibility of Kolmogorov dynamics, and moreover, both sets can be nonconvex. 

Furthermore, the resulting time quasi-distance is shown to be compatible with an underlying Finsler metric, whose form is determined by the particular norm $\|\cdot\|$ constraining the dynamics. This observation allows a geometric approach to the reachability problem, with reachable sets and optimal trajectories becoming, respectively, metric balls and geodesic curves of the time quasi-distance. We explicitly provide the time quasi-distances that solve the reachability problem when the $\ell_p$ norm on the diagonal rates of the admissible Kolmogorov generators is bounded, and show that the reachable sets corresponding to the cases of bounded average and maximum diagonal rates constitute, respectively, inner and outer bounds to the reachable sets due to dynamics bounded in any other permutation-invariant norm on the diagonal of the generator, not necessarily of the diagonal-$\ell_p$ type. Importantly, the bounds are formulated in the general case of time-inhomogeneous Markov dynamics, i.e. when the generator may depend explicitly on time. Finally, one may translate the reachability problem with bounded generators into a task of redistributing water between $N$ tanks in an optimal way by using pumps with bounded draining power. This model will be discussed further in Section~\ref{sec:preliminaries} and Appendix \ref{app: toy}.

As an appetizer, we compare 
 in Fig.~\ref{fig:riemann_finsler} the time distance $(a)$ -- 
approximately  symmetric car driving time,
with  the quasi-distance $(b)$ --
asymmetric hiking time in the mountains,  and its corresponding reachable sets.

\begin{figure}[h]
    \centering
      \begin{subfigure}{0.45\textwidth}
        \centering
        \includegraphics[width=\linewidth]{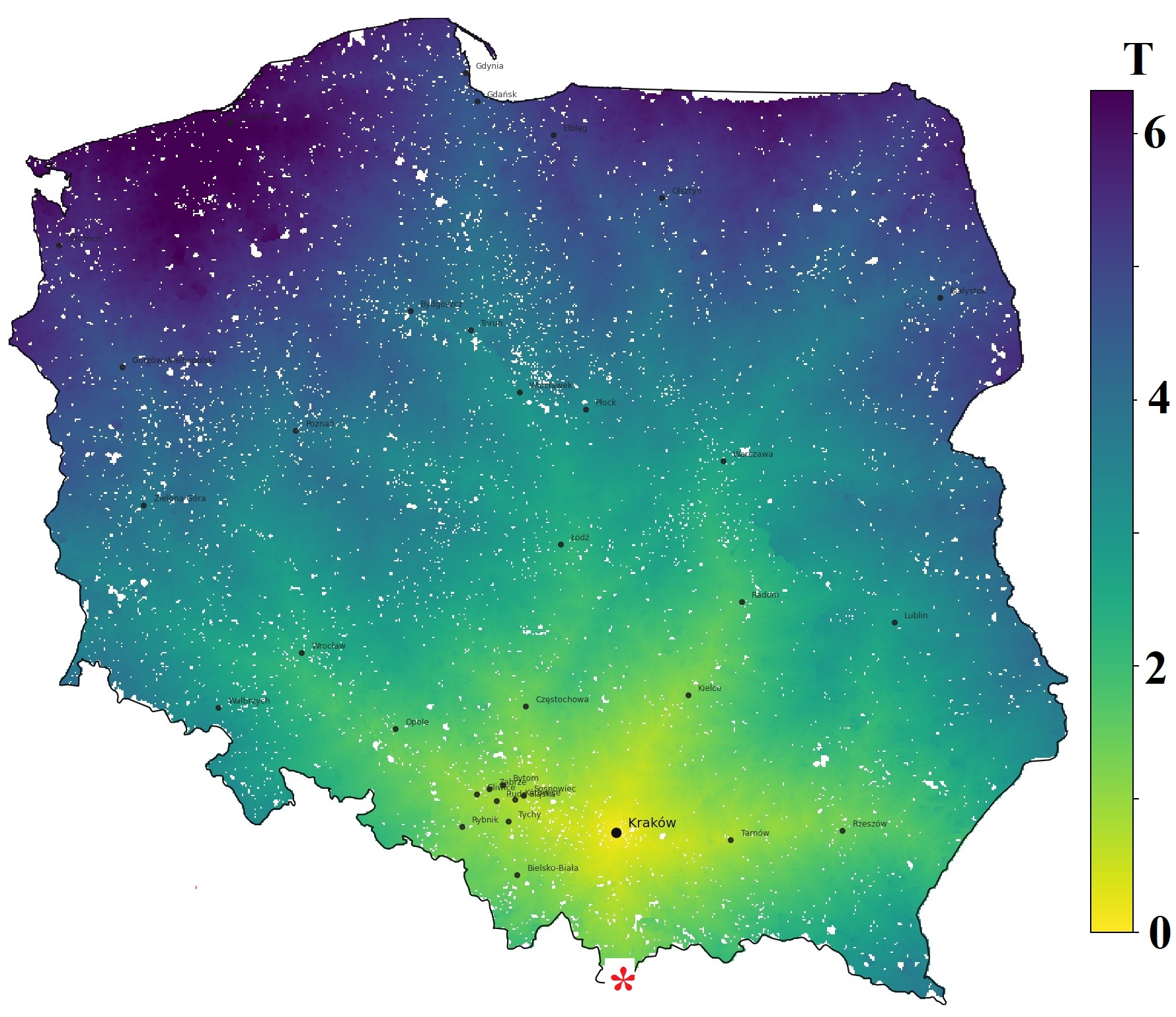}
        \caption{}
    \end{subfigure}
    \hfill
    \begin{subfigure}{0.48\textwidth}
        \centering
        \includegraphics[width=0.80\linewidth]{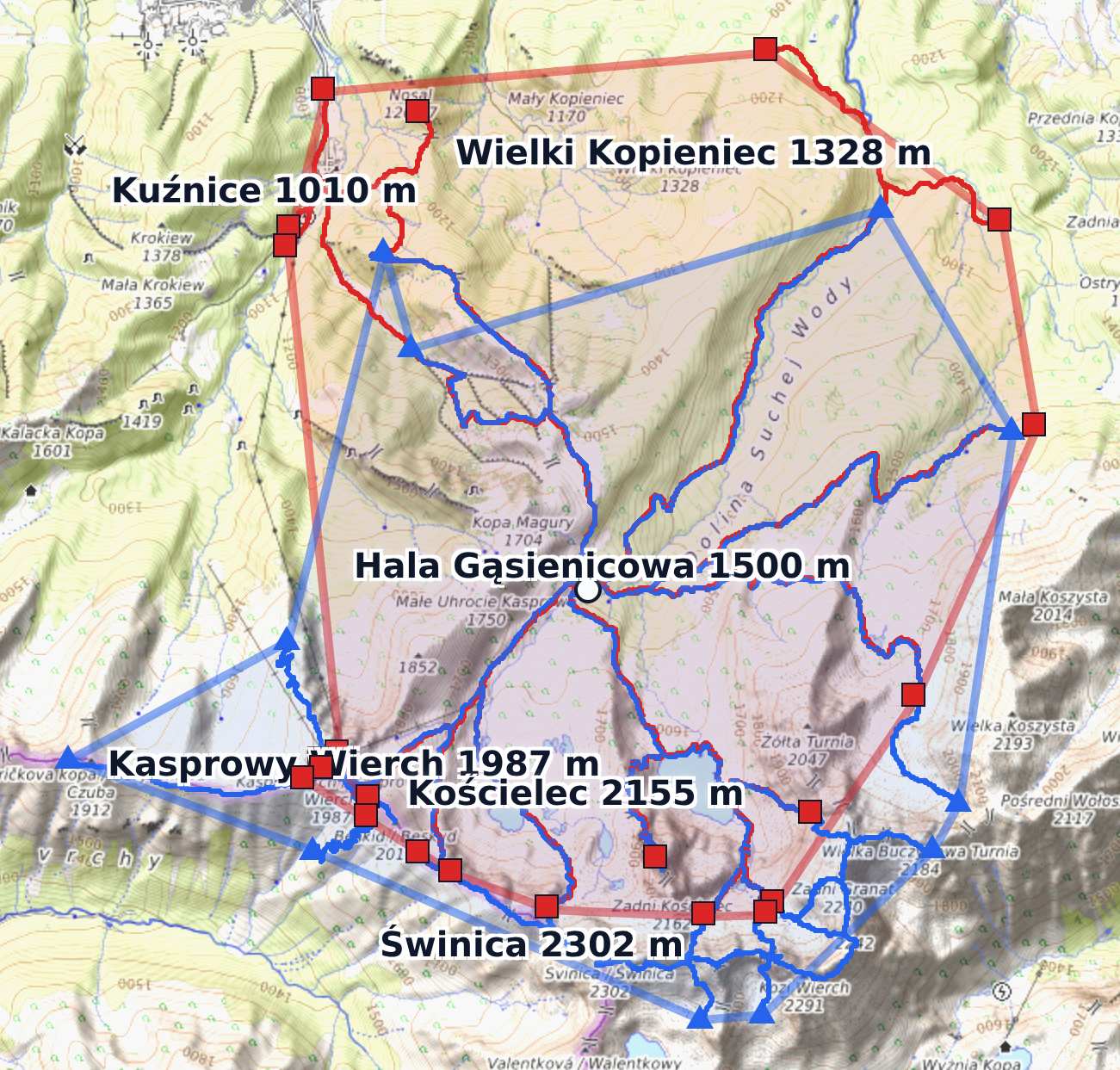}
        \caption{}
    \end{subfigure}
   \caption{$(a)$ Exemplary {\sl time distance}: driving time $T$ in Poland
   from Kraków (in hours) according to Open Street Map -- the balls corresponding to a fixed $T$, represented by different colors, are not convex;
    $(b)$ 
   {\sl time quasi-distance}: the 
   outgoing set (plotted in red)  -- points in Tatra mountains (red star in panel a) 
   reachable from G{\k a}sienicowa hut (white point)
   in 90 minutes
    according to the data from portal Sharpmap.com 
    differs from  the incoming set 
   (points in blue), from which one can reach the hut in 90 minutes.
   Note the differences in altitude which explain this asymmetry.
   } 
    \label{fig:riemann_finsler}
\end{figure}

The work is structured as follows. In Section~\ref{sec:preliminaries}, the time-metric and time-quasi distance induced by the generator norm are introduced, along with the necessary mathematical definitions. In Section~\ref{sec:results}, we present the main results. In particular, an explicit characterization of the time quasi-distances induced by dynamics whose diagonal entries are $\ell_p$-bounded, as well as inner and outer bounds applying in the more general setting of permutation-invariant norms on the diagonal of the generator, holding in arbitrary dimension and for time inhomogeneous dynamics. We also discuss the tightness of the obtained bounds and relate the corresponding quasi-distances to entropic quantities. Finally, in Section~\ref{sec:conclusions} we present our conclusions and directions for future work. An analogous water-distribution model is described in Appendix \ref{app: toy} and proofs are presented in Appendices \ref{app:prelim} and \ref{app:results}.

\section{Preliminaries}
\label{sec:preliminaries}

Some introductory definitions and results regarding the geometries of the spaces of classical states and classical Markov evolutions are reviewed, and the Finsler time-metric and time quasi-distance on the simplex are introduced.

\subsection{Markov dynamics on the simplex}

We start setting the scene by defining state space.

\begin{definition}[States, state space]
    A state of an $N$-level system is given by an $N$-point probability vector,
    \begin{equation}
        {\bf{p}} = [ \, p_1,...,p_{N} \, ]^T \tx{ s.t. } p_i\geq 0 \tx{ and }\sum_{i=1}^N p_i = 1,
    \end{equation}
    A state ${\bf{p}}$ is said to be interior iff $p_i>0 \ \forall i$. We call the simplex of all probability vectors,
    \begin{equation}
        \Delta_{N-1} = \left\{ {\bf{p}}\mathrm{\ s.t.\ } p_i\geq 0,\sum_{i=1}^Np_i=1\right\},  
    \end{equation}
    the state space of the system. Denote as $\omega=[1/N,...,1/N]^T$ the maximally-mixed state.
\end{definition}

Geometrically, the simplex $\Delta_{N-1}$ is a compact smooth manifold with corners \cite{lee2013smooth, Cai2025}. From an interior state one may travel in any direction, while at the boundary only a certain cone of directions is available. Besides this, one can define the tangent space in the usual manner.

\begin{definition}[Tangent vectors, tangent space]
    Given a state ${\bf{p}}$, its tangent space $T_{\bf{p}}\Delta_{N-1}$ is defined as the collection of all tangent vectors ${\bf{v}}=d\mathbf{p}(t)/dt|_{t=0}$ at ${\bf{p}}={\bf{p}}(0)$ of directed differentiable curves $\mathbf{p}(t)$ on the simplex, 
    \begin{equation}
        T_{\bf{p}}\Delta_{N-1} = \left\{{\bf{v}} \tx{ s.t. } \sum_{i=1}^N v_i=0 \tx{ and } v_j\geq 0 \tx{ when } p_j=0\right\}.
    \end{equation}
\end{definition}

\

As a final remark on the geometry of $\Delta_{N-1}$, let us introduce the following partition of the simplex into disjoint chambers, which will appear often throughout the work. Given state ${\bf{p}}$, one may classify other points ${\bf{q}}$ into chambers according to the signs of $q_i-p_i$. Each such region is uniquely characterized by the sets of indices $I_-,I_+\subset\{1,...,N\}$ for which $q_i-p_i< 0$ and $q_i-p_i \geq 0$, respectively. Similarly, one may partition the tangent space at some state into disjoint regions, according to the signs of entries of tangent vectors, see Fig. \ref{fig:chambers}. 
Several properties derived in this work strongly depend on the choice of one of these chambers.

\begin{figure}[h!]
    \centering
    \includegraphics[width=0.75\linewidth]{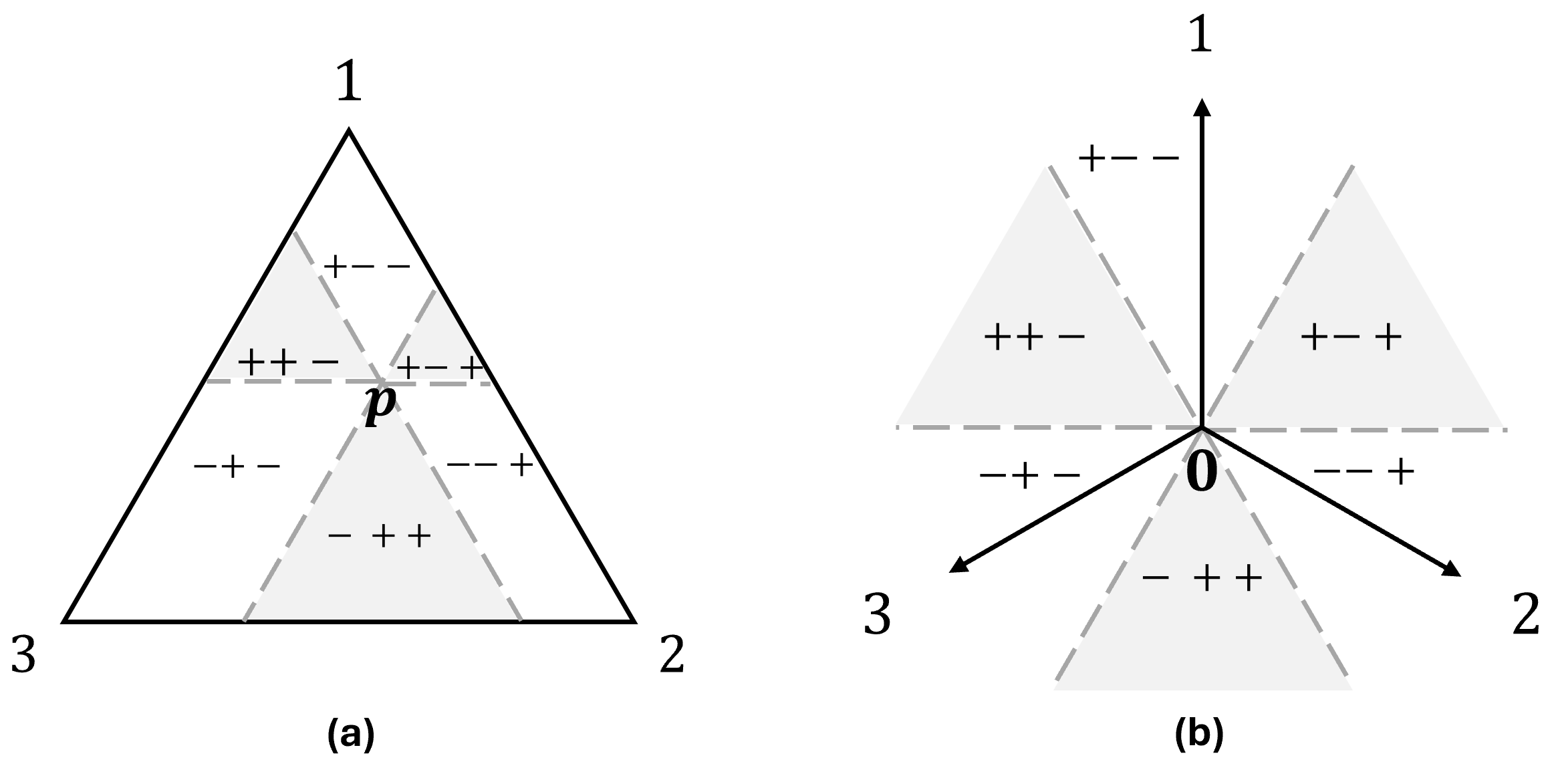}
    \caption{Exemplary partition of the state space (a) and tangent space (b) for $N=3$. For a given ${\bf{p}}$, other states ${\bf{q}}$ on the simplex can be grouped into disjoint chambers according to the signs of the entries of ${\bf{q}}-{\bf{p}}$ (a). Similarly, the tangent space at ${\bf{p}}$ can be partitioned according to the signs of the entries of ${\bf{v}}$ (b). Regions where only a single entry is negative are colored in light gray. Here, $1,2,3$ represent the deterministic states $[1,0,0], [0,1,0], [0,0,1]$ respectively.}
    \label{fig:chambers}
\end{figure}

Time evolution of the system is described by stochastic maps $\mc{E}_{t_2t_1}$, which satisfy $\sum_{i}[\mc{E}_{t_2,t_1}]_{ij}=1$ and $[\mc{E}_{t_2t_1}]_ {ij}\geq 0$, sending initial states ${\bf{p}}$ at time $t_1$ to later states $\mc{E}_{t_2t_1}{\bf{p}}$ at time $t_2$ \cite{Rivas2012}. In this work, we will be concerned with time-continuous Markovian evolutions, both time-homogeneous (when $\mc{E}_{t_2,t_1} = \mc{E}_{t_2-t_1,0} \equiv \mc{E}_{t_2-t_1}$) and time in-homogeneous (if not).

\begin{definition}[Time-homogeneous Markov dynamics]
    A family of time-homogeneous stochastic maps on the simplex $\{\mathcal{E}_{t} \}_{t\geq 0}$ fulfilling $\mc{E}_{t+s} = \mc{E}_t \circ \mc{E}_s$ (Markov condition) and $\mc{E}_0 = \tx{id}$ (with $\tx{id}$ the identity map) forms a Markov semigroup. Such semigroups have an associated Kolmogorov generator $Q$ obeying
    \begin{equation}
    \label{eq:Kolconditions}
        \sum_{i=1}^N Q_{ij}=0 \tx{ and } Q_{ij}\geq 0 \tx{ for } i \neq j,
    \end{equation}
    in terms of which $\mc{E}_t=e^{tQ}$ and the system obeys the master equation
    \begin{equation}\label{eq:classmaster}
        \frac{d{\bf p}}{dt} = Q {\bf p}.
    \end{equation}
\end{definition}
We may also use overdot, 
$\dot{\mathbf{p}}$, to indicate the time derivative. Eq. ~\eqref{eq:classmaster} implies that the off-diagonal entries $Q_{ij}$ represent individual decay rates of probability inflow from configuration $j$ into configuration $i$.  The column sum-zero condition,  
$Q_{ii}=-\sum_{j\neq i}Q_{ij}$,
gives the total outflow from configuration $i$ into the rest, a sort of \textit{collective decay rate} describing the net rate of population loss. Hence, the Kolmogorov matrix $Q$ is often known as the \textit{transition-rate matrix}, or simply \textit{rate matrix}. In the time-homogeneous scenario these transition rates are fixed, but one may also allow the Kolmogorov matrix to depend on time.

\begin{definition}[Time-inhomogeneous Markov dynamics]
    A system evolving under~\eqref{eq:classmaster} with time-dependent Kolmogorov generator $Q_t$
    which satisfies $\sum_{i} Q_{t,ij}=0$ and $Q_{t,ij}\geq 0$ for $i\neq j$ at all times and is continuous in $t$, is said to be undergoing time-inhomogeneous Markov dynamics. The formal solution to this time-inhomogeneous master equation is no longer of semigroup type, but
    \begin{equation}
        {\bf{p}}(t) = \mc{T}e^{\int_0^t Q_s ds}{\bf{p}}.
    \end{equation}
    where $\mc{T}$ is the time-ordering symbol \cite{Anderson1991}.
\end{definition}

The particular form of the Kolmogorov matrix will determine -- through the master equation -- the future behavior of the system. Hence, it is vital for our purposes to have a good understanding of the geometry and structure of the space of such matrices.

\begin{definition}
    Denote by $\mk{gen}$ the set of all possible Kolmogorov rate matrices for an $N$-level system.
    \begin{equation}
        \mk{gen} := \{Q \tx{ s.t. } Q_{ij}\geq 0 \tx{ for } i\neq j \tx{ and } \sum_{i=1}^{N}Q_{ij}=0\}
    \end{equation}
    We introduce the following two physically motivated relations on $\mk{gen}$, allowing to compare pairs of generators
    \begin{itemize}
        \item $Q\leq_I Q' \! \iff \! Q_{ij} \leq Q_{ij}' \ \forall i\neq j$~, defined by comparison of individual decay rates.

        \item $Q\leq_C Q' \! \iff \! |Q_{ii}|\leq |Q_{ii}'| \ \forall i$, defined by comparison of collective decay rates.
    \end{itemize}
\end{definition}
It is not difficult to establish the following observation, proven in Appendix~\ref{app:prelim}.
\begin{proposition}\label{th: gen}
    The set $\mk{gen}$ forms a simplicial cone in $\mathbb{R}^{N(N-1)}$ with extremal rays spanned by $\{E_{(ij)}:=E_{ij} - E_{jj}\}_{i\neq j}$, where $E_{ij}$ is the matrix whose only nonzero entry is $1$ in position $(i,j)$. The relation $\leq_I$ is the natural partial order of $\mk{gen}$ as a cone, while $\leq_C$ is not a partial order but just a pre-order, since $Q\leq_C Q'$ and $Q'\leq_C Q \nRightarrow Q = Q'$. In any case, they satisfy $Q \leq_I Q' \then Q\leq_C Q'$.
\end{proposition}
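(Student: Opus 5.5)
The plan is to work in the coordinates given by the off-diagonal entries. The column-sum condition fixes each diagonal entry as $Q_{jj}=-\sum_{i\neq j}Q_{ij}$. So the linear map $\Phi:\R^{N(N-1)}\to \mathbb{R}^{N\times N}$, sending $(x_{ij})_{i\neq j}$ to $\sum_{i\neq j}x_{ij}E_{(ij)}$, is injective. Its restriction to the nonnegative orthant is a bijection onto $\mk{gen}$. Indeed, any $Q\in\mk{gen}$ decomposes as $Q=\sum_{i\neq j}Q_{ij}E_{(ij)}$: the off-diagonal entries match by construction, and the $(j,j)$ entry of the right-hand side is $-\sum_{i\neq j}Q_{ij}=Q_{jj}$. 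Conversely, any nonnegative combination of the $E_{(ij)}$ has nonnegative off-diagonal entries and zero column sums, since each $E_{(ij)}$ does.

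From this, $\mk{gen}$ is the image of the orthant $\R_{\geq0}^{N(N-1)}$ under a linear isomorphism onto its span. The span is the $N(N-1)$-dimensional space of zero-column-sum matrices. Hence $\mk{gen}$ is a simplicial cone whose extremal rays are the images of the coordinate rays, namely those spanned by the $E_{(ij)}$. Linear independence of the $E_{(ij)}$ follows from injectivity of $\Phi$: each $E_{(ij)}$ is the only generator with a nonzero $(i,j)$ off-diagonal entry.

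For the orders, recall that the natural order of a cone $K$ is $Q\leq Q'\iff Q'-Q\in K$. Here $Q'-Q\in\mk{gen}$ means that $Q'-Q$ has nonnegative off-diagonal entries; its column sums vanish automatically. This is exactly $Q\leq_I Q'$. Because $\mk{gen}$ is pointed, being simplicial, this relation is antisymmetric, hence a partial order.

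The relation $\leq_C$ is reflexive and transitive, since it compares the numbers $|Q_{ii}|$ entrywise. To show it fails antisymmetry, I would give an explicit counterexample for $N\geq3$: take $Q=E_{(21)}$ and $Q'=E_{(31)}$. Both have diagonal $(-1,0,0,\dots)$, so $Q\leq_C Q'$ and $Q'\leq_C Q$, yet $Q\neq Q'$.

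The main obstacle is the case $N=2$. There the diagonal determines $Q$ completely, so $\leq_C$ is in fact a partial order, and the claim must be read for $N\geq 3$. I would note this restriction explicitly.

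Finally, suppose $Q\leq_I Q'$. Then $|Q_{jj}|=\sum_{i\neq j}Q_{ij}\leq\sum_{i\neq j}Q'_{ij}=|Q'_{jj}|$, using nonnegativity of the off-diagonal entries. This gives $Q\leq_C Q'$.
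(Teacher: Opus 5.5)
Your proof is correct and follows the paper's argument essentially step for step. The shared steps are the expansion $Q=\sum_{i\neq j}Q_{ij}E_{(ij)}$ with linearly independent $E_{(ij)}$, the identification of $Q'-Q\in\mk{gen}$ with $Q\leq_I Q'$, a pair with equal diagonals but different off-diagonal entries to break antisymmetry of $\leq_C$, and the column-sum comparison giving $Q\leq_I Q'\Rightarrow Q\leq_C Q'$. Your counterexample $E_{(21)}$ versus $E_{(31)}$ is simpler than the paper's $3\times 3$ pair, and your remark that $\leq_C$ is in fact a partial order when $N=2$ is correct and consistent with the paper's own restriction to $N\geq 3$.
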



Note that, in general, many matrices in $\mk{gen}$, when acting on a given state $\textbf{p}$, will push it with the same given velocity $\textbf{v}\in T_\textbf{p}\Delta_{N-1}$.


\begin{definition}
    The set of Kolmogorov matrices that act on the state ${\bf{p}}\in \Delta_{N-1}$ to give the tangent vector ${\bf{v}}\in T_{\bf p}\Delta_{N-1}$ is $\mk{gen}({\bf{p}},{\bf{v}}):=\{Q\in\mk{gen}\mathrm{\  s.t.\ } Q\bf{p} = v\}$.
\end{definition}

We are particularly interested characterizing these sets, as they collect all Kolmogorov generators driving a given state with same initial velocity. It is easy to see that $\mk{gen}({\bf{p}},{\bf{v}})$ is a convex and unbounded affine subset of $\mk{gen}$, of dimension $(N-1)^2$. The following two Propositions shed more light into their structure.


            

\begin{proposition}\label{th: gen(p,v)1}
    For given interior ${\bf{p}},{\bf{q}}\in \Delta_{N-1}$ and fixed ${\bf{v}}$, the sets $\mk{gen}({\bf{p}},{\bf{v}})$ and $\mk{gen}({\bf{q}},{\bf{v}})$ are linearly isomorphic.
\end{proposition}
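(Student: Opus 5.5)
The natural plan is to build the isomorphism from a diagonal rescaling of columns. For interior ${\bf p},{\bf q}$, set $D=\mathrm{diag}(p_1/q_1,\dots,p_N/q_N)$, which is well defined and invertible because all entries of ${\bf p}$ and ${\bf q}$ are strictly positive. The candidate map is
\begin{equation*}
\Phi:\ Q\longmapsto QD ,
\end{equation*}
that is, $(QD)_{ij}=Q_{ij}\,p_j/q_j$. This is clearly linear on the ambient matrix space, with linear inverse $Q'\mapsto Q'D^{-1}$.

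First I will check that $\Phi$ preserves $\mk{gen}$. Multiplying column $j$ by the positive scalar $p_j/q_j$ keeps every off-diagonal entry nonnegative. It also keeps each column sum equal to zero, since $\sum_i Q_{ij}p_j/q_j=(p_j/q_j)\sum_i Q_{ij}=0$. The same argument applies to $D^{-1}$, whose diagonal entries are also positive, so $\Phi$ is a bijection of $\mk{gen}$ onto itself. In the language of Proposition~\ref{th: gen}, $\Phi$ simply rescales each extremal ray $E_{(ij)}$ by the positive factor $p_j/q_j$, and so it is an automorphism of the simplicial cone.

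Next I will check that $\Phi$ maps $\mk{gen}({\bf p},{\bf v})$ onto $\mk{gen}({\bf q},{\bf v})$. If $Q{\bf p}={\bf v}$, then $(QD){\bf q}=Q(D{\bf q})=Q{\bf p}={\bf v}$, because $D{\bf q}={\bf p}$ holds componentwise. Conversely, if $Q'{\bf q}={\bf v}$, then $Q'D^{-1}{\bf p}=Q'{\bf q}={\bf v}$. So $\Phi$ restricted to $\mk{gen}({\bf p},{\bf v})$ is a bijection onto $\mk{gen}({\bf q},{\bf v})$, and it is the restriction of a linear isomorphism of the ambient space, which is the sense of ``linearly isomorphic'' I intend to prove.

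The only real subtlety is interpreting ``linearly isomorphic'' for affine (non-linear) subsets. I would address it exactly as above, by exhibiting a linear bijection of the ambient space that carries one set onto the other. The same map also identifies the underlying linear subspaces $\mk{gen}({\bf p},0)$ and $\mk{gen}({\bf q},0)$. Interiority is used essentially, to make $D$ invertible, and I would remark that the construction fails on the boundary.
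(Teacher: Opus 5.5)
Your proof is correct and matches the paper's argument: the paper uses the same diagonal matrix $\mathsf{M}_{\bf pq}=\mathrm{diag}(p_i/q_i)$ and the map $Q\mapsto Q\mathsf{M}_{\bf pq}$, and checks nonnegative off-diagonal entries, zero column sums, and $Q\mathsf{M}_{\bf pq}{\bf q}=Q{\bf p}={\bf v}$. Your explicit check of the inverse direction via $D^{-1}$, which gives surjectivity onto $\mk{gen}({\bf q},{\bf v})$, is a step the paper leaves implicit.
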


\begin{proposition}\label{th: gen(p,v)2}
    Let ${\bf{v}}\in T_{\bf{p}}\Delta_{N-1}$ for an interior state $\mathbf{p}\in \Delta_{N-1}$. For $\mathbf{v}\neq \mathbf{0}$, minimal elements $Q$ of $\mk{gen}(\mathbf{p},\mathbf{v})$ with respect to $\leq_C$ form an $(n^+-1)(n^--1)$-dimensional polytope, and have the form:
    \begin{equation}
        Q_{ij} =
        \begin{cases}
            -\delta_{ij}|v_i|/p_i & \tx{ if } v_i,v_j\leq 0\\
            W_{ij}|v_j|/p_j & \tx{ if } v_i> 0, v_j\leq 0\\
            0 & \tx{ otherwise}
        \end{cases}
    \end{equation}
    where $W_{ij}$ is entrywise non-negative and satisfies $\sum_{i\in I_+}W_{ij} = 1$ and $\sum_{j\in I_-}W_{ij}v_j=-v_i$. For $\mathbf{v}=0$, its minimal element is the zero matrix.
\end{proposition}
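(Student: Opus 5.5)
The plan is to bound the diagonal entries of any $Q\in\mk{gen}(\mathbf{p},\mathbf{v})$ from below, and then show that the displayed matrices attain every bound at once. Let $I_-=\{j: v_j\le 0\}$ and $I_+=\{i: v_i>0\}$, with $n^\pm=|I_\pm|$. I take these index sets to be defined by the signs of $\mathbf{v}$, analogous to the chamber convention of Fig.~\ref{fig:chambers}.

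\textbf{Lower bounds on the diagonal.} Write $Q\mathbf{p}=\mathbf{v}$ componentwise as
\begin{equation}
v_i=\sum_{j\neq i}Q_{ij}p_j-|Q_{ii}|p_i .
\end{equation}
Since the off-diagonal terms are non-negative, every $i$ satisfies $|Q_{ii}|p_i\ge -v_i$, so $|Q_{ii}|\ge |v_i|/p_i$ for $i\in I_-$ and $|Q_{ii}|\ge 0$ for $i\in I_+$.

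\textbf{Attainment.} Let $D$ be the vector with $D_i=|v_i|/p_i$ on $I_-$ and $D_i=0$ on $I_+$. I will show that some $Q\in\mk{gen}(\mathbf{p},\mathbf{v})$ has diagonal $-D$. Such a $Q$ is then $\leq_C$ below every element. So the minimal elements with respect to the pre-order $\leq_C$ are exactly the $Q$ with $|Q_{ii}|=D_i$ for all $i$: any minimal $Q'$ must satisfy $Q'\leq_C Q$, which by the lower bounds forces equality.

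\textbf{Structure of these minimizers.} Fix such a $Q$.
\begin{itemize}
\item For $i\in I_+$, $Q_{ii}=0$ and column sums vanish, so column $i$ is entirely zero.
\item For $i\in I_-$, the equation $|Q_{ii}|p_i=-v_i+\sum_{j\ne i}Q_{ij}p_j$ with $|Q_{ii}|p_i=|v_i|=-v_i$ forces $\sum_{j\ne i}Q_{ij}p_j=0$. Since $\mathbf{p}$ is interior, $Q_{ij}=0$ for all $j\neq i$. Row $i$ is therefore zero off the diagonal; in particular there are no transitions into $I_-$.
\end{itemize}
So the only nonzero off-diagonal entries are $Q_{ij}$ with $i\in I_+$ and $j\in I_-$. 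I parametrize them as $Q_{ij}=W_{ij}|v_j|/p_j$ with $W_{ij}\ge0$. The two constraints then become:
\begin{itemize}
\item Column sum zero for $j\in I_-$: $\sum_{i\in I_+}W_{ij}|v_j|/p_j=|v_j|/p_j$. This gives $\sum_{i\in I_+}W_{ij}=1$ when $v_j\neq0$; when $v_j=0$ the column vanishes and $W_{ij}$ is a free normalized choice, which I will remark on as a convention.
\item Row $i\in I_+$: $v_i=\sum_j W_{ij}|v_j|=-\sum_j W_{ij}v_j$.
\end{itemize}
This is precisely the stated form, and the case $\mathbf{v}=0$ is immediate since then $D=0$.

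\textbf{Main obstacle: existence and dimension of the polytope.} For existence, note that $\sum_i v_i=0$ gives $\sum_{I_+}v_i=\sum_{I_-}|v_j|=:S>0$. Then $W_{ij}=v_i/S$ satisfies both constraints, so the feasible set is nonempty.

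The set of valid $W$ is a transportation polytope with margins $|v_j|$ and $v_i$, in $n^+n^-$ variables. It has $n^-$ column constraints and $n^+$ row constraints, with exactly one linear dependency: both families sum to $S$. This gives $n^++n^--1$ independent equations, so the dimension is $n^+n^--(n^++n^--1)=(n^+-1)(n^--1)$. The dimension is full because the feasible point above is strictly positive, hence interior to the positivity constraints. The case $v_j=0$ for some $j\in I_-$ needs care: I would either assign such zero entries to $I_+$, following the chamber convention, or note that they contribute degenerate free columns. I expect this bookkeeping to be the only delicate point.
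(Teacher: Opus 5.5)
Your argument is correct and is essentially the paper's proof. The shared steps are the componentwise lower bound $|Q_{ii}|\geq |v_i|/p_i$ on $I_-$, its attainment by matrices of block form $\left[\begin{smallmatrix}-D & 0\\ WD & 0\end{smallmatrix}\right]$, and the count $n^+n^- -(n^++n^--1)$. The differences are minor: you work directly at an interior $\mathbf{p}$ rather than first reducing to $\omega$ through the column-rescaling isomorphism of Proposition~\ref{th: gen(p,v)1}. You also supply three steps the paper leaves implicit: nonemptiness via $W_{ij}=v_i/S$, full dimensionality via a strictly positive feasible point, and the fact that every $\leq_C$-minimal element must have this form, which uses interiority of $\mathbf{p}$.

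On the zero-entry bookkeeping, take your second option. If you keep $v_j=0$ in $I_-$ as the statement does, your constraint count still gives $(n^+-1)(n^--1)$ for the $W$-polytope, because those columns become free simplices. If instead you move zeros into $I_+$ (the convention the paper's proof silently uses), the corresponding rows of $W$ are forced to vanish, which can break the formula. In either convention the induced set of matrices $Q$ can be strictly lower-dimensional, so the dimension claim really concerns $W$, or generic $\mathbf{v}$.
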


\noindent For the proof of both Propositions see Appendix~\ref{pr: gen(p,v)}.

\

While the sets $\mk{gen}({\bf{p}},{\bf{v}})$ may not have a minimal element in the order $\leq_I$, according to Proposition \ref{th: gen(p,v)2}, if we order matrices by comparison of collective decay rates there is now a distinguished set of \textit{smallest generators}, characterized by the above conditions.
As closing remark on evolutions, let us highlight a particular example of Markovian evolution that will play a relevant role in this work.
\begin{example}
    We refer as replacer evolution of strength $\kappa>0$ and target ${\bf{q}}\in\Delta_{N-1}$~\cite{Cai2025} to the map
    \begin{equation}
    \label{eq:dynclassrepl}
        {\bf{p}} \mapsto {\bf{p}}(t) = \mc{E}_t({\bf{p}}) := e^{-\kappa t}\, {\bf{p}} + (1-e^{-\kappa t}){\bf{q}},
    \end{equation}
    which can be seen to form a Markov semigroup, since $\mc{E}_t\circ\mc{E}_s = \mc{E}_{t+s}$ and $\mc{E}_0=\mathrm{id}$. Its associated generator is $Q_{ij} = \kappa (q_i - \delta_{ij})$, inducing the dynamics
    \begin{equation}
    \label{eq:classrepl}
        \frac{d{\bf p}}{dt} = \kappa({\bf q}-{\bf p}).
    \end{equation}
\end{example}

\subsection{The finite-time reachability problem}

We move now to the main question of the work. The reachability problem is central to Markov dynamics and control theory \cite{Sontag1998,Elamvazhuthi2021}. In this paper, we are interested in the set of states that can be reached from a given initial one via (possibly time-inhomogeneous) Markov dynamics \textit{during some finite time} $t$. Note that this question is meaningless if arbitrary evolutions are allowed (as any state would be reachable arbitrarily fast) -- we must impose constraints on how large the generators can be.

One can do this bounding the maximum ``size'' of the generator which, from a physical perspective, means that the resources allotted for the dynamics are bounded and we wish to make an optimal use of them. A sensible way to implement this size constraint is to bound some (physically inspired) functional $\| \cdot \| : \mk{gen} \rightarrow \R^+$, of which it is reasonable to demand
\begin{itemize}
    \item Positivity: $\forall Q \in \mk{gen}, \ \|Q\|=0 \then Q=0$. 

    \item Homogeneity: $\forall Q \in \mk{gen}, \ \|\alpha Q\|=\alpha\|Q\| \ \forall \alpha > 0$ 

    \item Subadditivity: $\forall Q, Q' \in \mk{gen}, \ \|Q+Q'\| \leq \|Q\| + \|Q'\|$
\end{itemize}
That is, we demand that, on the cone $\mk{gen}$, $\|\cdot\|$ obeys the axioms of a norm\footnote{Note that on the full vector space $\mathrm{span}(\mk{gen})\simeq \R^{N(N-1)}$, $\|\cdot\|$ is not required to be a norm, just on the cone $\mk{gen}$. In fact, the family of norms considered in this work are \textit{seminorms} on this larger space, as there exist non-zero matrices $Q\notin \mk{gen}$ such that $\|Q\| = 0$. For details on norms on cones, see \cite{WardvanGaans1999}.}. For simplicity and physical insight, in this work we choose to work with a particular family of norms measuring generators according to the sizes of their collective -- i.e. diagonal -- decay rates.

\begin{definition}[Isotropic norm]
\label{def:iso}
    An isotropic norm is a norm $\|\cdot \|$ for the cone of Kolmogorov matrices that just depends on the collective decay rates $\{Q_{ii}\}_{i=1}^N$, is monotone under $\leq_C$ and invariant under their permutation. All such norms are expressible as $\|Q\| = f(Q_{11},...,Q_{NN})$ for some monotonic, permutation-invariant, convex and positive homogeneous function $f$.
\end{definition}
To any isotropic norm $\|\cdot\|$, one can assign a corresponding ``time-scale'' -- giving a characteristic time of the dissipation processes allowed by the constraint $\|Q\|\leq 1$ -- defined as the number $\tau>0$ such that its corresponding function $f$ (see Definition~\ref{def:iso}) satisfies
\begin{equation}\label{eq: time-scale}
    f(x,0,...,0) = \tau |x|.
\end{equation}
The existence of such positive $\tau$ follows from positive-homogeneity of $f$, and by permutation-invariance it is unimportant which entry is left non-zero in \eqref{eq: time-scale}. An illustrative family of isotropic norms are $\ell_\alpha$-norms on the diagonal of the generators, $\|Q\|^\tx{diag}_\alpha:=\tau\left(\sum_i|Q_{ii}|^\alpha\right)^{1/\alpha}$. Note that, for any $\alpha\leq \alpha'$, the following inequality holds $\|Q\|^\tx{diag}_{\alpha}\geq\|Q\|_{\alpha'}^\tx{diag}$. The extremal cases,
\begin{align}
    \|Q\|^\tx{diag}_{1} := \tau\sum_i |Q_{ii}| \ \tx{ and } \ \| Q\|^\tx{diag}_\infty := \tau\max_{i}|Q_{ii}|.
\label{eq:norms}
\end{align}
are of special relevance for the main results of the paper. \\

Inspired by Ref. \cite{COK23}
we illustrate an analogy between normalized Kolmogorov dynamics and the redistribution of water among different water tanks. The state of the system, $\mathbf{p}\in\Delta_{N-1}$, can be interpreted as representing the water levels in $N$ water tanks. Under this analogy, the Kolmogorov evolution describes the redistribution of water among the tanks. Specifically, $Q_{ii}$ is the proportionality constant governing the rate at which water is drained from tank $i$ relative to its water level $p_i$, while $Q_{ij}$, for $i\neq j$, is the rate constant governing the transfer of water from tank $j$ to tank $i$. Thus, the norm constraints~\eqref{eq:norms} impose a limit on the total draining power of the system, while the transfer of water between tanks is cost-free. Moreover, the conservation of total probability corresponds to the conservation of the total amount of water in the system, with no water leaking or spilling. The Kolmogorov dynamics accurately reproduce the evolution of this hydraulic system, as formalized in Appendix~\ref{app: toy}. Here, we provide only a qualitative description of this correspondence. In Fig.~\ref{fig:water}, we display such a system for $N = 3$. In panel $(a)$ we show the analogy between states and water level distributions and in panel $(b)$ we illustrate an instant in the transfer of water from the flat distribution to the one with all water in tank 1. For further details, see Appendix \ref{app: toy}.
\begin{figure}[h!]
    \centering
\includegraphics[width=0.9\linewidth]{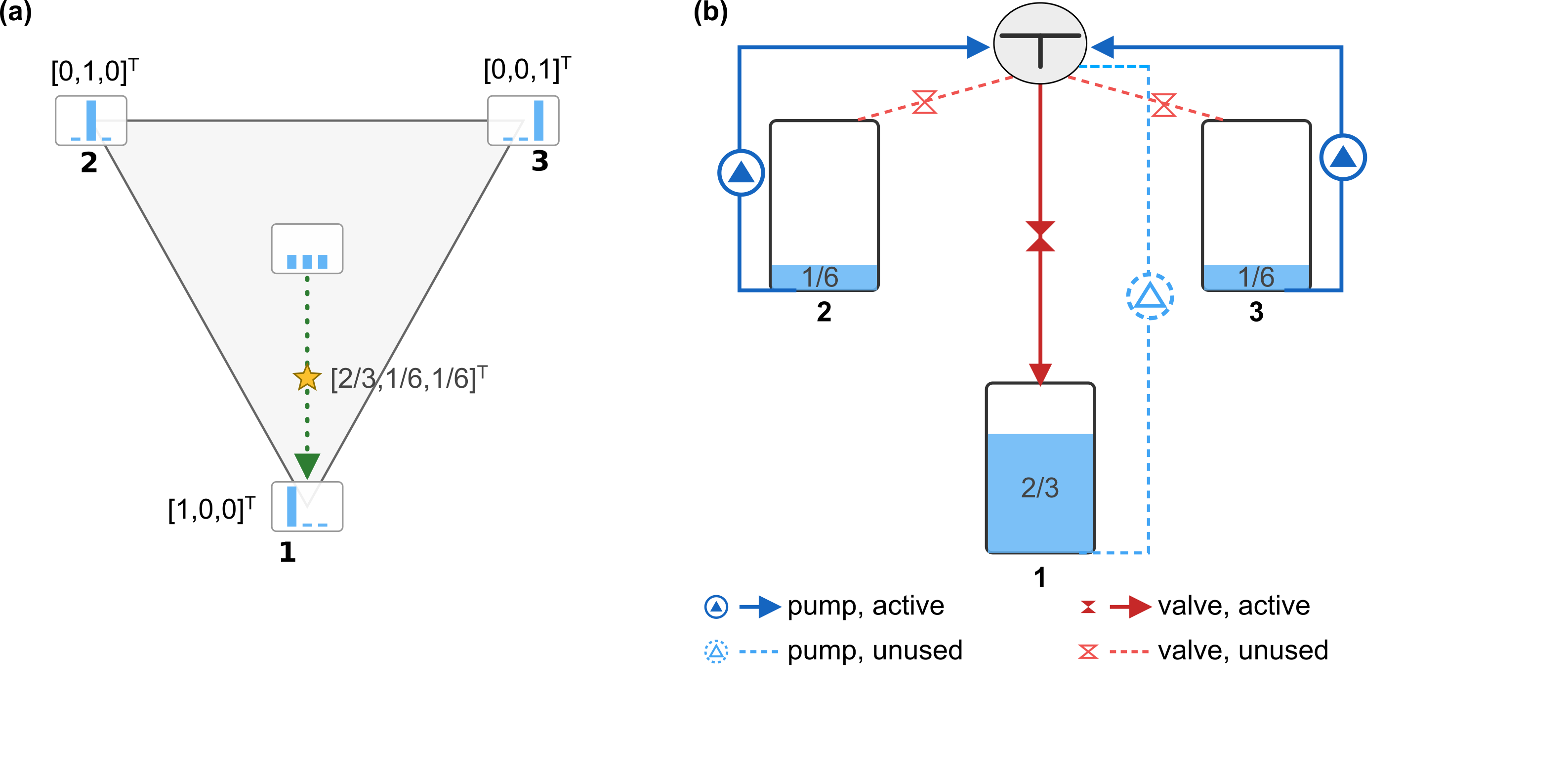}
    \caption{$(a)$ Statics of the water system. Each state in the 2-simplex corresponds to a configuration of water levels in three tanks. $(b)$ Water-system dynamics corresponding to the green dotted arrow at the time marked by the star in $(a)$. Pipe color indicates flow direction in the \textit{medical} convention: blue carries water from a tank to the distributor (T), red returns it from the distributor to a tank; dashed lines mark unused connections. Triangles denote (costly) pumps; double triangles denote valves, usage of which is free.}
    \label{fig:water}
\end{figure}

  
Given an initial state (distribution of water), which other configurations can be reached within a given time under a prescribed constraint on the generator norms? Let us introduce the following definitions, which apply for arbitrary norms, not just isotropic.

\begin{definition}[Outgoing reachable sets]
    The outgoing reachable set $\mc{S}^{+}_t({\bf{p}})$ is defined to be the set of all states that can be reached from an initial ${\bf{p}}$ in time no larger than $t$ via time-inhomogeneous Markov dynamics bounded by a certain norm $\| \cdot \|$
    \begin{equation}
        \mc{S}^{+}_t({\bf{p}})=\{{\bf{q}} \tx{ s.t. }  {\bf{q}} = \mc{T}e^{\int_0^tQ_sds}{\bf{p}} \tx{ with }\|Q_s\|\leq 1 \ \forall s\in[0,t]\}.
    \end{equation}
\end{definition}

Due to the inherent irreversibility of Markov dynamics, the set of states that \textit{can be reached} starting from ${\bf{p}}$ in a time $t$ is different from the set of states that \textit{can reach} ${\bf{p}}$ in the same time. Hence it is natural to also define

\begin{definition}[Incoming reachable sets]
    The incoming reachable set $\mc{S}^{-}_t({\bf{p}})$ is defined as the set of all states that can reach a given target state ${\bf{p}}$ in time no larger than $t$ via time-inhomogeneous Markov dynamics bounded by a given norm $\| \cdot \|$
    \begin{equation}
        \mc{S}^{-}_t({\bf{p}})=\{{\bf{q}} \tx{ s.t. }  {\bf{p}} = \mc{T}e^{\int_0^tQ_sds}{\bf{q}} \tx{ with }\|Q_s\|\leq 1 \ \forall s\in[0,t]\}.
    \end{equation}
\end{definition}

In Fig.~\ref{fig:qutrit_markov}, we display a sketch of the outgoing and incoming reachable sets under the norm $||Q ||_{\infty}^{~\rm{diag}}$ for the qutrit, which we formalize in detail in Section~\ref{sec:results}.   

\begin{figure}[h]
    \centering
    \includegraphics[width=0.97\linewidth]{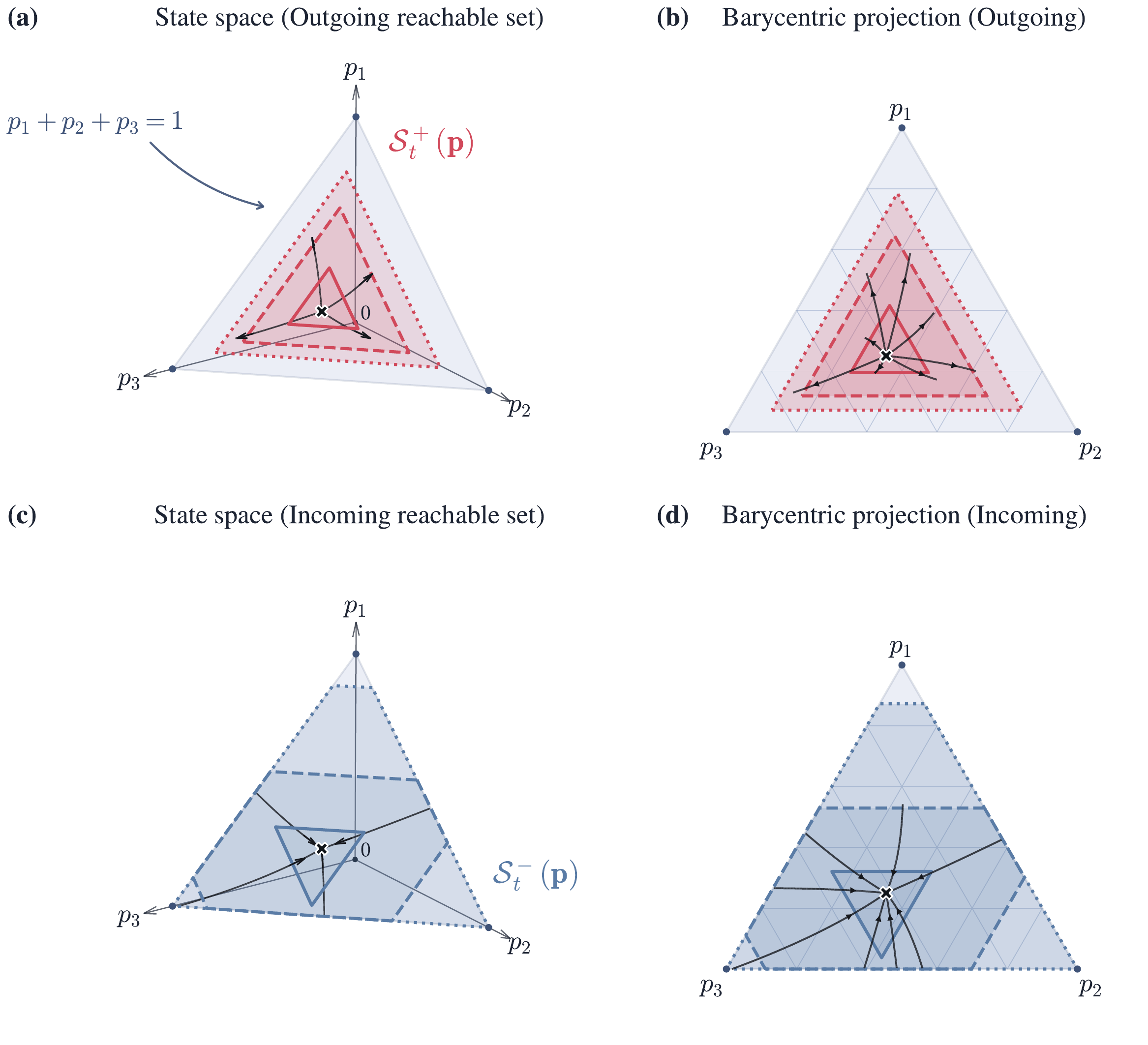}
   \caption{Outgoing and incoming reachable sets for simplex of  $N=3$ probability vectors under the constraint
$\|Q\|_{\infty}^{~\mathrm{diag}}=\frac{1}{N}\max_i|Q_{ii}|\leq 1$ (with time-scale $\tau=1$).
Panels (a,b) show the outgoing sets $S_t^{+}(\mathbf{p})$ and panels (c,d) the incoming sets
$S_t^{-}({\mathbf{p}})$ for $t=1/4,3/4, 5/4$ (solid, dashed, and dotted boundaries,
respectively). The cross marks the initial or target state $\mathbf{p}$, and the black curves
show sample Markovian trajectories.}
    \label{fig:qutrit_markov}
\end{figure}

It should be apparent that the reachable sets under time-homogeneous Markov dynamics are included as subsets of $\mc{S}^\pm_t({\bf{p}})$, as this is just a particular case of time-inhomogeneous evolution. Thus, in the following we will be just concerned with the reachable sets under the latter. Additionally, one can describe the set of tangent vectors (instantaneous velocities) at a given state ${\bf{p}}$ that can be achieved with generators bounded 
with respect to a given norm $\|\cdot\|$.
\begin{definition}
    We denote as $\mc{J}_{\bf{p}}$ the set of tangent vectors that can be achieved via generators bounded by some norm $\|\cdot\|$,
    \begin{equation}
        \mathcal{J}_{\bf{p}} = \{{\bf{v}}\in T_{\bf{p}}\Delta_{N-1} \mathrm{~s.t.~} {\bf{v}} = Q{\bf{p}},~\|Q\|\leq 1\}.
    \end{equation}
    Its boundary -- the set of largest achievable velocities -- is denoted as $\mathcal{I}_{\bf{p}}$.
\end{definition}

Hence, finding the reachable sets induced by a given constraint $\|Q\|\leq1$ becomes the central question, as these inform us of the allowed motions and states that are reachable from an initial one in  a certain time $t$. It should be clear that the shapes of these  velocity sets and reachable sets will depend on the choice of norm $\|\cdot\|$ -- on how the dynamics are constrained -- and, for most choices of norm $\|\cdot\|$, these shapes will be highly complex. Yet, as will be shown in Section~\ref{sec:results}, it is possible to find universal bounds to these sets which apply for all isotropic norms of same time-scale.

\subsection{The time quasi-distance}

\

In this subsection we present the quasi-distance giving the minimum time to join pairs of states when the condition $\|Q\|\leq 1$ is imposed. In particular, we show it to be compatible with a Finsler metric, which allows to cast the finite-time reachability problem in a geometric language.

\begin{definition}[Minimal-time function]
    For a given norm $\|\cdot\|$ on $\mk{gen}$, we define the minimal-time function $T:\Delta_{N-1}\times\Delta_{N-1} \ra [0,\infty]$ of a pair of states $({\bf{p}},{\bf{q}})$ as the smallest time for the first to evolve into the second via time-inhomogeneous Markov dynamics bounded as $\|Q_t\|\leq 1$,
    \begin{equation}\label{eq: Tquasimetrics}
        T({\bf{p}},{\bf{q}}) := \inf_{Q_s} t \tx{ s.t. } {\bf{q}} = \mc{T}e^{\int_0^tQ_sds}{\bf{p}} \tx{ with }\|Q_s\|\leq 1 \ \forall s \in [0,t]
    \end{equation}
\end{definition}

This minimal-time function has some distance-like features. More precisely,

\begin{proposition}\label{th: T quasidistance}
    The function $T(\cdot,\cdot)$ induced by a given norm on generators $\|\cdot\|$ satisfies
    \begin{itemize}
        \item $T({\bf{p}},{\bf{q}}) \geq 0 \ \forall {\bf{p}},{\bf{q}}$.
        
        \item $T({\bf{p}},{\bf{q}}) = 0 \iff {\bf{p}} = {\bf{q}}$.

        \item $T({\bf{p}},{\bf{q}}) \leq T({\bf{p}},{\bf{r}}) + T({\bf{r}},{\bf{q}}) \ \forall {\bf{p}},{\bf{q}},{\bf{r}}$.
    \end{itemize}
    Since, in general, $T({\bf{p}},{\bf{q}}) \neq T({\bf{q}},{\bf{p}})$, this makes $T(\cdot,\cdot)$ a quasi-distance \cite{eckstein2025maxtypequasimetricsprobabilitysimplices}.
\end{proposition}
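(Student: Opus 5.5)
Nonnegativity is immediate, since $T$ is an infimum of admissible times $t\geq 0$ (with $\inf\emptyset=+\infty$). If $\mathbf{p}=\mathbf{q}$, the zero generator on $[0,0]$ (or on any interval) is admissible because $\|0\|=0\leq 1$, so $T(\mathbf{p},\mathbf{p})=0$.

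For the triangle inequality we concatenate protocols. If either $T(\mathbf{p},\mathbf{r})$ or $T(\mathbf{r},\mathbf{q})$ is infinite there is nothing to prove. Otherwise, fix $\varepsilon>0$ and choose admissible generators $Q^{(1)}_s$ on $[0,t_1]$ steering $\mathbf{p}\mapsto\mathbf{r}$ with $t_1\leq T(\mathbf{p},\mathbf{r})+\varepsilon$, and $Q^{(2)}_s$ on $[0,t_2]$ steering $\mathbf{r}\mapsto\mathbf{q}$ with $t_2\leq T(\mathbf{r},\mathbf{q})+\varepsilon$. Define $Q_s=Q^{(1)}_s$ for $s\in[0,t_1]$ and $Q_s=Q^{(2)}_{s-t_1}$ for $s\in(t_1,t_1+t_2]$. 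The composition property of time-ordered exponentials, $\mathcal{T}e^{\int_0^{t_1+t_2}Q_s ds}=\mathcal{T}e^{\int_{t_1}^{t_1+t_2}Q_s ds}\,\mathcal{T}e^{\int_0^{t_1}Q_s ds}$, shows that this protocol sends $\mathbf{p}$ to $\mathbf{q}$ in time $t_1+t_2$ with $\|Q_s\|\leq 1$ throughout. Letting $\varepsilon\to 0$ gives the inequality.

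One technicality is that the definition of time-inhomogeneous dynamics asks for continuity in $t$. The concatenated generator may jump at $t_1$. There are two ways to handle this. One is to read the definition as allowing piecewise-continuous controls, which is standard for minimal-time problems, and note that $T$ is unchanged. The other is to smooth the junction: take convex combinations $(1-\lambda(s))Q^{(1)}_{t_1}+\lambda(s)Q^{(2)}_0$ on a short interval. This stays in the unit ball because $\mathfrak{gen}$ is a convex cone and $\|\cdot\|$ is subadditive and homogeneous. It does, however, perturb the endpoint, so one must either reparametrize or argue by continuity. I would adopt the piecewise-continuous reading and remark on it.

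The main obstacle is the direction $T(\mathbf{p},\mathbf{q})=0\Rightarrow\mathbf{p}=\mathbf{q}$, which needs a speed bound. The first step is to show that the set $\{Q\in\mathfrak{gen}:\|Q\|\leq1\}$ is bounded. By Proposition~\ref{th: gen}, $\mathfrak{gen}$ is a simplicial cone generated by the $E_{(ij)}$, so every $Q$ can be written uniquely as $Q=\sum c_{ij}E_{(ij)}$ with $c_{ij}\geq 0$. On the compact base $\{\sum c_{ij}=1\}$ the norm is continuous, because it is convex and finite on a polytope (restrict to the finite-dimensional span of the base). It is also strictly positive there by positivity, so it has a minimum $m>0$. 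Homogeneity then gives $\sum c_{ij}\leq 1/m$ whenever $\|Q\|\leq1$. Since $\|E_{(ij)}\mathbf{p}\|_1\leq 2p_j\leq 2$, we obtain the speed bound $\|\dot{\mathbf{p}}\|_1\leq 2/m$. Integrating, $\|\mathbf{q}-\mathbf{p}\|_1\leq 2t/m$ for every admissible protocol of duration $t$. Hence $T(\mathbf{p},\mathbf{q})\geq m\|\mathbf{q}-\mathbf{p}\|_1/2$, and $T=0$ forces $\mathbf{p}=\mathbf{q}$.

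Asymmetry is not part of the claim beyond the remark in the statement. I would illustrate it with $N=2$ and $\|\cdot\|^{\mathrm{diag}}_\infty$. Moving probability out of a nearly empty level requires a decay rate that blows up, since $v_j=Q_{jj}p_j$. So reaching a boundary state such as $[1,0]$ from $[1/2,1/2]$ takes time growing logarithmically and in fact infinite. The reverse trip is finite, because the rate needed is applied to $p_1\approx 1$.
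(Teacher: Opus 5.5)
Your nonnegativity and triangle-inequality arguments are fine. The triangle inequality is the paper's concatenation argument, made rigorous with the $\varepsilon$-slack and the infinite case. The junction-continuity issue is settled most cleanly by a time change: run each leg as $\sigma'(s)Q_{\sigma(s)}$ with $0\le\sigma'\le1$ tapering to $0$ at the junction. This traces the same path, stays in the unit ball by homogeneity, and costs an arbitrarily small extra time.

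The problem is the separation step, which the paper itself only says ``follows easily from the definition''. Your claim that $\|\cdot\|$ is continuous on the compact base because it is ``convex and finite on a polytope'' is false. Such a function can jump up at boundary points: on $[0,1]$, $f(0)=1$ and $f(x)=x$ for $x>0$ is convex and positive, yet its infimum $0$ is not attained. Lower semicontinuity is exactly what you need for the minimum $m>0$ to exist.

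The lemma cannot be rescued from the paper's three cone axioms alone, because the conclusion itself fails. For $N=2$ put $\nu(aE_{(12)}+bE_{(21)}):=a$ if $a>0$ and $\nu(aE_{(12)}+bE_{(21)}):=b$ if $a=0$. It is positive, positively homogeneous and subadditive on $\mathfrak{gen}$; for subadditivity note that $\nu\geq a$ always. Yet $\nu(E_{(12)}+ME_{(21)})=1$ for every $M$. This generator gives $\dot p_1=1-(M+1)p_1$, so $[1/2,1/2]^T$ reaches $[1/4,3/4]^T$ in time $\frac{1}{M+1}\ln\frac{1/2-1/(M+1)}{1/4-1/(M+1)}\to0$. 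Hence $T=0$ between distinct states.

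You therefore need an extra hypothesis, and there are two natural ones.
\begin{itemize}
\item Continuity of $\|\cdot\|$ on $\mathfrak{gen}$. This is automatic when $\|\cdot\|$ is the restriction of a norm or seminorm on $\mathrm{span}(\mathfrak{gen})$, as for the paper's diagonal $\ell_\alpha$ norms, since a finite convex function on $\mathbb{R}^n$ is continuous. Under it, your compactness and speed-bound argument goes through verbatim.
\item Monotonicity, for the isotropic norms the paper actually uses. Since $|Q_{ii}|E_{(ji)}\leq_C Q$, you get $\|Q\|\geq\tau\max_i|Q_{ii}|$. Hence $\|Q\mathbf{p}\|_1\leq 2\sum_j|Q_{jj}|p_j\leq 2/\tau$ on the unit ball, and $T(\mathbf{p},\mathbf{q})\geq\tau\|\mathbf{q}-\mathbf{p}\|_1/2$.
\end{itemize}
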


\noindent For the proof, see Appendix~\ref{app:prelim}.

\

Note that providing an explicit form for the $T(\cdot,\cdot)$ induced by some norm $\|\cdot\|$ solves the finite-time reachability problem, as $\mathcal{S}^{+}_t(\mathbf{p}) = \{\mathbf{q} \mathrm{~s.t.}~T(\mathbf{p},\mathbf{q})\leq t \}$ and $\mathcal{S}^{-}_t(\mathbf{p}) = \{\mathbf{q} \mathrm{~s.t.}~T(\mathbf{q},\mathbf{p})\leq t \}$. However,
the time $T$ in Eq.~\eqref{eq: Tquasimetrics} is challenging to evaluate explicitly, in part because the minimization over generators is a nonconvex problem and, thus, numerical solvers would be prone to getting trapped in local minima. In this work we exploit the geometric properties of the time quasi-distance $T(\cdot,\cdot)$ to provide rigorous analytical bounds to the sets of reachable sets. Concretely, we show that  the quasi-distance $T(\cdot,\cdot)$ can be seen to be induced by a Finsler metric, whose form is determined by the particular norm $\|\cdot\|$ employed to constrain the system.

\begin{theorem}[Finsler time-metric]\label{th: time norm}
    The time quasi-distance $T(\cdot,\cdot)$, as defined in \eqref{eq: Tquasimetrics}, is the geodesic quasi-distance induced by the $C^0$-Finsler metric $F({\bf{p}},{\bf{v}}) = \min_{\mk{gen}({\bf{p}},{\bf{v}})}\|Q\|$:
    \begin{equation}
    \label{eq:min_rollo}
        T({\bf{p}},{\bf{q}}) = \min_{\substack{\gamma \ \tx{s.t.}\\ \gamma(0)={\bf{p}} \\ \gamma(1) = {\bf{q}}}} L[\gamma],
    \end{equation}
    where the minimization is performed over all directed curves joining ${\bf{p}}$ at $s=0$ to ${\bf{q}}$ at $s=1$, and $L[\gamma] = \int_0^1F(\gamma(s),\dot\gamma(s))\, ds$ is its corresponding length functional.

\end{theorem}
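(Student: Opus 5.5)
The plan is to prove two inequalities, $T\geq \inf_\gamma L[\gamma]$ and $T\leq \inf_\gamma L[\gamma]$. Then we check that the infimum is attained and that $F$ is a $C^0$-Finsler (Minkowski-type) function. First we establish the properties of $F({\bf p},{\bf v})=\min_{\mk{gen}({\bf p},{\bf v})}\|Q\|$.

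\begin{itemize}
\item \emph{The minimum exists.} $\mk{gen}({\bf p},{\bf v})$ is closed and convex. The sublevel sets $\{\|Q\|\leq c\}\cap\mk{gen}$ are bounded, because a norm on the cone bounds the diagonal and hence every off-diagonal entry via $0\leq Q_{ij}\leq |Q_{jj}|$.
\item \emph{Positive homogeneity and subadditivity in ${\bf v}$.} These follow from ${\mk{gen}}({\bf p},\alpha{\bf v})=\alpha\,\mk{gen}({\bf p},{\bf v})$, from $\mk{gen}({\bf p},{\bf v})+\mk{gen}({\bf p},{\bf w})\subseteq\mk{gen}({\bf p},{\bf v}+{\bf w})$, and from subadditivity of $\|\cdot\|$.
\item \emph{Positivity.} $F=0$ forces $Q=0$, hence ${\bf v}=0$.
\item \emph{Continuity in $({\bf p},{\bf v})$.} For interior ${\bf p}$ we use the linear isomorphism of Proposition \ref{th: gen(p,v)1}, which can be taken as $Q\mapsto Q\,\mathrm{diag}({\bf p}/{\bf q})$ restricted suitably. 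It depends continuously on the base point and gives upper and lower semicontinuity. At boundary points, continuity is only claimed in the $C^0$ sense, which we verify via the explicit minimal generators of Proposition \ref{th: gen(p,v)2}; their entries scale like $|v_j|/p_j$.
\end{itemize}

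Next, $T\geq\inf L$. Take any admissible $Q_s$ on $[0,t]$ with $\|Q_s\|\leq1$ and trajectory ${\bf p}(s)$. This trajectory is a $C^1$ directed curve with $\dot{\bf p}(s)=Q_s{\bf p}(s)$, so $Q_s\in\mk{gen}({\bf p}(s),\dot{\bf p}(s))$ and $F({\bf p}(s),\dot{\bf p}(s))\leq\|Q_s\|\leq 1$. Reparametrising to $[0,1]$, which leaves $L$ invariant by homogeneity, gives $L\leq t$. Taking the infimum over protocols yields $\inf L\leq T$.

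Conversely, $T\leq\inf L$. Given a directed curve $\gamma$ of length $L$, we reparametrise it by Finsler arc length, so that $F(\gamma,\dot\gamma)=1$ almost everywhere on $[0,L]$. At each $s$ we choose a minimiser $Q_s\in\mk{gen}(\gamma(s),\dot\gamma(s))$ with $\|Q_s\|=1$. Then $\gamma$ solves $\dot{\bf p}=Q_s{\bf p}$, so ${\bf q}$ is reached in time $L$, and uniqueness of solutions of the linear ODE identifies $\gamma$ with $\mc{T}e^{\int Q_s}{\bf p}$. The definition of time-inhomogeneous dynamics requires $Q_s$ continuous, and this is the main obstacle: arc-length reparametrisation and the pointwise choice of minimiser are at best measurable. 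We plan the following fixes.
\begin{itemize}
\item Restrict first to piecewise-$C^1$ curves with $F(\gamma,\dot\gamma)$ bounded away from zero.
\item Use a continuous selection. The map $({\bf p},{\bf v})\mapsto\arg\min$ is a convex-valued, upper semicontinuous multifunction; we replace the minimiser by a continuous $\epsilon$-approximate selection (Michael's theorem), or use the explicit minimal-$\leq_C$ family of Proposition \ref{th: gen(p,v)2} with a fixed continuous choice of $W$.
\item Control the cost. This costs a time $(1+\epsilon)L$, and joining pieces at corners costs arbitrarily little time, since the generator can be smoothly switched off and on near the junctions.
\item Conclude. Letting $\epsilon\to0$ gives $T\leq L$, and a density argument extends this to all directed curves.
\end{itemize}

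Finally, we replace the infimum by a minimum. Reachable sets are compact: the admissible generators form a compact convex set, Filippov's theorem applies to the convexified differential inclusion $\dot{\bf p}\in\mc{J}_{\bf p}$ (and $\mc{J}_{\bf p}$ is already convex), and the simplex is compact. Together these give existence of a minimising curve, or equivalently a Lipschitz trajectory realising $T$. Here we allow $Q_s$ to be measurable in the minimiser, interpreting $T$ as the closure of the continuous-protocol infimum.
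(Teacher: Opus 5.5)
Your proposal is correct in outline, and its core mechanism is the paper's: reparametrize to unit Finsler speed and pick a norm-minimal generator in $\mk{gen}(\gamma(s),\dot\gamma(s))$ along the curve. The supporting steps, however, follow a different and more careful route.

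\emph{Norm axioms.} The paper asserts $F(\mathbf p,\mathbf v)=\lim_{h\to0^+}T(\mathbf p,\mathbf p+h\mathbf v)/h$ and transfers positivity, homogeneity and the triangle inequality from $T$. The last of these silently needs that limit uniformly in the base point. You instead read the axioms off $\mk{gen}(\mathbf p,\alpha\mathbf v)=\alpha\,\mk{gen}(\mathbf p,\mathbf v)$ and $\mk{gen}(\mathbf p,\mathbf v)+\mk{gen}(\mathbf p,\mathbf w)\subseteq\mk{gen}(\mathbf p,\mathbf v+\mathbf w)$, independently of $T$.

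\emph{Continuity.} The paper combines a Moore--Penrose correction $Q_n=Q-\Phi^+_{\mathbf p_n}(Q\mathbf p_n-\mathbf v)$ with a compactness (liminf) argument. That correction can make the zero off-diagonal entries of a minimal generator negative, so $Q_n$ may leave the cone. Your rescaling $Q\mapsto Q\,\mathrm{diag}(\mathbf p/\mathbf q)$ stays in $\mk{gen}$. For isotropic norms it even gives $\min_i(p_i/q_i)\,F(\mathbf p,\mathbf v)\le F(\mathbf q,\mathbf v)\le\max_i(p_i/q_i)\,F(\mathbf p,\mathbf v)$.

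\emph{The identity.} The paper declares the two optimization problems equivalent. This presupposes that optimizers exist and that every curve is generated by an admissible, continuous protocol. Your split into $\inf L\le T$ (from $F(\mathbf p(s),\dot{\mathbf p}(s))\le\|Q_s\|$) and $T\le\inf L$ isolates exactly the continuity issue the paper skips. Your Filippov step also proves attainment of the minimum, which the paper never does.

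\emph{What the paper's route buys.} It is shorter and exhibits $F$ as the metric derivative of $T$, which pins $F$ down uniquely. Its liminf argument also works for any norm. Your boundary argument and explicit selection go through Proposition~\ref{th: gen(p,v)2}, so they need $\leq_C$-monotone (isotropic) norms.

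Some points to tighten.

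(i) $F$ is not jointly continuous at the boundary. For $N=2$, take $\mathbf p_n=(1/n,1-1/n)\to(0,1)$ and $\mathbf v_n=(-1/n,1/n)\to\mathbf 0$. Then $F(\mathbf p_n,\mathbf v_n)=\tau$ for every isotropic norm, while $F((0,1),\mathbf 0)=0$. What survives is continuity in $\mathbf p$ at fixed $\mathbf v$ (the paper's claim) and joint lower semicontinuity. That suffices: Filippov's theorem only needs $\mc J_{\mathbf p}=\{Q\mathbf p:\|Q\|\le1\}$ to be convex, compact and continuous in $\mathbf p$, which holds because the control set does not depend on $\mathbf p$.

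(ii) Michael's theorem needs lower, not upper, semicontinuity. Two fixes work:
\begin{itemize}
\item use the explicit continuous minimal selection $Q_{ij}=v_i^+v_j^-/(p_j\sum_kv_k^+)$ for $i\neq j$ and $Q_{jj}=-v_j^-/p_j$, i.e.\ the choice $W_{ij}=v_i/\sum_{k\in I_+}v_k$, which is continuous for $\mathbf v\neq\mathbf 0$ across sign changes;
\item or, in the interior, apply Michael's theorem to the lower semicontinuous map $(\mathbf p,\mathbf v)\mapsto\{Q\in\mk{gen}(\mathbf p,\mathbf v):\|Q\|<F(\mathbf p,\mathbf v)+\epsilon\}$.
\end{itemize}

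(iii) Bounded sublevel sets should come from isotropy, via $\|Q\|\ge\bigl\|\,|Q_{ii}|\,E_{(ji)}\bigr\|=\tau|Q_{ii}|$, or from continuity of the norm. The three cone-norm axioms alone do not imply it; the paper tacitly assumes the same when it invokes Heine--Borel.

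(iv) No reinterpretation of $T$ is needed. Once $T\le L[\gamma]$ holds for Lipschitz curves, the Filippov-optimal trajectory $\gamma^*$, with optimal time $T_{\mathrm{meas}}\le T$ over measurable protocols, gives $T\le L[\gamma^*]\le T_{\mathrm{meas}}\le T$. Moreover, $F\ge\tau\max_{v_i<0}|v_i|/p_i$, so the support of $\gamma(s)$ can only grow along a finite-length curve. The density step can therefore be done on finitely many intervals of constant support.
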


\noindent For the proof, see Appendix~\ref{app:prelim}.

\

The characterization of $T$ as the quasi-distance induced by the Finsler metric
\begin{equation}\label{eq: TFinslernorm}
    F({\bf{p}},{\bf{v}})= \min_{\mk{gen}({\bf{p}},{\bf{v}})} \|Q\|,
\end{equation}
offers a more direct relation between the constraint $\|Q\|\leq 1$ and the corresponding ``time-geometry'' of optimal dynamics. Unlike the general optimization \eqref{eq: Tquasimetrics}, for isotropic norms the Finsler metric \eqref{eq: TFinslernorm}
is easier to characterize. Furthermore, its geodesics and metric balls -- that is, its optimal trajectories and reachable sets -- can be explicitly obtained for $\ell_\alpha$-norms on the diagonal of the generators, with the two extremal cases $\alpha=1$ and $\alpha=\infty$ respectively providing inner and outer bounds to the optimal dynamics under any other isotropic norms with same time-scale. These will be examined below.


\section{Main results} \label{sec:results}


\




We start the characterization locally, addressing the Finsler metric and set of achievable velocities induced by dynamics that are bounded by an isotropic norm.





\begin{proposition}[Achievable velocities for arbitrary isotropic norms]
    The Finsler metric due to dynamics bounded in the isotropic norm $\|Q\|:=f(Q_{11},...,Q_{NN})$ is
    \begin{equation}
        F({\bf{p}},{\bf{v}}) = f(|v_{i_1}|/p_{i_1},...,|v_{i_{n^-}}|/p_{i_{n^-}}, 0,...,0) \tx{ for }i_1,...,i_{n^-}\in I_-, 
    \end{equation}
    where $I_-$ indexes the negative components of ${\bf{v}}$. Its unit ball at a fixed ${\bf{p}}$ gives the corresponding set of achievable velocities, $\mc{J}_{\bf{p}}$.
\end{proposition}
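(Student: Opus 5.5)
The plan is to compute $F({\bf{p}},{\bf{v}})=\min_{Q\in\mk{gen}({\bf{p}},{\bf{v}})}\|Q\|$ directly: find a lower bound valid for every $Q\in\mk{gen}({\bf{p}},{\bf{v}})$, then show the minimal generators of Proposition~\ref{th: gen(p,v)2} attain it. Throughout take ${\bf{p}}$ interior, so the ratios $|v_i|/p_i$ make sense. The case ${\bf{v}}=0$ is trivial: the zero generator gives $F=0=f(0,\dots,0)$.

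\textbf{Lower bound.} Fix $Q\in\mk{gen}({\bf{p}},{\bf{v}})$ and $i\in I_-$. From $({Q\bf p})_i=v_i$,
\begin{equation*}
v_i = Q_{ii}p_i+\sum_{j\neq i}Q_{ij}p_j \geq Q_{ii}p_i,
\end{equation*}
because the off-diagonal entries are nonnegative. Since $v_i<0$, this gives $|Q_{ii}|\geq |v_i|/p_i$. For $i\in I_+$ we only know $|Q_{ii}|\geq 0$. So, entrywise on the diagonal, $|Q_{ii}|\geq d_i$, where $d_i:=|v_i|/p_i$ for $i\in I_-$ and $d_i:=0$ otherwise. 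By monotonicity of $f$ (the isotropic norm is monotone under $\leq_C$), $\|Q\|\geq f(d_1,\dots,d_N)$. Permutation invariance then lets us reorder the arguments as $f(|v_{i_1}|/p_{i_1},\dots,|v_{i_{n^-}}|/p_{i_{n^-}},0,\dots,0)$.

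\textbf{Achievability.} Use an element of the polytope in Proposition~\ref{th: gen(p,v)2}, whose diagonal is exactly $Q_{ii}=-d_i$. I expect this to be the main obstacle: one must check that this polytope is nonempty, i.e.\ that there is a nonnegative $W$ with $\sum_{i\in I_+}W_{ij}=1$ and $\sum_{j\in I_-}W_{ij}|v_j|=v_i$.

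To be safe, I would exhibit one explicitly rather than rely on the earlier proposition. Let $m:=\sum_{j\in I_-}|v_j|=\sum_{i\in I_+}v_i>0$, where the equality holds because ${\bf{v}}$ sums to zero. Set $W_{ij}=v_i/m$. Then column sums over $I_+$ equal $1$, and $\sum_{j\in I_-}W_{ij}|v_j|=v_i$.

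The resulting $Q$ is $Q_{ij}=v_i|v_j|/(m p_j)$ for $i\in I_+$, $j\in I_-$, with $Q_{jj}=-|v_j|/p_j$ for $j\in I_-$ and all other entries zero. Checks:
\begin{itemize}
\item off-diagonal entries are nonnegative;
\item each column $j\in I_-$ sums to $-|v_j|/p_j+\sum_{i\in I_+}v_i|v_j|/(mp_j)=0$, and the other columns are zero;
\item $(Q{\bf p})_i=v_i$ for $i\in I_+$ by construction, and for $j\in I_-$ it equals $Q_{jj}p_j=v_j$.
\end{itemize}
The last point uses that row $j\in I_-$ has no off-diagonal entries. (Zero components of ${\bf{v}}$ sit in $I_+$ but receive $v_i=0$, so they contribute nothing.)

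This $Q$ attains the lower bound, so the minimum equals the stated expression. Finally, $\mc{J}_{\bf{p}}=\{{\bf v}:\exists Q,\ Q{\bf p}={\bf v},\ \|Q\|\leq1\}=\{{\bf v}:F({\bf p},{\bf v})\leq 1\}$ follows immediately from the definition of $F$ as a minimum. The minimum is attained by the construction above, which is why the set is closed and equals the unit ball of $F$.
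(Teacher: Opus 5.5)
Your proof is correct and is essentially the paper's argument. The paper invokes the $\leq_C$-minimal generators of Proposition~\ref{th: gen(p,v)2} (diagonal $-|v_i|/p_i$ on $I_-$, zero on $I_+$) together with monotonicity of the isotropic norm; you make this self-contained by deriving $|Q_{ii}|\geq |v_i|/p_i$ directly from $(Q{\bf p})_i=v_i$, and by exhibiting the explicit feasible choice $W_{ij}=v_i/m$, which shows the polytope is nonempty, a fact the paper takes for granted. Your restriction to interior ${\bf p}$ is also unnecessary: $v_i<0$ forces $p_i>0$ for any tangent vector, so the argument holds verbatim at boundary states.
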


\begin{proof}
    Given an isotropic norm $\|Q\|=f(Q_{11},...,Q_{NN})$, the result directly follows from \eqref{eq: TFinslernorm} when taking into account the considerations on minimal generators discussed in Proposition \ref{th: gen(p,v)2}. Also, since $F({\bf{p}},{\bf{v}})= \min_{\mk{gen}({\bf{p}},{\bf{v}})}\|Q\|$, a velocity with $F({\bf{p}},{\bf{v}})>1$ is only obtainable with non-allowed generators ($\|Q\|>1$), and so $F({\bf{p}},{\bf{v}})\leq 1$ determines the achievable velocities.
\end{proof}

Observe that $F({\bf{p}},{\bf{v}})$ strongly depends on the signs of the entries of ${\bf{v}}$. In particular, when $\textbf{v}$ has a single negative component, namely $v_i$, then $F({\bf{p}},{\bf{v}})=f(|v_i|/p_i,0,...,0)$. But by (\ref{eq: time-scale}) the behavior of this quantity does not depend on the precise details of the underlying norm $\|\cdot\|$, just on its time-scale.
\begin{equation}
    F({\bf{p}},{\bf{v}}) = \tau |v_i|/p_i \tx{ whenever } {\bf{v}} \tx{ has \ a \ single \ negative \ entry } v_i.
\end{equation}
That is, the Finsler time metrics $F$ induced by arbitrary isotropic norms $\|\cdot\|$ with the same time-scale $\tau$ all take the same form on the single-negative-entry chambers of $T_{\bf{p}}\Delta_{N-1}$. This fact will allow us to find bounds to the sets of maximum velocities, $\mathcal{J}_{\mathbf p}$, in these contexts.

\begin{proposition}[Inner and outer bounds for arbitrary isotropic norms]\label{th: TS outer}
    The set of achievable velocities due to dynamics bounded in an arbitrary isotropic norm of time-scale $\tau$ is inner and outer bounded by the sets
    \begin{equation}\label{eq: trace_extremals}
        \mc{J}_{\bf{p}}^\tx{inner} = \tx{conv}\{{\bf{v}}_{(ij)}\}_{i\neq j}, \tx{ where }
        {v_{(ij)}}_k=\begin{cases}
            -p_i/\tau &\tx{ if }k=i\\
            p_i/\tau &\tx{ if }k=j\\
            0 &\tx{otherwise}
        \end{cases}
        \!\!\! \tx{ and } \ 
        \mc{J}_{\bf{p}}^\tx{outer} = \frac{1}{\tau}(\Delta_{N-1}-{\bf{p}}).
    \end{equation}
    These are, respectively, the sets of achievable velocities under the norms $\|Q\|_1^\tx{diag}=\tau\sum_i|Q_{ii}|$ and $\|Q\|^\tx{diag}_\infty = \tau\max_i|Q_{ii}|$.
\end{proposition}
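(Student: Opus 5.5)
The plan is to reduce everything to two elementary inequalities for the function $f$ of Definition~\ref{def:iso}, and then to compute the two extremal velocity sets explicitly. Fix an isotropic norm $\|Q\| = f(Q_{11},\dots,Q_{NN})$ with time-scale $\tau$. For $x = (x_1,\dots,x_N)$ write $x = \sum_k x_k e_k$.

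Subadditivity, homogeneity and \eqref{eq: time-scale} (together with permutation invariance) give
\begin{equation*}
f(x) \le \sum_k f(x_k e_k) = \tau \sum_k |x_k| .
\end{equation*}
Monotonicity under $\leq_C$, applied by dropping all entries except the largest one in modulus, together with \eqref{eq: time-scale}, gives
\begin{equation*}
f(x) \ge \tau \max_k |x_k| .
\end{equation*}
Hence $\|Q\|^{\rm diag}_\infty \le \|Q\| \le \|Q\|^{\rm diag}_1$ on all of $\mk{gen}$. The three constraint sets are therefore nested,
\begin{equation*}
\{Q : \|Q\|_1^{\rm diag} \le 1\} \subseteq \{Q : \|Q\| \le 1\} \subseteq \{Q : \|Q\|_\infty^{\rm diag} \le 1\} .
\end{equation*}
Applying the map $Q \mapsto Q\mathbf p$ preserves inclusion, so the corresponding sets $\mathcal J_{\bf p}$ are nested in the same way. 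This step does not need the Finsler formula at all, so it works also for boundary ${\bf p}$. It remains to identify the two extremal sets with $\mathcal J^{\rm inner}_{\bf p}$ and $\mathcal J^{\rm outer}_{\bf p}$.

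\textbf{The $\ell_\infty$ case.} By the preceding Proposition, $F({\bf p},{\bf v}) = \tau \max_{i\in I_-} |v_i|/p_i$. The condition $F \le 1$ is then equivalent to $v_i \ge -p_i/\tau$ for all $i$, i.e.\ to ${\bf p} + \tau {\bf v} \ge 0$ entrywise. Combined with $\sum_i v_i = 0$, this says exactly ${\bf p} + \tau{\bf v} \in \Delta_{N-1}$, i.e.\ ${\bf v} \in \frac{1}{\tau}(\Delta_{N-1} - {\bf p})$. For a possibly boundary ${\bf p}$ I would argue directly instead of through $F$. Given ${\bf q} \in \Delta_{N-1}$, the replacer generator $Q_{ij} = \frac{1}{\tau}(q_i - \delta_{ij})$ has $\|Q\|_\infty^{\rm diag} \le 1$ and $Q{\bf p} = ({\bf q}-{\bf p})/\tau$. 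Conversely, $\bigl((\mathbb 1 + \tau Q){\bf p}\bigr)_i \ge (1 - \tau|Q_{ii}|)\, p_i \ge 0$, so every achievable ${\bf v}$ lies in the outer set.

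\textbf{The $\ell_1$ case.} Here $F({\bf p},{\bf v}) = \tau \sum_{i\in I_-} |v_i|/p_i$, and I show that $\{F \le 1\} = \mathrm{conv}\{{\bf v}_{(ij)}\}$.

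For the inclusion $\supseteq$: each ${\bf v}_{(ij)} = \frac{p_i}{\tau}(e_j - e_i)$ is produced by the generator $\frac{1}{\tau} E_{(ji)}$, which has $\|\cdot\|_1^{\rm diag} = 1$. The achievable set is convex, because $Q \mapsto Q{\bf p}$ is linear and the unit ball of a norm on the cone is convex.

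For the inclusion $\subseteq$: take ${\bf v}$ with $F \le 1$, and set $S = \sum_{j\in I_+} v_j = \sum_{i\in I_-} |v_i|$. Write
\begin{equation*}
{\bf v} = \sum_{i\in I_-,\, j\in I_+} \lambda_{ij}\, {\bf v}_{(ij)}, \qquad \lambda_{ij} = \tau\, \frac{|v_i|}{p_i}\, \frac{v_j}{S} \ge 0 .
\end{equation*}
A direct check gives both the correct negative entries, $\sum_j \lambda_{ij}\, p_i/\tau = |v_i|$, and the correct positive entries, $\sum_i \lambda_{ij}\, p_i/\tau = v_j$. Moreover $\sum \lambda_{ij} = F \le 1$. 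The remaining weight $1 - F$ is placed on ${\bf 0}$, which lies in the hull because
\begin{equation*}
{\bf 0} = \frac{p_j\, {\bf v}_{(ij)} + p_i\, {\bf v}_{(ji)}}{p_i + p_j} .
\end{equation*}
(If $p_i = 0$ then ${\bf v}_{(ij)} = {\bf 0}$ already.)

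The main obstacle I expect is the degenerate situation of boundary states, where $p_i = 0$ makes the Finsler formula of the preceding Proposition ill-defined and where Proposition~\ref{th: gen(p,v)2} was stated only for interior ${\bf p}$. I will handle it by working directly with the defining set $\mathcal J_{\bf p} = \{Q{\bf p} : \|Q\| \le 1\}$ and the explicit generators above. For the converse in the $\ell_1$ case, I would use $v_k = \sum_{l\ne k} Q_{kl}\, p_l - |Q_{kk}|\, p_k$ to decompose $Q{\bf p}$ along the extremal rays $E_{(ij)}$ of Proposition~\ref{th: gen}. This gives $Q{\bf p} = \sum_{i\ne j} \tau Q_{ji}\, {\bf v}_{(ij)}$ with coefficients summing to $\|Q\|_1^{\rm diag} \le 1$.
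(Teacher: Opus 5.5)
Your proof is correct, but it takes a different route from the paper's. The paper works at the level of the Finsler metric. It notes that all isotropic norms of time-scale $\tau$ induce the same $F(\mathbf p,\mathbf v)=\tau|v_i|/p_i$ on the single-negative-entry chambers, so they share the extremal velocities $\mathbf v_{(ij)}$. The inner set is then the convex hull of these common extremals, using convexity of $\mathcal J_{\mathbf p}$. The outer set is the intersection of the half-spaces $v_i\geq -p_i/\tau$ supporting them. Finally, $F_1$ and $F_\infty$ are checked to have exactly these unit balls.

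You instead prove the sandwich $\tau\max_k|Q_{kk}|\leq\|Q\|\leq\tau\sum_k|Q_{kk}|$ directly on $\mk{gen}$. The upper bound is legitimate because each column of $Q$, padded with zeros, lies in $\mk{gen}$. So the admissible generator sets, and hence the velocity sets, are nested. You then identify the two extremal sets constructively: via the replacer generator and the estimate $(\mathbf p+\tau Q\mathbf p)_i\geq(1-\tau|Q_{ii}|)p_i$, and via the ray decomposition $Q\mathbf p=\sum_{i\neq j}\tau Q_{ji}\mathbf v_{(ij)}$.

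Your route buys rigor and scope:
\begin{itemize}
\item It covers boundary states, where the Finsler formula and Proposition~\ref{th: gen(p,v)2} are not available.
\item It makes explicit two steps the paper leaves implicit in its supporting-half-space and ``linearity of $F$'' remarks: the containment $\mathcal J_{\mathbf p}\subseteq\{v_i\geq -p_i/\tau\}$ for an arbitrary isotropic norm, and the inclusion of the $\|\cdot\|_1^{\mathrm{diag}}$ velocity set in $\mathrm{conv}\{\mathbf v_{(ij)}\}$.
\item The nesting of admissible controls gives Corollary~\ref{th: FT outer} at once, exactly as the paper itself argues at the end of Proposition~\ref{prop: saturation}.
\end{itemize}
The paper's route, in turn, explains why these are the natural universal bounds. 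They are the smallest and largest convex sets compatible with the chamber data shared by all isotropic norms of a given time-scale, and this Finsler viewpoint is what the later results build on.

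One small point: your $\lambda_{ij}$ formula needs $S>0$, i.e.\ $\mathbf v\neq\mathbf 0$. The generator decomposition in your last paragraph makes that step unnecessary anyway.
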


\noindent For the proof, see Appendix \ref{pr: TS outer}.

\begin{figure}[h!]
    \centering
    \includegraphics[width=1\linewidth]{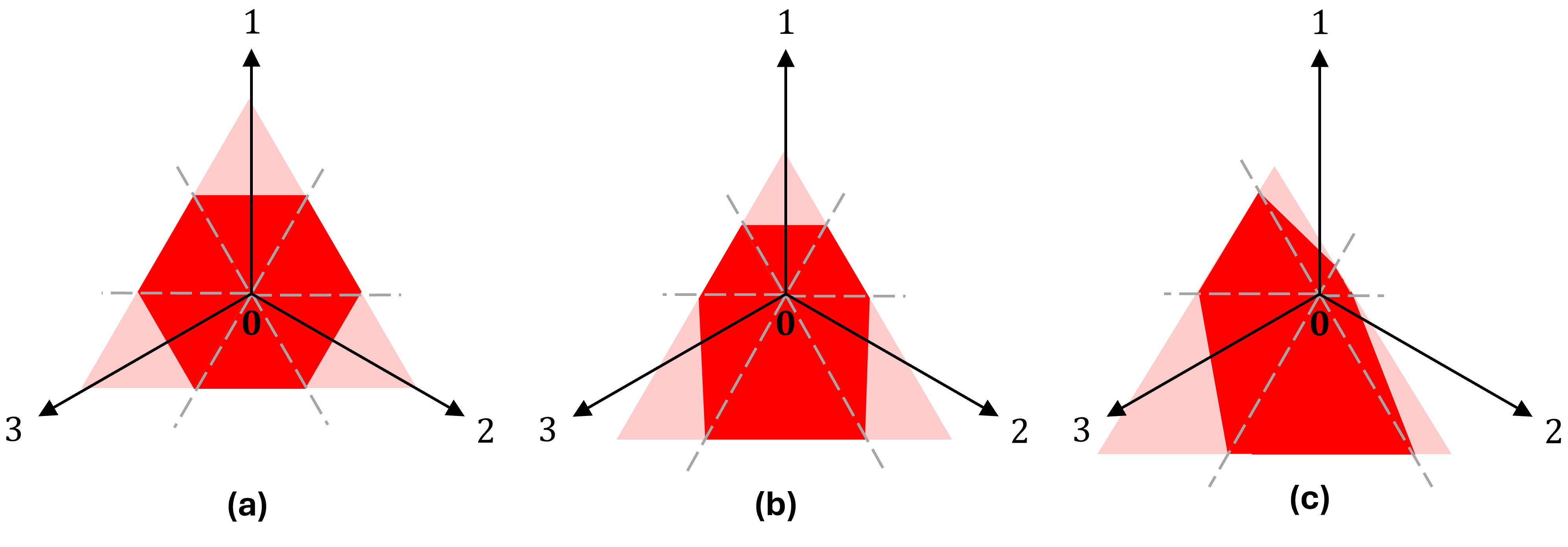}
    \caption{The sets $\mathcal{J}_{\mathbf{p}}^{\text{inner}}$ (bright red) and $\mathcal{J}_{\mathbf{p}}^{\text{outer}}$ (light red) for a three-level system. Panel (a) shows the sets at the maximally mixed initial state $\mathbf{p} = [1/3, 1/3, 1/3]^T$. Panels (b) and (c) display these sets at initial states with increasing purity, specifically $\mathbf{p} = [0.4, 0.3, 0.3]^T$ and $\mathbf{p} = [0.55, 0.35, 0.1]^T$, respectively.}
    \label{fig:TraceNormIndicatrices}
\end{figure}
In Fig.~\ref{fig:TraceNormIndicatrices}, we display the polytope bounds specified in~\eqref{eq: trace_extremals} to the achievable velocities under isotropic norms of a three-level system.

\subsection{Finite-time reachable sets. Inner and outer bounds}

The following Theorem and its Corollary constitute one of the main results of this work.


\begin{theorem}[Time quasi-distances for diagonal $\ell_\alpha$-norms]\label{th: FT inner}
    For any two states $\mathbf{p}, \mathbf{q}\in \Delta_{N-1}$, their minimum reachability time under time-inhomogeneous Markovian dynamics with generators bounded as $\|Q\|^\tx{diag}_\alpha=\tau\left(\sum_i|Q_{ii}|^\alpha\right)^{1/\alpha} \leq 1$ is
    \begin{equation}\label{eq: InnerFinslergen}
         T_\alpha({\bf{p}},{\bf{q}}):= \tau \left( \sum_{i=1}^N \max\left(0,\ln\frac{p_i}{q_i}\right)^\alpha\right)^{1/\alpha}.
    \end{equation}
    A unit-norm time-homogeneous generator joining ${\bf{p}}$ to ${\bf{q}}$ interior in the minimum time $T_\alpha({\bf{p}},{\bf{q}})$ is
    \begin{equation}
        Q_{ij}=
        \begin{cases}
            -\delta_{ij}\ln(p_i/q_i)/T_\alpha({\bf{p}},{\bf{q}}) & \tx{ if } i,j\in I_-\\
            W_{ij}\ln(p_j/q_j)/T_\alpha({\bf{p}},{\bf{q}}) & \tx{ if } i\in I_+,j\in I_-\\
            0 & \tx{ otherwise}
        \end{cases}
    \end{equation}
    where $W_{ij}$ is any entrywise non-negative matrix such that $\sum_{j\in I_-}W_{ij}(p_j-q_j) = q_i-p_i \ \forall i\in I_+$ and $\sum_{i\in I_+}W_{ij} = 1$.
\end{theorem}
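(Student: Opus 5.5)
Two inequalities are needed: every admissible evolution from ${\bf p}$ to ${\bf q}$ takes time at least $T_\alpha({\bf p},{\bf q})$, and the stated time-homogeneous generator attains it.

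\textbf{Lower bound.} I will bound time directly along an arbitrary time-inhomogeneous trajectory ${\bf p}(s)$, $s\in[0,t]$, with $\|Q_s\|^\tx{diag}_\alpha\le 1$. For each $i$ with $p_i>0$, the master equation gives
\[
\dot p_i = Q_{s,ii}\,p_i+\sum_{j\neq i}Q_{s,ij}p_j \ge -|Q_{s,ii}|\,p_i .
\]
By Grönwall, $p_i(s)>0$ throughout, and
\[
\ln\frac{p_i}{q_i} \le \int_0^t |Q_{s,ii}|\,ds=:a_i .
\]
Hence $\max(0,\ln(p_i/q_i))\le a_i$ for every $i$; if $p_i=0$ the entry is $0$ and the inequality is trivial. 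Monotonicity of the $\ell_\alpha$ norm on nonnegative vectors, followed by Minkowski's integral inequality, then gives
\[
\Big(\sum_i \max(0,\ln\tfrac{p_i}{q_i})^\alpha\Big)^{1/\alpha} \le \Big(\sum_i a_i^\alpha\Big)^{1/\alpha} \le \int_0^t \Big(\sum_i|Q_{s,ii}|^\alpha\Big)^{1/\alpha}ds \le t/\tau .
\]
Taking the infimum over admissible evolutions yields $T\ge T_\alpha$. This argument avoids Theorem~\ref{th: time norm} entirely. If $q_i=0<p_i$, the same estimate shows ${\bf q}$ is unreachable in finite time, consistent with $T_\alpha=\infty$.

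\textbf{Upper bound by construction.} Take ${\bf p},{\bf q}$ interior and ${\bf p}\ne{\bf q}$, write $T=T_\alpha({\bf p},{\bf q})$, and let $Q$ be the matrix in the statement. I must check three things.
\begin{itemize}
\item $Q\in\mk{gen}$. The columns $j\notin I_-$ vanish, and each column $j\in I_-$ sums to $(-1+\sum_{i\in I_+}W_{ij})\ln(p_j/q_j)/T=0$.
\item $\|Q\|^\tx{diag}_\alpha=1$. The diagonal entries are $\ln(p_j/q_j)/T$ for $j\in I_-$ and zero otherwise, so this is exactly the definition of $T$.
\item $e^{TQ}{\bf p}={\bf q}$. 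Here the block structure does the work. For $j\in I_-$ there is no inflow, so $p_j(t)=p_je^{-\lambda_j t}$ with $\lambda_j=\ln(p_j/q_j)/T$; at $t=T$ this equals $q_j$. For $i\in I_+$ there is no outflow, so
\[
\dot p_i=\sum_{j\in I_-}W_{ij}\lambda_j p_j(t), \qquad p_i(T)=p_i+\sum_{j\in I_-}W_{ij}(p_j-q_j)=q_i ,
\]
using the constraint on $W$.
\end{itemize}

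Existence of such a $W$ is a transportation problem: the supplies $p_j-q_j$ for $j\in I_-$ and the demands $q_i-p_i$ for $i\in I_+$ have equal totals. So $W_{ij}=(q_i-p_i)/\sum_{k\in I_+}(q_k-p_k)$ works, and this matches the polytope of Proposition~\ref{th: gen(p,v)2}. Note that the column-sum condition involves only the sum over $i\in I_+$, and the row condition is a flow-balance condition. The statement phrases it with $(p_j-q_j)$ rather than the $v_j$ of Proposition~\ref{th: gen(p,v)2}, which is exactly what makes the time-integrated flows close up.

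\textbf{Boundary cases.} These need separate treatment and are the main obstacle. When some $p_i=0$ or $q_i=0$, I would approximate ${\bf q}$ by interior points and ${\bf p}$ by nearby states. The general statement then follows from continuity of $T_\alpha$ together with the triangle inequality of Proposition~\ref{th: T quasidistance}, using that short replacer evolutions connect nearby points in small time. The delicate case is $p_i=0<q_i$. There the index lies in $I_+$, so mass must flow in and no logarithmic cost is incurred; the construction above handles this directly, since it never divides by $p_i$ for $i\in I_+$. So I expect only $q_j=0$ with $p_j>0$ (infinite time) and limiting arguments for attainment of the minimum to need care.
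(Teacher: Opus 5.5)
Your proof is correct, but for the lower bound it takes a more direct route than the paper. The paper's steps are:
\begin{itemize}
\item it checks that $T_\alpha$ is a quasi-distance;
\item it computes $\lim_{h\to0^+}T_\alpha(\mathbf p,\mathbf p+h\mathbf v)/h=\tau\bigl(\sum_{v_i\le 0}|v_i/p_i|^\alpha\bigr)^{1/\alpha}$, identifying it with the Finsler metric of minimal generators;
\item it bounds the $F$-length of every directed curve from below by $T_\alpha$ via Minkowski's integral inequality;
\item it appeals to Theorem~\ref{th: time norm} ($T$ is the geodesic quasi-distance of $F$) to turn this into a bound on times.
\end{itemize}
The paper's upper bound uses the same block semigroup as yours, checked via $e^{\bar Q}$ for a unit-time generator $\bar Q$ and then rescaled.

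You apply the same core estimate, $-\dot p_i/p_i\le |Q_{s,ii}|$, to the actual admissible generator rather than to the minimal one. So you need neither Theorem~\ref{th: time norm} nor the minimal-generator analysis of Proposition~\ref{th: gen(p,v)2}. This makes your argument self-contained and avoids the reparametrization and curve-realization steps in the proof of Theorem~\ref{th: time norm}. It also covers non-interior states, which the paper's proof explicitly leaves out. In addition, you prove that a valid $W$ exists, which the paper takes for granted. What the paper's route buys is the identification of $T_\alpha$ as a Finsler geodesic distance, which it reuses for Corollary~\ref{th: FT outer} through $F_\infty\le F\le F_1$.

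Your closing discussion of boundary cases is more cautious than necessary, and the approximation sketch would not work as written. The construction goes through verbatim whenever $\operatorname{supp}\mathbf p\subseteq\operatorname{supp}\mathbf q$. Every $j\in I_-$ then has $p_j>q_j>0$, and nothing is divided by $p_i$ or $q_i$ for $i\in I_+$, including when $p_i=q_i=0$. That is exactly the case $T_\alpha<\infty$. In the complementary case your Gr\"onwall estimate already gives $T=\infty$. So the minimum is attained whenever it is finite, and no limiting argument is needed.

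Approximating a boundary target $\mathbf q$ by interior points $\mathbf q'$ and using the triangle inequality could not close in any case. Your own bound $p_i(t)\ge p_i e^{-t/\tau}$ gives $T(\mathbf q',\mathbf q)=\infty$, and a replacer evolution never reaches its target in finite time.
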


\noindent For the proof, see Appendix~\ref{app:results}.

\

Remarkably, time-homogeneous generators saturate the minimum reachability times, even when general time-inhomogeneous dynamics are considered. Also, reachable sets corresponding to situations with $\|Q\|_\alpha^\tx{diag}\leq 1$ follow directly from the explicit characterization of the time quasi-distances given above,
\begin{equation}
    \mc{S}^+_t({\bf{p}}) := \{{\bf{q}} \tx{ s.t. } T_\alpha({\bf{p}},{\bf{q}})\leq t\} \tx{ and } \mc{S}^-_t({\bf{p}}) := \{{\bf{q}} \tx{ s.t. } T_\alpha({\bf{q}},{\bf{p}})\leq t\},
\end{equation}
and, since $T_\infty({\bf{p}},{\bf{q}})\leq T_\alpha({\bf{p}},{\bf{q}})\leq T_{\alpha'}({\bf{p}},{\bf{q}})\leq T_1({\bf{p}},{\bf{q}})$ whenever $1\leq \alpha'\leq \alpha\leq \infty$, their corresponding reachable sets are nested. Furthermore, the reachable sets in the extremal cases $\alpha=1$ and $\alpha=\infty$ constitute inner and outer bounds also to the reachable sets due to dynamics bounded in arbitrary isotropic norms of same time-scale, not just of the diagonal-$\ell_\alpha$ type, as the following Corollary establishes.

\begin{corollary}[Inner and outer bounds for arbitrary isotropic norms]\label{th: FT outer}
    Consider the quasi-distances $T_1({\bf{p}},{\bf{q}}) = \tau \sum_{i=1}^N\max\left(0,\ln\frac{p_i}{q_i}\right) \tx{ and } T_\infty({\bf{p}},{\bf{q}}) = \tau D_{\infty}(\mathbf{p}|\mathbf{q}):= \tau\max_{i} \ln \frac{p_i}{q_i}$, where $D_{\infty}$ denotes the $\infty$-Rényi divergence~\cite{Renyi1961}. For any two states ${\bf{p}}, {\bf{q}}\in \Delta_{N-1}$,
     \begin{equation}
         T_\infty({\bf{p}},{\bf{q}}) \leq T({\bf{p}},{\bf{q}}) \leq T_1({\bf{p}},{\bf{q}}),
     \end{equation}
     where $T$ is the minimal time function due to dynamics bounded by any isotropic norms of time-scale $\tau$. Thus, the reachable sets under the latter, $\mc{S}_t^\pm({\bf{p}})$, satisfy
    \begin{equation}
    \label{eq:sets_outer}
        \mc{S}_t^{1\pm}({\bf{p}}) \subseteq \mc{S}_t^\pm({\bf{p}}) \subseteq\mc{S}_t^{\infty\pm}({\bf{p}}) = (e^{\mp t/\tau}{\bf{p}} + (1-e^{\mp t/\tau})\Delta_{N-1})\cap \Delta_{N-1},
    \end{equation}
    where $\mc{S}_t^{1\pm}({\bf{p}})$ and $\mc{S}_t^{\infty\pm}({\bf{p}})$ are, respectively, the outgoing (incoming) metric balls of $T_1(\cdot,\cdot)$ and $T_\infty(\cdot,\cdot)$.
\end{corollary}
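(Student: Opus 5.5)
The plan is to argue at the level of Finsler metrics and then integrate along curves, using Theorem~\ref{th: time norm} as the bridge. By Theorem~\ref{th: time norm}, $T$, $T_1$ and $T_\infty$ are all geodesic quasi-distances of the Finsler metrics $F$, $F_1$ and $F_\infty$ induced respectively by the given isotropic norm $f$, by $\|\cdot\|_1^\tx{diag}$ and by $\|\cdot\|_\infty^\tx{diag}$, all with the same time-scale $\tau$. It therefore suffices to prove the pointwise inequality
\begin{equation*}
F_\infty({\bf p},{\bf v})\leq F({\bf p},{\bf v})\leq F_1({\bf p},{\bf v})
\end{equation*}
for every ${\bf p}$ and every ${\bf v}\in T_{\bf p}\Delta_{N-1}$. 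Integrating this along any directed curve orders the length functionals, and taking minima over curves gives $T_\infty\leq T\leq T_1$. The explicit forms of $T_1$ and $T_\infty$ are then read off from Theorem~\ref{th: FT inner} with $\alpha=1$ and $\alpha\to\infty$ (for $\alpha=\infty$ the maximum of $\max(0,\cdot)$ coincides with $D_\infty$, since $\max_i \ln(p_i/q_i)\geq 0$ for normalized vectors).

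For the pointwise inequality I use the proposition on achievable velocities: $F({\bf p},{\bf v})=f(x)$ with $x=(|v_i|/p_i)_{i\in I_-}$, padded with zeros. Write $x=\sum_k x_k e_k$.
\begin{itemize}
\item \emph{Upper bound.} Subadditivity and positive homogeneity of $f$, together with \eqref{eq: time-scale} and permutation invariance, give $f(x)\leq\sum_k x_k f(e_k)=\tau\sum_k x_k=F_1$.
\item \emph{Lower bound.} Let $m$ be an index where $x_m=\max_k x_k$. Monotonicity of $f$ under $\leq_C$ gives $f(x)\geq f(x_m e_m)=\tau\max_k x_k=F_\infty$. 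This step needs monotonicity in each coordinate; I would justify it from the definition of isotropic norm, where monotone under $\leq_C$ means exactly componentwise monotonicity in the absolute values.
\end{itemize}
Equivalently, $\mathcal J^\tx{inner}_{\bf p}\subseteq\mathcal J_{\bf p}\subseteq\mathcal J^\tx{outer}_{\bf p}$, which is Proposition~\ref{th: TS outer}, so I can simply cite that proposition and translate the inclusion of unit balls into the inequality of Minkowski functionals.

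The inequalities of quasi-distances immediately give the set inclusions \eqref{eq:sets_outer}, because $\mathcal S_t^{\pm}$ are sublevel sets of $T(\mathbf p,\cdot)$ and $T(\cdot,\mathbf p)$.

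The remaining work is the explicit form of $\mathcal S_t^{\infty\pm}$.
\begin{itemize}
\item \emph{Outgoing set.} $T_\infty({\bf p},{\bf q})\leq t$ is equivalent to $q_i\geq e^{-t/\tau}p_i$ for all $i$. For ${\bf q}\in\Delta_{N-1}$ this is equivalent to ${\bf r}:=({\bf q}-e^{-t/\tau}{\bf p})/(1-e^{-t/\tau})$ having nonnegative entries. Since ${\bf r}$ automatically sums to one, this says ${\bf r}\in\Delta_{N-1}$, i.e.\ ${\bf q}\in e^{-t/\tau}{\bf p}+(1-e^{-t/\tau})\Delta_{N-1}$. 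The intersection with $\Delta_{N-1}$ is then automatic but harmless.
\item \emph{Incoming set.} $T_\infty({\bf q},{\bf p})\leq t$ is equivalent to $q_i\leq e^{t/\tau}p_i$. Setting ${\bf r}=({\bf q}-e^{t/\tau}{\bf p})/(1-e^{t/\tau})$, the condition becomes $r_i\geq 0$. Together with ${\bf q}\in\Delta_{N-1}$, this is the stated formula with $e^{+t/\tau}$.
\end{itemize}

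The main obstacle is edge cases rather than the core argument. One is boundary states, where $p_i=0$ or $q_i=0$ make the logarithms infinite; these are handled by the conventions already implicit in Theorem~\ref{th: FT inner}. The other is ensuring that Theorem~\ref{th: time norm} applies uniformly, i.e.\ that minima over curves exist for all three metrics. If that is delicate, I would instead compare directly: any admissible time-dependent generator for $f$ has $\|Q_s\|_\infty^\tx{diag}\leq f(Q_s)\leq 1$, by the same pointwise argument applied to the diagonal. This immediately gives $T_\infty\leq T$. For the other inequality, the optimal $\|\cdot\|_1^\tx{diag}$ generator from Theorem~\ref{th: FT inner} satisfies $f(Q)\leq\|Q\|_1^\tx{diag}\leq1$, giving $T\leq T_1$ without invoking geodesics at all. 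I would actually present this direct version, as it is cleaner.
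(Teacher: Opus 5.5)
Your proposal is correct. The Finsler-level argument you give first is exactly the paper's proof: it reads Proposition~\ref{th: TS outer} as $F_\infty\le F\le F_1$, passes to geodesic quasi-distances via Theorem~\ref{th: time norm}, and takes the explicit forms of $T_1,T_\infty$ from Theorem~\ref{th: FT inner}.

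The version you say you would actually present takes a mildly different route. You compare the norms on generators directly, $\|Q\|_\infty^{\mathrm{diag}}\le f(Q)\le\|Q\|_1^{\mathrm{diag}}$ (by monotonicity under $\leq_C$ and subadditivity of $f$). The admissible control sets are then nested, so the minimal-time functions are ordered directly by definition.

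What this buys:
\begin{itemize}
\item It removes the dependence on Theorem~\ref{th: time norm}, whose reparametrization and existence-of-minimizer steps are the weakest link in the paper's chain.
\item Your derivation of the norm sandwich is cleaner than the support-hyperplane argument given for Proposition~\ref{th: TS outer}. The paper itself uses the upper half in exactly your way at the end of the proof of Proposition~\ref{prop: saturation}.
\item The paper's route, in exchange, ties the result to the picture of nested velocity sets $\mathcal J_{\mathbf p}$.
\end{itemize}

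Be aware that your direct route still uses the lower-bound half of Theorem~\ref{th: FT inner} (the Minkowski estimate on curve lengths) to identify the $\ell_\infty$ minimal time with $\tau D_\infty$. If you want the corollary completely free of curve-length arguments, use instead $\dot p_i=Q_{ii}p_i+\sum_{j\neq i}Q_{ij}p_j\ge -p_i/\tau$, which holds since $\tau|Q_{ii}|\le f(Q)\le 1$. This yields $q_i\ge e^{-t/\tau}p_i$ for every state reachable in time $t$, hence $T_\infty\le T$ and the outgoing outer set at once. The other inequality, $T\le T_1$, needs only the explicit generator of Theorem~\ref{th: FT inner}.

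Your explicit computation of $\mathcal S_t^{\infty\pm}(\mathbf p)$, including the remark that intersecting with $\Delta_{N-1}$ is automatic only in the outgoing case, fills in a step the paper merely asserts through the identification with the Funk quasi-distance.
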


\noindent For the proof, see Appendix \ref{pr: FT outer}. 

\

The formal connection between the time quasi-distance and the Kullback-Leibler relative entropy~\cite{KullbackLeibler1951} is explored further in Section~\ref{sec: KL asymmetry}. Theorem~\ref{th: FT inner} and its Corollary can be considered as complementary to earlier results of Gu ~\cite{Gu2023SpeedLimit}, where the minimal time $T_{\infty}$ determines speed limits under time-independent generators bounded in $\|Q\|_1^\tx{diag\!}$. While this conclusion is consistent with \eqref{eq:sets_outer}, the exact reachability sets in this bounded trace scenario would be given by the tighter $\mc{S}_1^\tx{diag}$. Moreover, our result establishes that the proposed $\mc{S}^{\infty\pm}_t$ speed limit holds for arbitrary isotropic norms with fixed time-scale even if the generator depends explicitly on time, and is exactly saturated if the infinite norm $\|Q\|_{\infty}^\tx{diag}$ is fixed. \\

\begin{figure}[h!]
    \centering
    \includegraphics[width=\linewidth]{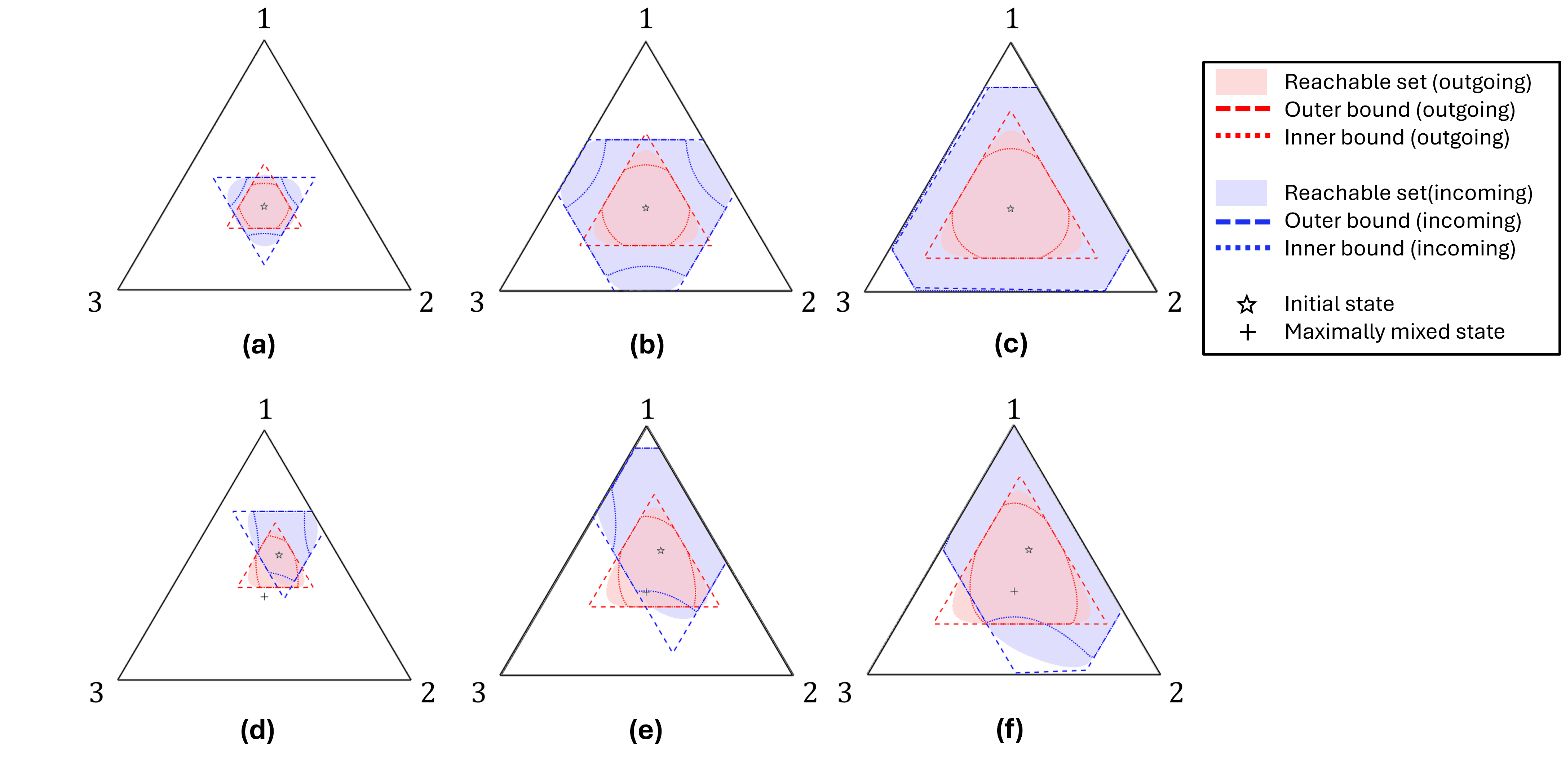}
    \caption{Comparison of the incoming and outgoing reachable sets for $N=3$ when the dynamics is constrained by $\|Q\|_{2}^\tx{diag}:=\sqrt{\sum_{i}Q_{ii}^2}\leq 1$ (with $\tau = 1$) for different times: $t=0.05$ in (a) and (d), $t=0.1$ in (b) and (e), and $t=0.15$ in (c) and (f). As initial states we consider ${\bf{p}}=[1/3,1/3,1/3]^T$ (top row) and ${\bf{p}}=[0.2,0.3,0.5]^T$ (bottom row)}
    \label{c}
\end{figure}

In Fig.~\ref{c} we display the outgoing (red) and incoming (blue) sets with their respective bounds for different initial states. We verify how they differ due to the irreversibility of Kolmogorov evolution. 
It is harder to evolve a probability vector $\bf{p}$ towards a deterministic state (vertices of the simplex) than towards the maximally random state $\mathbf{p} = \omega \equiv (1/N, \dots, 1/N)^T$. This makes the shapes of the outgoing and incoming sets more distinct for initial states near the boundary. The difference in the structure of the incoming and outgoing set for a fixed hiking time in the mountains depends on the difference of the altitude in the terrain (i.e. the gravitational potential), see Fig.
\ref{fig:riemann_finsler}b.
In the studied case of Kolmogorov dynamics in the probability simplex the role of the potential is played by the closest distance of a given probability vector $q$ to the boundary of the simplex, which is determined by its
R{\'e}nyi entropy $S_\alpha(q)$ in the limit 
 the parameter $\alpha$ goes to zero
 \cite{Bengtsson2006}.



\begin{figure}[h!]
    \centering
    \includegraphics[width=0.95\linewidth]{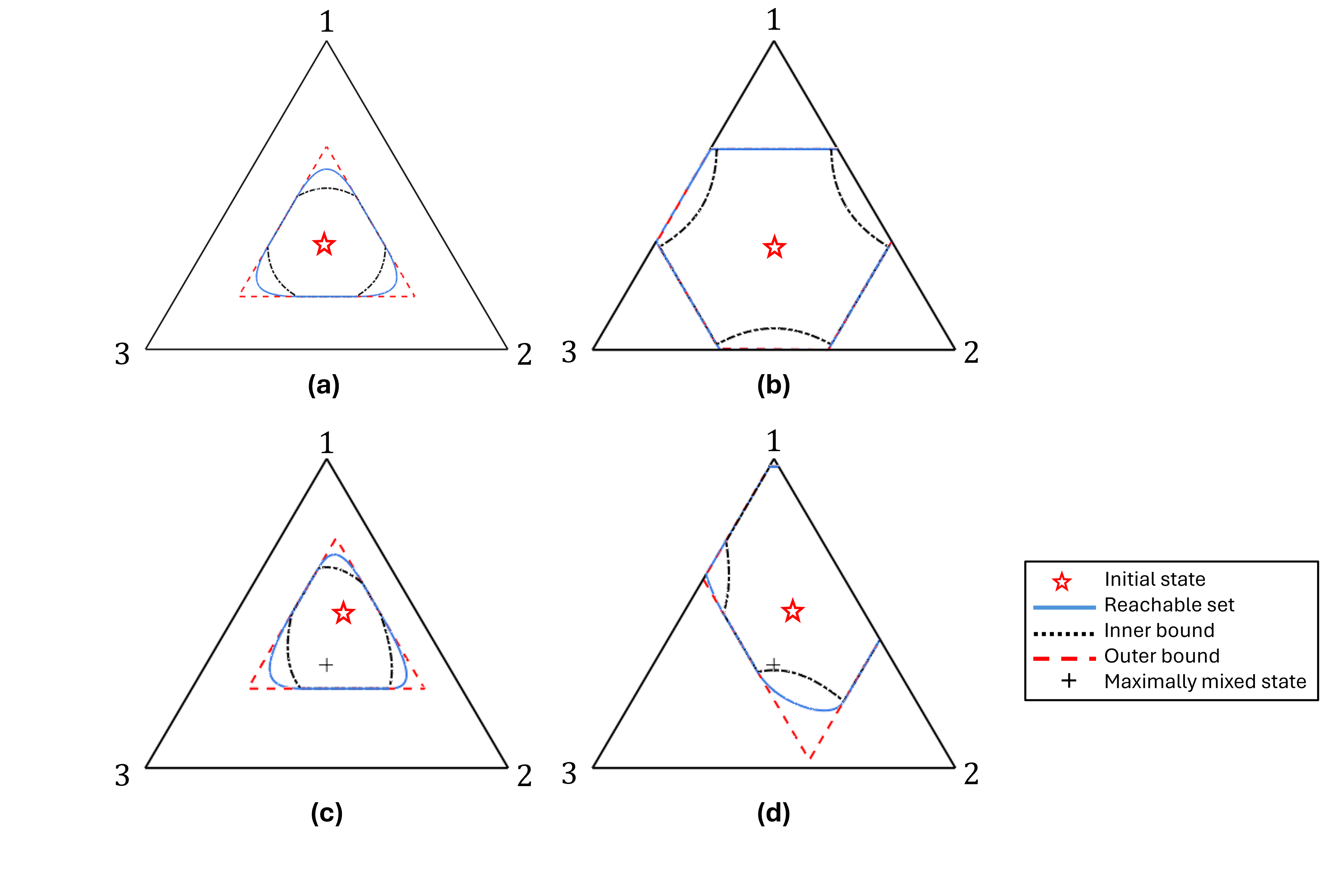}
    \caption{Comparison of the reachable sets (in blue) for $N=3$ under $\|Q\|^\tx{diag}_2:=\sqrt{\sum_iQ_{ii}^2}\leq 1$ in $t=1/9$ (the time-scale is set to $\tau=1$), with the derived bounds in the outgoing (a) and (c), and incoming (b) and (d) cases; with initial states ${\mathbf{p}}=[\frac{1}{3},\frac{1}{3},\frac{1}{3}]^T$ (top row) and $[0.2,0.3,0.5]^T$ (bottom row).}
    \label{fig:timeballsgen}
\end{figure}
In Fig.~\ref{fig:timeballsgen}, we assess the quality of the bounds for different norms by comparing it with the reachable set obtained numerically. It can be seen that the norm $\|Q\|_\tx{1}^{\rm diag}$ is tight for the inner bound, while $\|Q\|_\tx{\infty}^{\rm diag}$ is tightest for the outer bound. In the next section, we further address the tightness and asymptotic behavior of this characterization. \\

\newpage

\subsection{Volumes of reachable sets for arbitrary system dimension}
\label{sec: volumes}

Consider the reachable sets $\mc{S}^\pm_t({\textbf{p}})$ induced by certain isotropic norm $\|\cdot\|$ of time-scale $\tau$. It is pertinent to ask how tight are the corresponding inner and outer bounds, found in Theorem \ref{th: FT inner} and Corollary \ref{th: FT outer}. Moreover, it is legitimate to inquire whether the quality of the bounds increases or decreases in time, and how does it depend on the system dimension $N$. For simplicity, we will just analyze the outgoing case when the initial state is the maximally mixed, i.e. how tight are bounds $\mc{S}^{\infty+}_t(\omega)$ and $\mc{S}^{1+}_t(\omega)$ with respect to $\mc{S}^+_t(\omega)$ induced by some $\|\cdot\|$ of time-scale $\tau$. Reasonable quantifiers are, respectively, the ratios
\begin{equation}
    \varepsilon^\tx{outer}
    _t\!\!:=\frac{\tx{vol}[\mc{S}^+_t(\omega)]}{\tx{vol}[\mc{S}^{\infty+}_t\!(\omega)]} \ \tx{ and } \ \varepsilon^\tx{inner}
    _t\!\!:=\frac{\tx{vol}[\mc{S}^{1+}_t\!(\omega)]}{\tx{vol}[\mc{S}^+_t(\omega)]}
\end{equation}
and the gap estimators
\begin{equation}
    \delta^\tx{outer}_t\!\!:=\frac{\tx{vol}[\mc{S}^{\infty+}_t\!(\omega)]-\tx{vol}[\mc{S}^+_t(\omega)]}{\tx{vol}[\Delta_{N-1}]} \ \tx{ and } \ \delta^\tx{inner}_t\!\!:=\frac{\tx{vol}[\mc{S}_t^+(\omega)]-\tx{vol}[\mc{S}^{1+}_t\!(\omega)]}{\tx{vol}[\Delta_{N-1}]}
\end{equation}
where $\tx{vol( \cdot )}$ denotes volume computed with respect to the flat measure on the simplex. Furthermore, since under $\|Q\|\leq 1$ for any isotropic norm of time scale $\tau$ one has $\tx{vol}[\mc{S}_t^{1+}\!(\omega)] \leq \tx{vol}[\mc{S}_t^\tx{+}\!(\omega)] \leq \tx{vol}[\mc{S}_t^{\infty +}\!(\omega)]$, we may just work with the norm-independent quantities
\begin{align}
    \varepsilon_t&:=\frac{\tx{vol}[\mc{S}^{1+}_t\!(\omega)]}{\tx{vol}[\mc{S}^{\infty+}_t\!(\omega)]} \leq \varepsilon^\tx{inner}\!\!,\varepsilon^\tx{outer}\\
    \delta_t&:=\frac{\tx{vol}[\mc{S}^{\infty+}_t\!(\omega)]-\tx{vol}[\mc{S}^{1+}_t\!(\omega)]}{\tx{vol}[\Delta_{N-1}]} \geq \delta^\tx{inner}_t,\delta^\tx{outer}_t
\end{align}
which then give worst-possible bounds to the tightness estimators between the reachable set due to any isotropic norm and the $\mc{S}_t^{1,\infty}$ of same timescale $\tau$. Apart from their universal applicability, the advantage of these parameters is that they just rely on the inner and outer sets, already known. The outer volume is
\begin{equation}\label{eq: OuterVol}
    V^+_\tx{outer}(t) := \tx{vol}[\mc{S}^{\infty+}_t\!(\omega)] = \left(1-e^{-t/\tau}\right)^{N-1}V_\tx{total},    
\end{equation}
where $V_\tx{total} = \tx{vol}[\Delta_{N-1}]$ is the volume of the simplex. On the other hand, the inner volume $V^+_\tx{inner}(t):=\tx{vol}[\mc{S}^{1+}_t(\omega)]$ has no elementary closed form, and we estimate it by Monte Carlo integration.\footnote{Target states are drawn uniformly from the simplex,
and accepted when $T^{\tx{inner}}(\omega,\mathbf q)\leq t$. The accepted fraction estimates $V_{\tx{inner}}^+(t)/V_{\tx{total}}$.}

To assess how the tightness of the bounds depends on the system dimension, we plot $\varepsilon$ evaluated at fixed times, against $N$; but to compare systems of different dimension on a common time, we must first specify the dependence of the time-scale $\tau$ on $N$. The next Theorem suggests that a natural choice is $\tau=1/N$ since, with it, a characteristic time $t_c$ emerges which, is independent of the system size, and in the large $N$ limit determines whether the bounds can be considered tight nor not.

\begin{theorem}
\label{prop: threshold}
    Consider the average of $T_1(\omega,{\bf{q}})$ computed with respect to the flat measure on the simplex, $d\mu({\bf{q}})$. With $\tau=1/N$, the following limit is well defined,
    \begin{equation}
    \label{eq: threshold}
        t_c:=\lim_{N\ra\infty} \frac{1}{V_{\!\tx{total}}}\int_{\Delta_{N-1}}\!\!\!\!\!T^\tx{\!inner}\!(\omega,{\bf{q}}) \, d\mu({\bf{q}}) = \gamma + E_1(1) \approx 0.7966,
    \end{equation}
    where $\gamma\approx 0.5772$ is the Euler--Mascheroni constant and $\mathrm E_1(1)=\int_1^\infty e^{-x}/x\,dx \approx 0.2194$. Then, for every fixed $t>0$ with $t\neq t_c$,
    \begin{equation}
    \lim_{N\to\infty}\varepsilon_N(t)
        =
        \begin{cases}
            0 & \tx{if} \ t<t_c,\\[1mm]
            1 & \tx{if} \ t>t_c.
        \end{cases}
        \label{eq: tightness-threshold}
    \end{equation}
\end{theorem}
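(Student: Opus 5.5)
The plan is to use the standard Gamma (exponential) representation of the flat measure on the simplex. This turns both the average of $T_1(\omega,\cdot)$ and the volume fraction of the inner ball into statements about i.i.d.\ random variables. Concretely, if $E_1,\dots,E_N$ are i.i.d.\ $\mathrm{Exp}(1)$ and $S_N=\sum_j E_j$, then ${\bf q}=(E_1,\dots,E_N)/S_N$ is uniformly distributed on $\Delta_{N-1}$ with respect to $d\mu/V_{\rm total}$. With $\tau=1/N$ and $\omega_i=1/N$, Corollary~\ref{th: FT outer} gives
\begin{equation*}
T_1(\omega,{\bf q})=\frac1N\sum_{i=1}^N \max\Big(0,\ln\frac{S_N}{N}-\ln E_i\Big).
\end{equation*}
I would compare this with the i.i.d.\ average $A_N:=\frac1N\sum_i g(E_i)$, where $g(x)=\max(0,-\ln x)$. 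Since $a\mapsto\max(0,a-b)$ is $1$-Lipschitz, we get pointwise $|T_1(\omega,{\bf q})-A_N|\le |\ln(S_N/N)|$.

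\emph{Step 1 (the limit $t_c$).} First compute $\mathbb E\,g(E)=\int_0^1(-\ln x)e^{-x}dx$. Use $\int_0^\infty(-\ln x)e^{-x}dx=\gamma$, and integrate by parts to get $\int_1^\infty \ln x\, e^{-x}dx=\int_1^\infty e^{-x}/x\,dx=E_1(1)$. Hence $\mathbb E\,g(E)=\gamma+E_1(1)$, and $\mathbb E A_N=\gamma+E_1(1)$ for every $N$. It remains to show $\mathbb E|\ln(S_N/N)|\to0$. Here $S_N\sim\mathrm{Gamma}(N,1)$, so $\mathbb E\ln S_N=\psi(N)=\ln N+O(1/N)$ and $\mathrm{Var}\ln S_N=\psi'(N)=O(1/N)$. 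This gives $\mathbb E|\ln(S_N/N)|=O(N^{-1/2})$, which establishes \eqref{eq: threshold}.

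\emph{Step 2 (concentration).} Since $\mathbb E\,g(E)^2=\int_0^1\ln^2x\,e^{-x}dx<\infty$, Chebyshev gives $A_N\to t_c$ in probability. Step 1 gives $\ln(S_N/N)\to0$ in probability. Hence $T_1(\omega,{\bf q})\to t_c$ in probability under the uniform measure. The fraction $V^+_{\rm inner}(t)/V_{\rm total}=\Pr[T_1(\omega,{\bf q})\le t]$ therefore tends to $1$ for $t>t_c$ and to $0$ for $t<t_c$.

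\emph{Step 3 (outer volume and conclusion).} By \eqref{eq: OuterVol} with $\tau=1/N$, $V^+_{\rm outer}(t)/V_{\rm total}=(1-e^{-Nt})^{N-1}\ge 1-(N-1)e^{-Nt}\to1$ for every fixed $t>0$. Therefore
\begin{equation*}
\varepsilon_N(t)=\frac{V^+_{\rm inner}(t)/V_{\rm total}}{V^+_{\rm outer}(t)/V_{\rm total}}
\end{equation*}
has the same limit as the numerator, which yields \eqref{eq: tightness-threshold}.

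The main obstacle is the coupling between coordinates through the normalization $S_N$, because $T_1(\omega,{\bf q})$ is not itself an i.i.d.\ average. The Lipschitz comparison with $A_N$ and the Gamma moment bounds for $\ln S_N$ are what I expect to carry that step. A secondary point is the integrability of $g$ near $0$ (the logarithmic singularity), which is harmless since $\ln^2 x$ is integrable on $(0,1)$.
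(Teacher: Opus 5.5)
Your proposal is correct and has the same structure as the paper's proof: the normalized-exponential representation of the flat measure, the $1$-Lipschitz comparison $|T_1(\omega,\mathbf q)-A_N|\le|\ln(S_N/N)|$ with an i.i.d.\ average, the value $\mathbb E\,g(E)=\gamma+\mathrm E_1(1)$, and the limit $(1-e^{-Nt})^{N-1}\to1$ for the outer volume. The differences are only technical: you obtain the mean convergence, with an explicit $O(N^{-1/2})$ rate, from the digamma/trigamma moments of $\ln S_N$, whereas the paper uses exchangeability, the $\mathrm{Beta}(1,N-1)$ marginal and dominated convergence; and you use Chebyshev's weak law where the paper uses the strong law under a coupling across $N$.
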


\noindent The proof is given in Appendix~\ref{app:results}.

\

In Fig.~\ref{fig:tightness_vs_N}, we display the relative tightness
$\varepsilon_t(N)$ now as a function of the system dimension $N$ for
several fixed times chosen above and below the found threshold $t_c$. From here on, we set to the common time scale $\tau=1/N$.

\begin{figure}[H]
    \centering
    \includegraphics[width=0.95\linewidth]
    {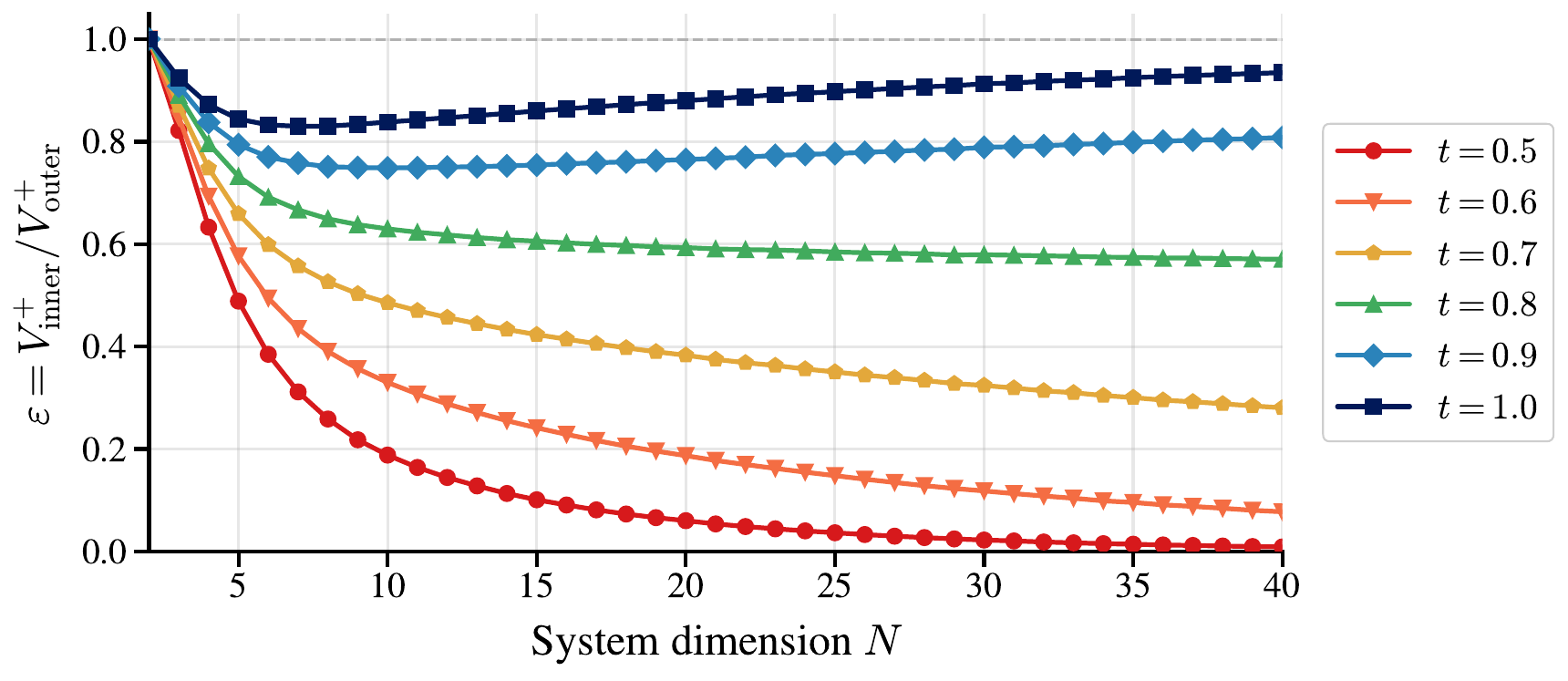}
    \caption{Relative tightness
    $\varepsilon_N(t)=V_{\mathrm{inner}}^{+}(t)/V_{\mathrm{outer}}^{+}(t)$
    of the outgoing-set bounds from
    $\omega$ as a function of $N$, with
    $\tau=1/N$. The inner volumes are estimated using $1.5\times10^{6}$
    uniformly sampled points per dimension, while
    $V_{\mathrm{outer}}^{+}(t)/V_{\mathrm{total}}
    =(1-e^{-Nt})^{N-1}$ is evaluated analytically.}
    \label{fig:tightness_vs_N}
\end{figure}

Thus, we find the behavior anticipated in Theorem~\ref{prop: threshold}. When $\varepsilon$ evaluated at $t<t_c$ is plotted against $N$, we observe a loss of tightness in the large-$N$ limit, and the contrary when $t>t_c$. Since for large systems, $\tau=1/N\ll t_c$, we conclude that tight bounds are achieved at a time much larger that their characteristic timescale (in particular, $t\sim \mc{O}(N\tau)$).

The complementary plots -- the time dependence of the volumes for several fixed values of $N$ -- are shown in Fig.~\ref{fig:tightness}. The two panels quantify
different aspects of the accuracy of the bounds. Panel~\textbf{(a)}
shows the normalized absolute gap $\delta_N(t)$. At very short times this gap is small because both the inner and outer sets occupy only a small fraction of the simplex. It becomes largest in an intermediate time window, when the outer set has already expanded substantially but the inner set has not yet reached the bulk of the simplex. At later times the inner volume catches up with the outer one, and
$\delta_N(t)\to0$.

Panel~\textbf{(b)} shows $\varepsilon_N(t)$ for each finite $N$.
The implications of Theorem~\ref{prop: threshold} are appreciable in how for larger choices of $N$, the curve $\varepsilon_N(t)$ approaches a step function over $t_c$.

\begin{figure}[H]
    \centering
    \includegraphics[width=\linewidth]
    {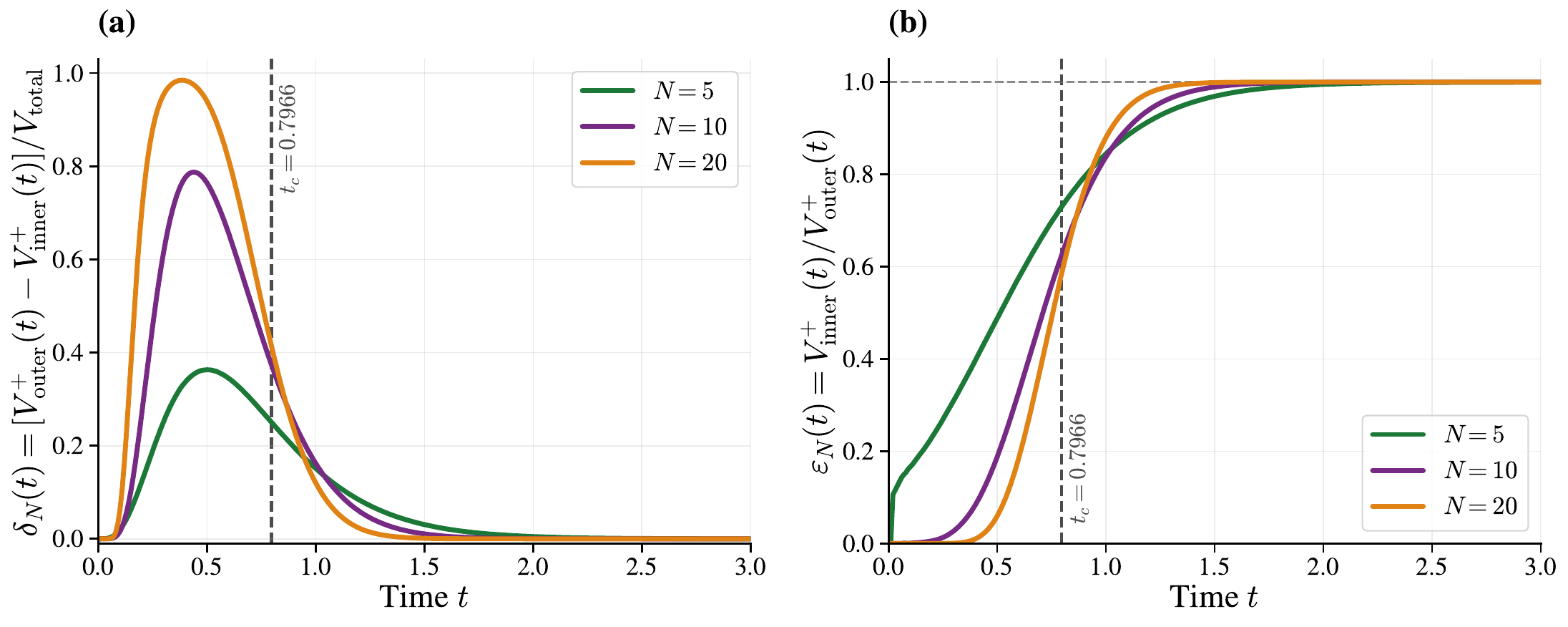}
    \caption{Tightness of the outgoing volume bounds from $\omega$ for
    $N=5,10,20$ and $\tau=1/N$. Panel ({\bf a}) shows the normalized gap
    $\delta_N(t)=[V_{\mathrm{outer}}^{+}(t)-V_{\mathrm{inner}}^{+}(t)]/
    V_{\mathrm{total}}$. Panel ({\bf b}) demonstrates that, for large dimension $N$, the relative tightness
    $\varepsilon_N(t)=V_{\mathrm{inner}}^{+}(t)/V_{\mathrm{outer}}^{+}(t)$ approaches step-function behavior at $t_c$.}
    \label{fig:tightness}
\end{figure}

\subsection{Asymmetry of reachable sets and relative entropy}
\label{sec: KL asymmetry}

The difference between the incoming and outgoing sets, or the asymmetry of the induced time quasi-distances provide a signature of the irreversibility of Kolmogorov dynamics.
Throughout this subsection, we retain the normalization $\tau=1/N$
and specialize to the norm $\|\cdot\|_1^\tx{diag}$, whose associated reachable sets and quasi-distance are exactly the ones described in Theorem~\ref{th: FT inner}. For simplicity, we write
$T \equiv T_{1}$ and $\mc{S}^{1\pm}_t\equiv\mc{S}^\pm_t$. In this setting, we can find a quantitative criterion on pairs of states that determines the asymmetry in the time-optimal dynamics joining them.

\begin{theorem}
\label{prop: antisymmetric-time-cost}
For any two interior states
$\mathbf p,\mathbf q\in\Delta_{N-1}$,
\begin{equation}
\begin{aligned}
    T(\mathbf p,\mathbf q)-T(\mathbf q,\mathbf p)
    =
    \frac{1}{N}\sum_{i=1}^{N}\ln\frac{p_i}{q_i}
    =
    \frac{1}{N}
    \ln\frac{\prod_{i=1}^{N}p_i}
             {\prod_{i=1}^{N}q_i}.
\end{aligned}
\label{eq: general-time-asymmetry}
\end{equation}
Consequently,
\begin{equation}
    T(\mathbf p,\mathbf q)\geq T(\mathbf q,\mathbf p)
    \quad\Longleftrightarrow\quad
    \prod_{i=1}^{N}p_i
    \geq
    \prod_{i=1}^{N}q_i.
    \label{eq: product-order}
\end{equation}
\end{theorem}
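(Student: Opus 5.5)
The plan is to apply Theorem~\ref{th: FT inner} directly with $\alpha=1$ and $\tau=1/N$, which gives the closed form
\begin{equation*}
T(\mathbf p,\mathbf q)=\frac1N\sum_{i=1}^N\max\left(0,\ln\frac{p_i}{q_i}\right).
\end{equation*}
Since $\mathbf p$ and $\mathbf q$ are interior, every logarithm $\ln(p_i/q_i)$ is finite. Swapping the arguments gives
\begin{equation*}
T(\mathbf q,\mathbf p)=\frac1N\sum_{i=1}^N\max\left(0,\ln\frac{q_i}{p_i}\right)=\frac1N\sum_{i=1}^N\max\left(0,-\ln\frac{p_i}{q_i}\right).
\end{equation*}

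The key step is the elementary identity $\max(0,x)-\max(0,-x)=x$ for every real $x$. It is checked by splitting into the cases $x\ge 0$ and $x<0$: in the first case the left side is $x-0$, and in the second it is $0-(-x)$. Equivalently, the identity says that $\ln(p_i/q_i)$ splits into a positive part contributing to $T(\mathbf p,\mathbf q)$ and a negative part contributing to $T(\mathbf q,\mathbf p)$. Apply it term by term with $x_i=\ln(p_i/q_i)$ and subtract the two sums. This gives
\begin{equation*}
T(\mathbf p,\mathbf q)-T(\mathbf q,\mathbf p)=\frac1N\sum_i\ln\frac{p_i}{q_i}.
\end{equation*}
Summing logarithms into the logarithm of a product then yields the second form, $\frac1N\ln\bigl(\prod_i p_i/\prod_i q_i\bigr)$.

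For the equivalence \eqref{eq: product-order}, note that $T(\mathbf p,\mathbf q)\ge T(\mathbf q,\mathbf p)$ holds iff the difference is nonnegative. Since $1/N>0$, this holds iff $\ln\bigl(\prod_i p_i/\prod_i q_i\bigr)\ge 0$. Because both products are strictly positive for interior states and $\ln$ is monotone, this holds iff $\prod_i p_i\ge\prod_i q_i$.

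There is no real obstacle here, since the argument is a direct consequence of the explicit formula in Theorem~\ref{th: FT inner}. The only point needing care is the interiority assumption. It guarantees that all logarithms are finite, so that subtracting the two times is legitimate rather than an $\infty-\infty$ expression, and that the products are strictly positive. I will state this explicitly in the proof.
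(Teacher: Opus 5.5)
Your proof is correct and follows the paper's own argument. The paper also obtains the identity by direct calculation from Theorem~\ref{th: FT inner} (with $\alpha=1$, $\tau=1/N$) and the equivalence from monotonicity of the logarithm; your identity $\max(0,x)-\max(0,-x)=x$ and your remark on interiority just make that calculation explicit.
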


\begin{proof}
Equation~\eqref{eq: general-time-asymmetry} follows by direct
calculation from Theorem~\ref{th: FT inner}. On the other hand, the
equivalence~\eqref{eq: product-order} follows from the monotonicity of
the logarithm.
\end{proof}

Equation~\eqref{eq: product-order} compares the two directed times
through the coordinate product, so it does not define a total mixedness order on arbitrary pairs of probability vectors.
But for transitions involving the maximally mixed state $\omega$, the
difference takes precisely the form of the Kullback--Leibler divergence
\cite{KullbackLeibler1951}. First, note that

%
\begin{equation}
\begin{aligned}
    T(\omega,\mathbf q)
    =
    \frac{1}{N}\sum_{i:\,q_i<1/N}
    \bigl[-\ln(Nq_i)\bigr], \quad \text{and} \quad
    T(\mathbf q,\omega)
    =
    \frac{1}{N}\sum_{i:\,q_i>1/N}
    \ln(Nq_i).
\end{aligned}
\label{eq: outgoing-incoming-costs}
\end{equation}
The first expression is the minimal time required to prepare
$\mathbf q$ from $\omega$, whereas the second is the minimal time
required to relax $\mathbf q$ to $\omega$. Then,

\begin{proposition}
\label{prop: KL gap}
For every interior state
$\mathbf q\in\operatorname{int}\Delta_{N-1}$,
\begin{equation}
    T(\omega,\mathbf q)-T(\mathbf q,\omega)
    =
    D_{\tx{KL}}(\omega\,\|\,\mathbf q)
    \geq0,
    \label{eq: KL identity}
\end{equation}
where
$D_{\tx{KL}}(\omega\|\mathbf q)
=\sum_i\omega_i\ln(\omega_i/q_i)$ is the
Kullback--Leibler divergence~\cite{KullbackLeibler1951}.
Equality holds if and only if $\mathbf q=\omega$.
\end{proposition}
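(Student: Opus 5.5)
The plan is to specialize Theorem~\ref{prop: antisymmetric-time-cost} to the pair $(\mathbf p,\mathbf q)=(\omega,\mathbf q)$ and then recognize the resulting expression as a Kullback--Leibler divergence. Since $\omega$ and $\mathbf q$ are interior, Theorem~\ref{prop: antisymmetric-time-cost} applies directly. Alternatively, one can subtract the two expressions in \eqref{eq: outgoing-incoming-costs}, which come from Theorem~\ref{th: FT inner} with $\alpha=1$ and $\tau=1/N$, using $\max(0,x)-\max(0,-x)=x$ termwise. Either way we get
\begin{equation}
    T(\omega,\mathbf q)-T(\mathbf q,\omega)=\frac{1}{N}\sum_{i=1}^N\ln\frac{1/N}{q_i}=\sum_{i=1}^N\omega_i\ln\frac{\omega_i}{q_i}=D_{\tx{KL}}(\omega\,\|\,\mathbf q).
\end{equation}
Indices with $q_i=1/N$ contribute zero to both sides, so the split in \eqref{eq: outgoing-incoming-costs} between strict and non-strict inequalities is harmless.

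For nonnegativity and the equality case, I would use Gibbs' inequality, which follows from Jensen's inequality for the strictly concave logarithm:
\begin{equation}
    -D_{\tx{KL}}(\omega\|\mathbf q)=\sum_i\omega_i\ln\frac{q_i}{\omega_i}\leq\ln\sum_i q_i=0.
\end{equation}
Strict concavity gives equality iff $q_i/\omega_i$ is constant. Normalization then forces $\mathbf q=\omega$.

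An equivalent elementary route is the AM--GM inequality. The product $\prod_i q_i$ is maximized on the simplex at $\omega$, uniquely. This also matches the criterion \eqref{eq: product-order}.

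There is no real obstacle here. The only point needing care is confirming that interiority makes every logarithm finite, so that Theorem~\ref{prop: antisymmetric-time-cost} applies without boundary issues.
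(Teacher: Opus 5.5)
Your proposal is correct and matches the paper's proof: set $\mathbf p=\omega$ in the asymmetry identity of Theorem~\ref{prop: antisymmetric-time-cost} to get $\frac{1}{N}\sum_i\ln\frac{1/N}{q_i}=D_{\mathrm{KL}}(\omega\|\mathbf q)$, then use Gibbs' inequality for nonnegativity and for the equality case. Your Jensen derivation of Gibbs' inequality and the AM--GM remark just spell out what the paper cites.
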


\begin{proof}
Setting $\mathbf p=\omega$ in~\eqref{eq: general-time-asymmetry} yields
\begin{equation}
\begin{aligned}
    T(\omega,\mathbf q)-T(\mathbf q,\omega)
    =
    \frac{1}{N}\sum_{i=1}^{N}
    \ln\frac{1/N}{q_i}
    =
    D_{\tx{KL}}(\omega\,\|\,\mathbf q).
\end{aligned}
\label{eq: KL-gap-proof}
\end{equation}
Non-negativity and the equality condition follow from Gibbs'
inequality~\cite{Cover2006}.
\end{proof}

\begin{corollary}
If $\mc{S}^\pm_t(\omega)$ denote the incoming and outgoing sets induced by $\|Q\|_1^\tx{diag}\leq 1$, then
\begin{equation}
    \mathcal S_t^+(\omega)\subseteq\mathcal S_t^-(\omega) \ \forall t>0.
    \label{eq:containment}
\end{equation}
\end{corollary}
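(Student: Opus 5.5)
The inclusion follows directly from Proposition~\ref{prop: KL gap}, once both sets are written as sublevel sets of the quasi-distance $T\equiv T_1$. As noted after Proposition~\ref{th: T quasidistance}, we have $\mathcal S_t^+(\omega)=\{\mathbf q:\ T(\omega,\mathbf q)\le t\}$ and $\mathcal S_t^-(\omega)=\{\mathbf q:\ T(\mathbf q,\omega)\le t\}$. It therefore suffices to show that $T(\mathbf q,\omega)\le T(\omega,\mathbf q)$ for every $\mathbf q\in\Delta_{N-1}$. Then $T(\omega,\mathbf q)\le t$ implies $T(\mathbf q,\omega)\le t$, which is exactly the claimed inclusion.

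For interior $\mathbf q$, Proposition~\ref{prop: KL gap} gives $T(\omega,\mathbf q)-T(\mathbf q,\omega)=D_{\mathrm{KL}}(\omega\|\mathbf q)\ge 0$, so the comparison holds with no further work.

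The only point needing care is boundary states $\mathbf q$, which Proposition~\ref{prop: KL gap} does not cover. Suppose $q_i=0$ for some $i$. By the explicit formula of Theorem~\ref{th: FT inner}, the term $\max(0,\ln(\omega_i/q_i))$ is $+\infty$, so $T(\omega,\mathbf q)=+\infty$ and $\mathbf q\notin\mathcal S_t^+(\omega)$ for any finite $t$. This is consistent with the fact that Kolmogorov dynamics with bounded generators cannot empty a populated entry in finite time, since $\dot p_i\ge Q_{ii}p_i$ with $|Q_{ii}|$ bounded. Such $\mathbf q$ therefore impose no condition. Alternatively, one can observe directly that $T(\mathbf q,\omega)=\frac1N\sum_{q_i>1/N}\ln(Nq_i)$ is finite, so $T(\mathbf q,\omega)\le T(\omega,\mathbf q)$ holds trivially in the extended sense.

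Combining the interior and boundary cases yields $T(\mathbf q,\omega)\le T(\omega,\mathbf q)$ on the whole simplex. The inclusion $\mathcal S_t^+(\omega)\subseteq\mathcal S_t^-(\omega)$ then holds for all $t>0$. The boundary bookkeeping is the only step beyond a one-line appeal to Proposition~\ref{prop: KL gap}.
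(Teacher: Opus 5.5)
Your proof is correct and follows the paper's argument. Membership in $\mathcal S_t^+(\omega)$ forces $\mathbf q$ to be interior, since $T(\omega,\mathbf q)=\infty$ on the boundary, and Proposition~\ref{prop: KL gap} then gives $T(\mathbf q,\omega)\le T(\omega,\mathbf q)\le t$, so $\mathbf q\in\mathcal S_t^-(\omega)$.
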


\begin{proof}
Let $\mathbf q\in\mathcal S_t^+(\omega)$. The first formula in~\eqref{eq: outgoing-incoming-costs} shows that $\mathbf q$ is
interior whenever $t<\infty$. Equation~\eqref{eq: KL identity} gives
\begin{equation}
    T(\mathbf q,\omega)
    \leq T(\omega,\mathbf q)
    \leq t.
\end{equation}
Hence $\mathbf q\in\mathcal S_t^-(\omega)$ by the definitions
above.
\end{proof}

Proposition~\ref{prop: KL gap} states that taking any interior state
$\mathbf q$ to the uniform state $\omega$ is never slower than
preparing it from $\omega$, and that the excess time is exactly
$D_{\tx{KL}}(\omega\|\mathbf q)$. The two costs differ most strongly
near the boundary: $T(\mathbf q,\omega)$ is bounded on
$\Delta_{N-1}$, as shown in Proposition~\ref{prop: saturation}, whereas
$T(\omega,\mathbf q)$ diverges when any component of $\mathbf q$
vanishes, and thus a boundary state can relax to $\omega$ in finite time but cannot be prepared exactly from $\omega$ in finite time. While the conclusion that a boundary state cannot be prepared from an interior one in a finite time holds true for arbitrary bounded generators, the exact identity~\eqref{eq: KL identity} is specific to the dynamics induced by $\|Q\|^\tx{diag}_1\leq 1$; no corresponding equality is asserted here for a general isotropic norm.
Figure~\ref{fig:ratio_over_time} illustrates this asymmetry. For every
$t>0$,~\eqref{eq:containment} gives $V^-(t)/V^+(t)\geq1$. The ratio
tends to one as $t\rightarrow0$, develops a pronounced excess above one
at intermediate times, and returns to one as both volumes approach
$V_{\mathrm{total}}$; over the plotted intermediate-time window this
excess increases strongly with $N$. The incoming set covers the simplex
after the finite time established in
Proposition~\ref{prop: saturation}, whereas the outgoing set covers it
only in the limit $t\to\infty$.

\begin{figure}[h!]
    \centering
    \includegraphics[width=\linewidth]
    {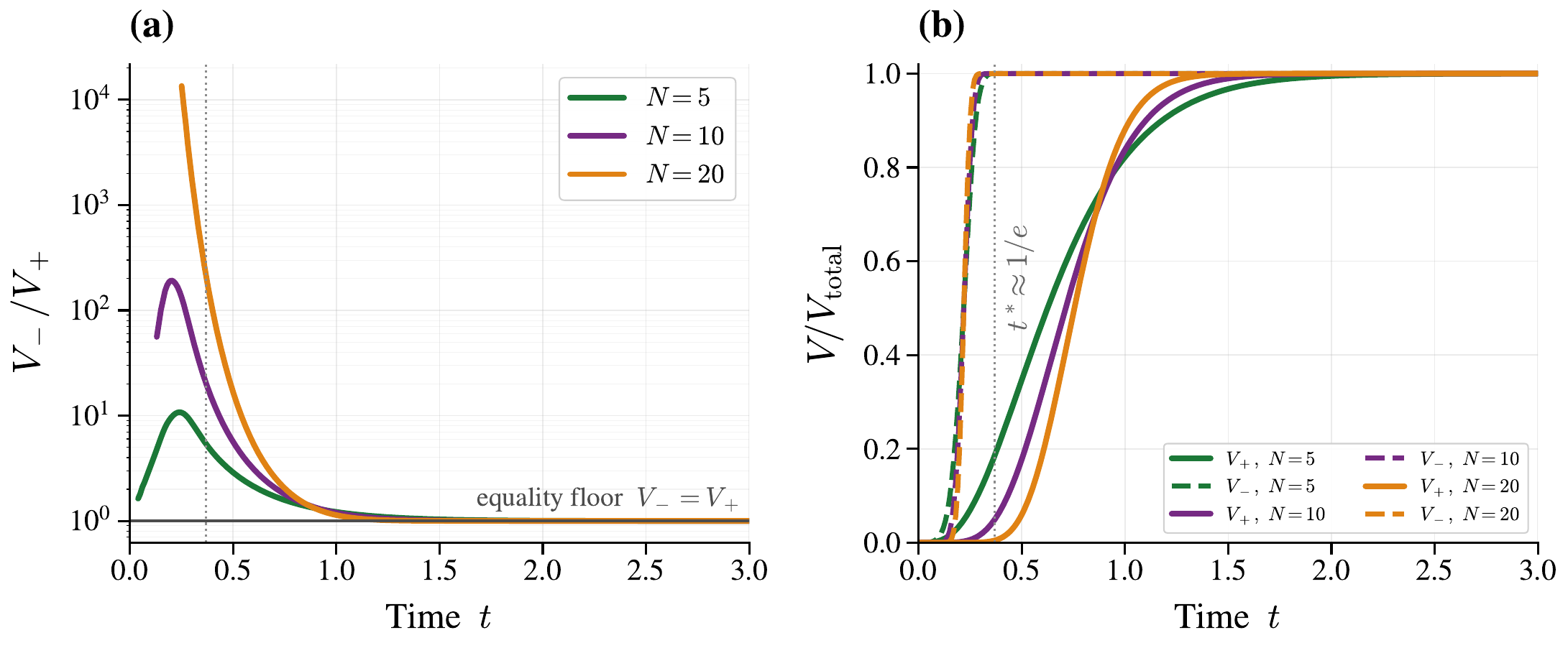}
    \caption{Incoming and outgoing reachable volumes from $\omega$
    under the diagonal-$\ell_1$ constraint
    $\|Q\|_1^{\mathrm{diag}}=|\operatorname{Tr}Q|/N\leq1$ for
    $N=5,10,20$, estimated using $6\times10^{6}$ samples per dimension.
    Panel ({\bf a}) shows $V^-(t)/V^+(t)$ on a logarithmic vertical
    scale. Panel ({\bf b}) shows $V^+(t)/V_{\mathrm{total}}$ (solid
    curves) and $V^-(t)/V_{\mathrm{total}}$ (dashed curves). The vertical
    dotted line at $t=1/e$ is a universal upper bound on the finite-$N$
    incoming covering times and their large-$N$ limit, not the exact
    finite-$N$ value.}
    \label{fig:ratio_over_time}
\end{figure}

The boundedness of $T(\mathbf q,\omega)$ has a direct dynamical
meaning: under the normalization $\tau=1/N$, the worst-case time
required to relax an arbitrary state to the maximally mixed state
remains uniformly bounded as the system dimension grows. This is in
sharp contrast with the outgoing direction, in which states on the
boundary cannot be prepared exactly in finite time. It is easy to show (see Appendix~\eqref{app:results}) that this covering time, after which $\mathcal S_t^-(\omega)=\Delta_{N-1}$, is
\begin{equation}
t_N^*=\max_{1\leq k<N}\frac{k}{N}\ln\frac Nk \longrightarrow \frac{1}{e} \tx{ as \ }N\ra\infty.
\end{equation}

\section{Concluding remarks}
\label{sec:conclusions}

In this work, we take a geometric approach to study the finite-time reachability problem in the probability simplex. In this framework, reachable sets and optimal trajectories correspond, respectively, to the metric balls and geodesic curves of a quasi-distance giving the minimum time to join pairs of states. This quasi-distance has a Finsler-geometric origin, and we show it to be completely determined by the particular norm that bounds the generators. 
In particular, we explicitly characterize the quasi-distances arising when the allowed dynamics have bounded average and maximum collective decay rates, and show that the corresponding reachable sets constitute, respectively, inner and outer bounds to the reachable sets due to dynamics with generators bounded in any other isotropic norm. We also present rigorous approximations of the volumes of such finite-time reachable sets, which confirm the tightness of the proposed bounds after a dimension-independent critical time.  Moreover, some of the discussed quasi-distances can be related to entropic quantities, which give novel insight on the entropy production limits in relaxation processes. In the case of dynamics bounded in the diagonal-$\ell_1$ norm, the difference between the minimum time to go from an arbitrary state to the maximally mixed one and that of the converse process happens to be their Kullback-Leibler relative entropy $D_{\tx{KL}}(\omega\|\mathbf q)$.

It is worth mentioning some directions of future work, which could complement the present results. First, since the characterization of the Finsler metric does not presuppose a particular norm, it would be interesting to extend these results to dynamics bounded on arbitrary norms on the \textit{individual} decay rates. Furthermore, it would be compelling to investigate situations in which the considered dynamics belong to a smaller subspace of generators, motivated by particular physical conditions or concrete experimental setups, possibly leading to results of more focused applicability.

Finally, we want to mention the quantum counterpart of the problem, namely, the description of the reachable set of quantum states under quantum Markovian dynamics. The quantum \textit{infinite-time} reachability problem has been studied in the context of quantum control theory, e.g. ~\cite{Cai2025,malvetti2024reachability,koch2022quantum} and,  notably, some of its features can be translated to the simplex, as noted in~\cite{malvetti2024reachability} and~\cite{schulte2019}. Similarly, some aspects of the quantum \textit{finite-time} reachability problem can be read from its classical analogue, see our successive publication~\cite{UsInPrep,PZ26}. Along these lines, it would be interesting to investigate whether the characterization given in this work implies any majorization conditions for quantum states, akin to those used in catalytic state transformations \cite{LipkaBartosik2024}.
\\

\normalfont
\ack{It is a pleasure to thank 
Dariusz Chru{\'s}ci{\'n}ski, 
Kamil Korzekwa,
Oskar Pro{\'s}niak and Fereshte Shahbeigi for numerous 
discussions which stimulated this work and fruitful interaction. 
We are grateful to Jan {\.Z}yczkowski for his help in preparation of Fig.~\ref{fig:riemann_finsler}.}

\funding{
~We acknowledge financial support by the European Union under ERC Advanced Grant TAtypic, Project No. 101142236. MH acknowledges that the study was funded by “Research support module” as part of the “Excellence Initiative - Research University” program at the Jagiellonian University in Kraków.}


\appendix

\section{Water redistribution model for bounded Kolmogorov dynamics}
\label{app: toy}

The discussed problem of optimal reachability with bounded generators can be intuitively rephrased as a task of water redistribution between deposits, using pumps that have limited draining power. Such hydraulic analogies are of frequent use within the analysis of Markov dynamics \cite{Harchol-Balter_2023}.

The analogy builds on thinking about states $p$ as specifying possible distributions of a fixed amount of water among $N$ tanks, with the entry $p_i$ indicating the filling fraction of the $i^\textrm{th}$ tank. The speed at which a deposit drains or fills can be read from the Kolmogorov equation,
\begin{equation}
    \frac{dp_i}{dt} = \underbrace{Q_{ii}p_i}_{\tx{Draining}} + \underbrace{\Large{\Sigma}_{j\neq i}Q_{ij}p_j}_{\tx{Filling}}
\end{equation}
We can think of each tank being drained by a corresponding pump, all of them converging into a common distributor from which return pipes controlled by valves share back the drained water into the tanks. Under isotropic norms only the diagonal entries of the generator are directly bounded and hence, from the point of view of resource theory, the draining cost of pump $i$ is proportional to $Q_{ii}$ while usage of return valves is free of cost. Note that the asymmetry of Kolmogorov dynamics is now apparent, since an empty tank can gain water from the return valves but, as the draining speed is proportional to the water level itself, a tank never empties in a finite time under fixed pump power $Q_{ii}$. In this setting, the finite-time reachability problem exactly translates to:

\

\noindent\textit{``Starting from an initial water configuration ${\bf{p}}$, what is the shortest time to achieve the target configuration ${\bf{q}}$ if the total pumping cost is limited, $\|Q\|_1^\tx{diag} = \sum_i|Q_{ii}|\leq 1$?''}

\

The following two illustrative examples make use of this hydraulic analogy to display the studied asymmetry between incoming and outgoing times in more familiar terms.

\setcounter{example}{0}
\begin{example}[Extremal $\ra$ flat water distribution]
\label{ex:water_1}
    Consider $N=3$ tanks. Initially, the first tank holds all the water $({\bf{p}}=[1,0,0]^T)$ and we wish to share it equally $({\bf{q}}=[1/3,1/3,1/3]^T)$ in the shortest time possible under the condition that the pumping cost is limited, $|Q_{11}| + |Q_{22}| + |Q_{33}|\leq 1$. Only the first tank is above its target, so just its draining pump needs to be activated. Setting $Q_{11}=1$ (the maximum possible power) and $Q_{22},Q_{33}=0$ and redistributing the collected water equally among tanks $2$ and $3$, the water distribution evolves as
    \begin{equation}
        {\bf{p}}(t) = [e^{-t},(1-e^{-t})/2,(1-e^{-t})/2]^T,
    \end{equation}
    achieving ${\bf{q}} = [1/3,1/3,1/3]^T$ at $t=\ln 3$, coinciding with the minimal time $T({\bf{p}},{\bf{q}})$ predicted by Theorem \ref{th: FT inner}. The resulting time is finite as expected, since no attempt to empty a tank has been made in the process.
    
\end{example}

\begin{example}[Flat $\ra$ extremal water distribution]
\label{ex:water_2}
    Consider now the converse situation, i.e. initially the water is evenly distributed $({\bf{q}}=[1/3,1/3,1/3]^T)$ and we wish to fill the first tank $({\bf{p}}_\varepsilon=[1-2\varepsilon,\varepsilon,\varepsilon]^T)$ in the shortest time possible under the condition that the total pumping cost is limited, $|Q_{11}| + |Q_{22}| + |Q_{33}|\leq 1$. In this case, the second and third tanks need to decrease their levels to $\varepsilon$, so putting $Q_{22}=Q_{33}=1/2$ and pouring the drained water into tank $1$, the water levels evolve as
    \begin{equation}
        {\bf{p}}(t) = [1-2e^{-t/2}/3,e^{-t/2}/3,e^{-t/2}/3]^T,
    \end{equation}
    achieving ${\bf{p}}_\varepsilon = [1-2\varepsilon,\varepsilon,\varepsilon]^T$ at $t=2\ln (1/3\varepsilon)$, which diverges as $\varepsilon\ra 0$: the emptier tanks 2 and 3 are required to become, the longer it takes. Note that this coincides with the minimal time $T({\bf{q}},{\bf{p}}_\varepsilon)$ predicted by Theorem \ref{th: FT inner}.
\end{example}
Fig.~\ref{fig:water_app} shows the evolutions corresponding to the previous examples in the probability simplex.

\begin{figure}[h!]
    \centering
\includegraphics[width=0.9\linewidth]{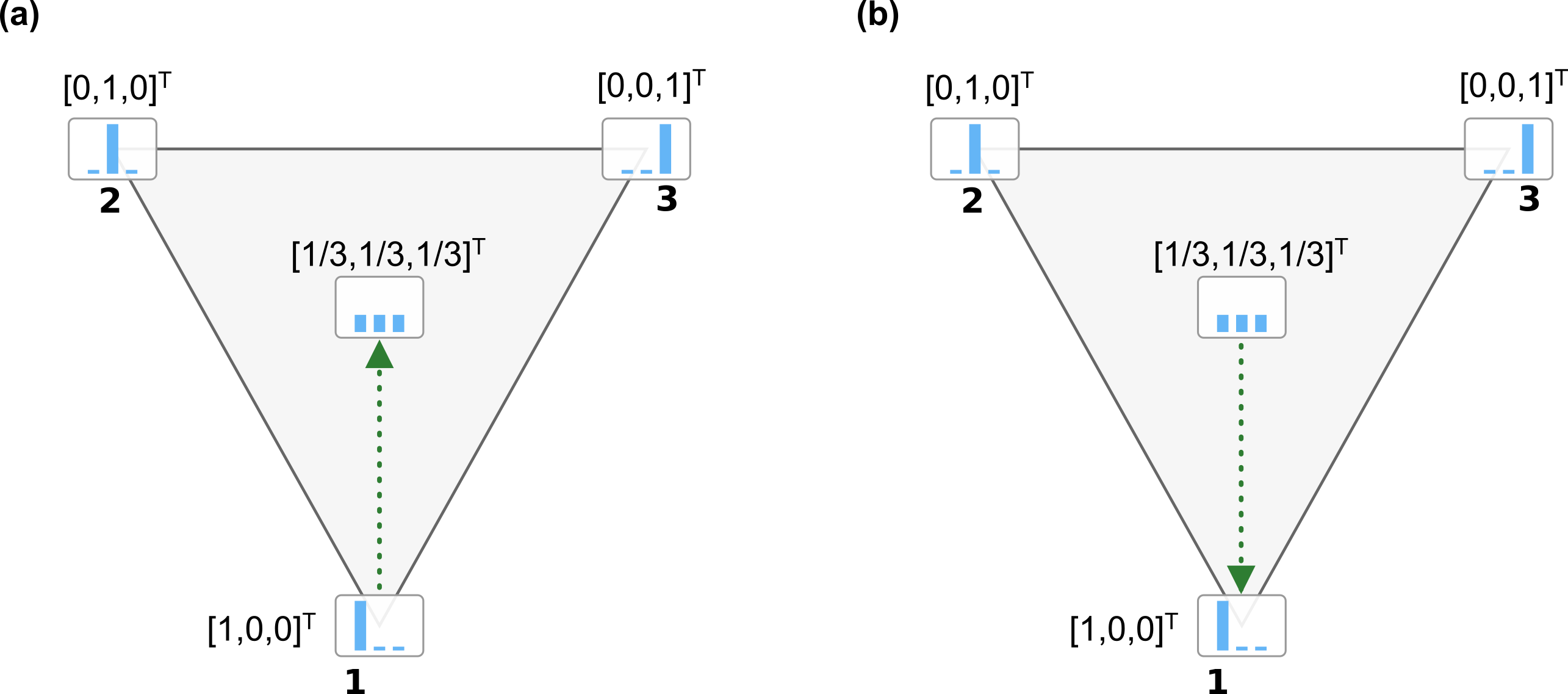}
    \caption{Green (dotted) arrow: evolution from the extremal state \([1,0,0]^T\) to the flat state \([1/3,1/3,1/3]^T\) in (a), and the reverse evolution in (b), corresponding to Examples~\ref{ex:water_1} and~\ref{ex:water_2}, respectively. The blue bars indicate the water level in each tank. }
    \label{fig:water_app}
\end{figure}


\section{Proofs of Section~\ref{sec:preliminaries}}
\label{app:prelim}

\noindent\bf{Proposition \ref{th: gen}}

\begin{proof}\label{pr: gen}
    An arbitrary Kolmogorov matrix $Q$ can be expanded as $Q = \sum_{ij}Q_{ij}E_{ij}$. Because of the constraint $Q_{ii} = -\sum_{j\neq i} Q_{ij}$, one has $Q = \sum_{i\neq j}Q_{ij} E_{ij}-\sum_{i\neq j}Q_{ij}E_{jj} = \sum_{i\neq j}Q_{ij}E_{(ij)}$. Since the $N(N-1)$ real numbers $\{Q_{ij}\}_{i\neq j}$ are non-negative, the previous equation defines $\mk{class}$ as a convex cone in $\mathbb{R}^{N(N-1)}$ spanned by the $\{E_{(ij)}\}_{i\neq j}$ and, since these are linearly independent (l.i.), the cone is simplicial.

    Observe that if, for $Q,Q'\in\mk{gen}$ we have $Q'-Q\in\mk{gen}$, then $Q'_{ij}-Q_{ij}\geq 0 \ \forall i\neq j$, so $Q'-Q\in\mk{gen} \then Q\leq_I Q'$. On the other hand, $Q\leq_I Q' \then Q'_{ij}- Q_{ij}\geq 0$ while $\sum_i(Q_{ij}'-Q_{ij})=0$ as $Q,Q'\in\mk{gen}$, so $Q\leq_IQ' \then Q'-Q\in\mk{gen}$. Hence, the relation $\leq_I$ is the partial order compatible with the cone $\mk{gen}$. However $\leq_C$, although reflexive and transitive, fails to be antisymmetric: for $N\geq 3$ there exist rate matrices with same collective decay rates but different individual ones, as
    \begin{equation}
        Q=
        \begin{bmatrix}
            -1 & . & .\\
            1 & -1 & .\\
            . & 1 & .
        \end{bmatrix} \tx{ and } Q'=
        \begin{bmatrix}
            -1 & . & .\\
            . & -1 & .\\
            1 & 1 & .
        \end{bmatrix}
    \end{equation}
    on $N=3$, so $\leq_C$ is just a pre-order on $\mk{gen}$. Still, it follows easily that $Q\leq_IQ' \then Q_{ij} \leq Q_{ij}' \forall i\neq j \then \sum_{i \neq j} Q_{ij} \leq \sum_{i \neq j}Q_{ij}' \forall i \then |Q_{ii}|\leq |Q_{ii}'| \then Q\leq_C Q'$. 
\end{proof}

\noindent\bf{Proposition \ref{th: gen(p,v)1}}

\begin{proof}\label{pr: gen(p,v)}
    Consider now two arbitrary interior states ${\bf{p}},{\bf{q}}$ and a given ${\bf{v}}$ \, \footnote{We really consider two vectors, ${\bf{v}}\in T_{\bf{p}}\Delta_{N-1}$ and its parallel transport to $T_{\bf{q}}\Delta_{N-1}$. However, the simplex is flat w.r.t. the standard connection in $\R^N$, parallel transport is path-independent and hence unique, and both ${\bf{v}}$ and its translation keep the same coordinates; so in an abuse of notation we refer to both vectors with the same symbol ${\bf{v}}$.}. The matrix $\mathsf{M}_{{\bf{p}}{\bf{q}}, \, ij} := \delta_{ij}p_i/q_i$ is invertible and we see that if $Q\in \mk{gen}({\bf{p}},{\bf{v}})$, then
    \begin{equation}
        [{Q\mathsf{M}_{\bf{pq}}}]_{ij} \geq 0 \ \forall i\neq j, \ \sum_{i=1}^N{[Q\mathsf{M}_{\bf{pq}}]}_{ij} = \sum_{i,k=0}^{N}Q_{ik}[{\mathsf{M}_{\bf{pq}}}]_{kj} = 1 \tx{ and } Q\mathsf{M}_{\bf{pq}}{\bf{q}} = Q{\bf{p}} = {\bf{v}}
    \end{equation}
    so $Q\mathsf{M}_{\bf{pq}} \in \mk{gen}({\bf{q}},{\bf{v}})$. That is, for invertible states and a given ${\bf{v}}$, the sets $\mk{gen}({\bf{p}},{\bf{v}})$ are all linearly isomorphic, and we may restrict our attention to $\mk{gen}(\omega,{\bf{v}})$ as some sort of ``canonical example''.
\end{proof}

\noindent\bf{Proposition \ref{th: gen(p,v)2}}

\begin{proof}
    
    When ${\bf{v}}={\bf{0}}$, $\mk{gen}(\mathbf{p},\mathbf{v})$ is just a cone ($Q\in \mk{gen}(\omega,{\bf{0}}) \then \lambda Q\in\mk{gen}(\omega,{\bf{0}})$ for $\lambda\geq 0$). 
    Being defined by the intersection of a linear subspace with $\mk{gen}$, the cone $\mk{gen}(\omega, {\bf{0}})$ is also pointed and inherits $\leq_I$ as its partial order, with $Q=0$ being its tip. Through the isomorphism $\ms{M}_{\omega{\bf{p}}}$ these conclusions hold for $\mk{gen}({\bf{p}},{\bf{0}})$ with arbitrary invertible ${\bf{p}}$.

    \
    
    For nonzero ${\bf{v}}$, the set $\mk{gen}(\omega,{\bf{v}})$ is the intersection of the $(N-1)^2$-dimensional affine subspace of matrices with column-sum 0 ($N$ constraints) and  $Q\omega={\bf{v}}$ ($N-1$ l.i. constraints), with the larger $\mk{gen}$, which ensures positivity. Their intersection is full-dimensional since, as can easily be seen, if $Q\in\mk{gen}(\omega,{\bf{v}})$, then the $(N-1)^2$-dimensional $Q+\mk{gen}(\omega,{\bf{0}})$ must also be contained. However, $\mk{gen}(\omega,{\bf{v}})$ is no longer a cone (not even affine-translated), as it has more than one extremal point: a counterexample in $N=4$ is given by the matrices $Q := 2E_{(24)} + E_{(31)} + 2E_{(34)} \tx{ and } Q' := E_{(24)} + E_{(21)} + 3E_{(34)}$, which are both extremal points of $\mk{gen}(\omega,{\bf{v}})$ with ${\bf{v}} = [-\frac{1}{4},\frac{1}{2},\frac{3}{4},-1]^T$. Thus, for nonzero ${\bf{v}}$, the partial order $\leq_I$ does not have a smallest element in $\mk{gen}(\omega,{\bf{v}})$. Still, we can show that the pre-order $\leq_C$ does have a set of smallest elements, i.e. matrices $Q$ such that $Q\leq_C Q'$ for any other $Q'\in\mk{gen}(\omega,{\bf{v}})$. Take, without loss of generality, ${\bf{v}}$ such that $v_i<0$ for $i\in\{1,2,...,n^-\}$ and the rest $v_i\geq 0$. An arbitrary $Q \in \mk{gen}(\omega,{\bf{v}})$ has the block decomposition
    \begin{equation}
        Q=
        \begin{bmatrix}
            -D & A\\
            A' & -D'
        \end{bmatrix}.
    \end{equation}
    with $D,D'$ having their diagonals non-negative, and all entries in $A,A'$ non-negative. The collective decay rates must be $Q_{ii} \leq Nv_i$ for $i\in I_- =\{1,...,n^-\}$ and $Q_{ii}\leq 0$ for $i\in I_+=\{n^-+1,...,N\}$. Then, the minimal choice w.r.t. $\leq_C$ is realized by generators of the form
    \begin{equation}
        Q =
        \begin{bmatrix}
            -D & 0\\
            WD & 0\\
        \end{bmatrix},
    \end{equation}
    where $D_{ij}=N\delta_{ij}|v_i|$ for $i,j\in I_-$ and $W_{ij}$ an entrywise nonnegative matrix satisfying $\sum_{j\in I_-}W_{ij}v_j = -v_i \ \forall i\in I_+$ and $\sum_{i\in I_+}W_{ij} = 1$. These conditions make the set of $W$ matrices a convex polytope of dimension $\dim(\R^{n^+n^-}) - (n^+ + n^- -1) = (n^+-1)(n^--1)$, coming from the fact that there is one linear dependency between the $n^++n^-$ constraints. Since any other generator $Q'$ in $\mk{gen}(\omega,{\bf{v}})$ will have diagonal entries satisfying $|Q_{ii}| \geq N|v_i|$ for $i \in I_-$, and $|Q_{ii}|\geq 0$ for $i \in I_+$, indeed we will have that $Q\leq_C Q'$ when $Q$ is of the above form. Finally, through the isomorphism $\ms{M}_{\omega{\bf{p}}}$ these conclusions hold for $\mk{gen}({\bf{p}},{\bf{v}})$ with arbitrary invertible ${\bf{p}}$, and we obtain the form of the minimal elements claimed on Proposition~\ref{th: gen(p,v)2}.
\end{proof}


\noindent\bf{Proposition \ref{th: T quasidistance}}

\begin{proof}\label{pr: T quasidistance}
    The first two conditions follow easily from the definition (\ref{eq: Tquasimetrics}). In addition, $T(\cdot,\cdot)$ has to obey the triangular inequality, as if not, the time-inhomogeneous families of rate matrices first leading from the initial ${\bf{p}}$ to the intermediate ${\bf{r}}$ and latter to ${\bf{q}}$, when concatenated, would constitute an evolution leading from ${\bf{p}}$ to ${\bf{q}}$ in a time smaller than $T({\bf{p}},{\bf{q}})$, contradicting the minimization in (\ref{eq: Tquasimetrics}). Note that the same proof does not apply if we had defined $T(\cdot,\cdot)$ via time-independent generators, as such a concatenation would not be a valid evolution.
\end{proof}

\noindent\bf{Theorem \ref{th: time norm}}

\begin{proof}\label{pr: time norm}
    We first show that $F({\bf{p}},{\bf{v}})$ is a $C^0$-Finsler metric, i.e. a (possibly asymmetric) norm on ${\bf{v}}$ for fixed ${\bf{p}}$, and continuous in ${\bf{p}}$. It is easy to compute the time quasi-distance between infinitesimally close states,
    \begin{equation}
        \lim_{h\ra 0^+}\frac{T({\bf{p}},{\bf{p}}+h{\bf{v}})}{h} = \min_Q \lambda~\tx{ s.t. } \ {\bf{v}}=Q{\bf{p}} \ \tx{ with } \ \|Q\|\leq \lambda \ \ = \!\!\!\! \min_{\mk{gen}({\bf{p}},{\bf{v}})} \!\! \|Q\| = F({\bf{p}},{\bf{v}}).
    \end{equation}
    Thus, from $T(\cdot,\cdot)$ being a quasi-distance (a standard distance except for the symmetry requirement), we infer that $F({\bf{p}},\cdot)$ obeys the axioms of an asymmetric norm on the tangent space of ${\bf{p}}$ (a standard norm except for the symmetry requirement):
    \begin{itemize}
        \item $F({\bf{p}},{\bf{v}})>0 \ \forall {\bf{v}}\neq {\bf{0}}$, following from the positivity of $T(\cdot,\cdot)$.

        \item $F({\bf{p}},\alpha {\bf{v}}) = \alpha F({\bf{p}},{\bf{v}})$ for positive $\alpha$, which follows from $h\mapsto h/\alpha$ in the above limit.

        \item $F({\bf{p}},{\bf{v}}+{\bf{w}}) \leq F({\bf{p}},{\bf{v}})+F({\bf{p}},{\bf{w}})$, due to the fact that $T(\cdot,\cdot)$ satisfies the triangular inequality.
    \end{itemize}
    We now need to show continuity in 
    in ${\bf{p}}$, which we do in two steps. Consider given ${\bf{p}},{\bf{v}}$. First, note that for any sequence of close states $\{{\bf{p}}_n\}\ra{\bf{p}}$ and some $Q\in\mk{gen}({\bf{p}},{\bf{v}})$, there exists\footnote{Consider the following construction. The valuation map $Q\mapsto\Phi_{\bf{p}}(Q):=Q{\bf{p}}$ is many-to-one, so take e.g. its Moore-Penrose pseudoinverse $\Phi_{\bf{p}}^+$. The matrices $Q_n:=Q-\Phi_{{\bf{p}}_n}^+(Q{\bf{p}}_n-{\bf{v}})$ all act as $Q_{n}{\bf{p}}_n={\bf{v}}$ and $\{Q_n\}\ra Q$.} a corresponding sequence of matrices $\{Q_n\}$ that converges to $Q$ and, thus, $\|Q_n\|\ra \|Q\|$. Taking $Q$ to be a minimal element in $\mk{gen}({\bf{p}},{\bf{v}})$, we conclude,
    \begin{equation}\label{eq: limsup}
        F({\bf{p}}_n,{\bf{v}})=\min_{\mk{gen}({\bf{p}}_n,{\bf{v}})}\|Q\|\leq \|Q_n\| \then \limsup_{n\ra\infty} F({\bf{p}}_n,{\bf{v}}) \leq \limsup_{n\ra\infty}\|Q_n\| = \|Q\| = F({\bf{p}},{\bf{v}}).
    \end{equation}
    On the other hand, consider again $\{{\bf{p}}_n\}\ra{\bf{p}}$ but let now $\{Q^*_n\}$ be a sequence of minimal elements in the corresponding sets $\mk{gen}({\bf{p}}_n,{\bf{v}})$. Let $L=\liminf_{n\ra\infty}\|Q^*_n\|$, and let $\{Q_{n_k}^*\}$ be the subsequence that realizes it. It converges in norm to $L$, so it is bounded and the Heine-Borel theorem assures the existence of a further subsequence $\{Q_{n_{k'}}^*\}$ which converges to some $Q^*$ with norm $\|Q^*\|=L$ (which lies in $\mk{gen}({\bf{p}},{\bf{v}})$, as $\lim_{k'\ra\infty}Q^*_{n_{k'}}=Q^*$ implies $\lim_{k'\ra\infty}Q^*_{n_{k'}}{\bf{p}}=Q^*{\bf{p}}=\bf{v}$). Hence,
    \begin{equation}\label{eq: liminf}
        \liminf_{n\ra\infty} F({\bf{p}}_n,{\bf{v}}) = \liminf_{n\ra\infty} \|Q^*_n\| =L = \lim_{k'\ra\infty}\|Q^*_{n_k'}\|=\|Q^*\|\geq \min_{\mk{gen}({\bf{p}},{\bf{v}})}\|Q\| = F({\bf{p}},{\bf{v}}).
    \end{equation}
    From \eqref{eq: limsup} and \eqref{eq: liminf} we conclude $\lim_{n\ra\infty}F({\bf{p}}_n,{\bf{v}}) = F({\bf{p}},{\bf{v}})$ and, hence, continuity on ${\bf{p}}$.

    \

    Having proven that $F$ is indeed a Finsler metric, let us address its compatibility with $T$. Consider the geodesic minimization over paths $\gamma:[0,1]\ra \Delta_{N-1}$ with ${\bf{p}}$ and ${\bf{q}}$ as endpoints.
    \begin{align}
        \min_{\substack{\gamma \ \tx{s.t.}\\ \gamma(0)={\bf{p}} \\ \gamma(1) = {\bf{q}}}} L[\gamma] = \min_{\substack{\gamma \ \tx{s.t.}\\ \gamma(0)={\bf{p}} \\ \gamma(1) = {\bf{q}}}} \int_0^1 F(\gamma(s),\dot\gamma(s))
    \end{align}
    The length functional is reparameterization-invariant so, in principle, we may parameterize each curve, $\bar \gamma(t) = \gamma(s(t))$, in such a way that $F(\bar\gamma(t),d\bar\gamma/dt)=1$, i.e. all tangent vectors have Finsler norm 1 in the new parameterization. Thus, each curve now reaches the target state ${\bf{q}}$ at a particular parameter unit $t$, and the geodesic minimization becomes
    \begin{align}
        \min_{\substack{\bar\gamma \ \tx{s.t.}\\\bar\gamma(0)={\bf{p}} \\ \bar\gamma(t)={\bf{q}}}} \! t \ \tx{ subject \ to } \!\!\!\!\!\!\!\!\!\ \min_{\mk{gen}(\bar\gamma(s),\dot{\bar\gamma}(s))} \!\!\!\!\!\!\!\! \|Q\| = 1
    \end{align}
    Since any curve can be reproduced by a suitable time-dependent Kolmogorov matrix, let us reformulate the problem as
    \begin{align}
        \min_{\substack{Q_s \ \tx{s.t.}\\{\bf{q}} = \mc{T}e^{\int_0^tQ_sds}{\bf{p}}}} \!\!\!\!\!\! t \ \tx{ subject \ to } \!\!\!\!\!\!\!\!\!\ \min_{\mk{gen}({\bf{p}}_s,Q_s({\bf{p}}_s)} \!\!\!\!\!\!\!\! \|Q\| = 1
    \end{align}
    Now, the condition $\min_{\mk{gen}({\bf{p}}_s,Q_s({\bf{p}}_s)} \|Q\| = 1$ imposes that necessarily $\|Q_s\|\geq 1 \forall s$. Furthermore, if we assume that the time-dependent generator that solves the problem has $\|Q_s\|>1$ during some interval, we arrive at a contradiction (define $\bar{Q}_s:=Q_s/\|Q_s\|$, joining ${\bf{p}}$ and ${\bf{q}}$ in a smaller time), so minimizers have $\|Q_s\|=1$, and this last condition may be imposed to the problem without modifying its solutions,
    \begin{align}\label{eq: geodesic_min_time}
        \min_{\substack{\gamma \ \tx{s.t.}\\ \gamma(0)={\bf{p}} \\ \gamma(1) = {\bf{q}}}} L[\gamma] =  \!\!\!\!\!\!\!\!\! 
        \min_{\substack{Q_s \ \tx{s.t.}\\{\bf{q}} = \mc{T}e^{\int_0^tQ_sds}{\bf{p}}}} \!\!\!\!\!\! t \ \tx{ subject \ to } \!\!\!\!\!\!\!\!\!\ \min_{\mk{gen}({\bf{p}}_s,Q_s({\bf{p}}_s)} \!\!\!\!\!\!\!\! \|Q\| = 1 \ \tx{ and } \ \|Q_s\|=1.
    \end{align}
    On the other hand, if we consider the time minimization problem that defines the time quasi-distance $T({\bf{p}},{\bf{q}})$,
    \begin{equation}
        T({\bf{p}}, {\bf{q}}):= \!\!\!\!\!\!\!\!\! \min_{\substack{Q_s \ \tx{s.t.}\\{\bf{q}} = \mc{T}e^{\int_0^tQ_sds}{\bf{p}}}} \!\!\!\!\!\! t \ \tx{ subject \ to } \ \|Q_s\| = 1,
    \end{equation}
    it is easy to see that the normalization condition implies $\min_{\mk{gen}({{\bf{p}}_s,Q({\bf{p}}_s)})}\|Q\|\leq 1$. If we assume the solution $Q_s$ to have $\min_{\mk{gen}({{\bf{p}}_s,Q({\bf{p}}_s)})}\|Q\|< 1$ we arrive at a contradiction (there would exist a $Q_s'\in\mk{gen}({\bf{p}}_s,Q_s({\bf{p}}_s))$ giving exactly the same trajectory, but with a norm $\|Q_s'\|<1$. Hence, $Q_s'':=Q_s'/\|Q_s'\|$ would be a Kolmogorov generator satisfying all constrains and joining ${\bf{p}},{\bf{q}}$ in a time no larger than the putative minimizer $Q_s$). Thus we conclude that the solutions to this minimization obey $\min_{\mk{gen}({{\bf{p}}_s,Q({\bf{p}}_s)})}\|Q\|= 1$, and this additional condition can be imposed without harm. But then the time minimization problem takes precisely the form \eqref{eq: geodesic_min_time} of the geodesic problem. Hence, both the geodesic and the time-minimization problems are equivalent, and
    \begin{equation}
        T({\bf{p}},{\bf{q}}) = \min_{\substack{\bar\gamma \ \tx{s.t.}\\\bar\gamma(0)={\bf{p}} \\ \bar\gamma(1)={\bf{q}}}} L[\bar\gamma],
    \end{equation}
    that is, the geodesic quasi-distance defined by the Finsler metric $F({\bf{p}},{\bf{v}})$ is $T({\bf{p}},{\bf{q}})$.
\end{proof}

\section{Proofs of Section~\ref{sec:results}}
\label{app:results}



\noindent\bf{Proposition \ref{th: TS outer}}

\begin{proof}\label{pr: TS outer}
    As seen, all Finsler metrics induced by isotropic norms $\|\cdot\|$ of same time-scale $\tau$ behave equally on the single-negative-entry (say, $v_i$) chambers of $T_\textbf{p}\Delta_{N-1}$, reducing to $F({\bf{p}},{\bf{v}}) = \tau |v_i|/p_i$, and so, all leading to the same largest velocities on this chambers: $v_i = -p_i/\tau$ and other entries arbitrary as long as non-negative and adding up to $|v_i|$. The inner and outer bounds will be derived as the ``tightest'' and ``largest'' convex sets consistent with these constraints. First, note that the extremal velocities in the negative-$i$ chamber will have only a single positive entry (say $v_j$) and the $N-2$ remaining being 0. Denoting it as ${\bf{v}}_{(ij)}$, it has components
    \begin{equation}
        {v_{(ij)}}_k :=
        \begin{cases}
            -p_i/\tau & \tx{ if } i=k \\
            p_i/\tau & \tx{ if } j=k\\
            0 & \tx{ otherwise}
        \end{cases}
    \end{equation}
    There are $N-1$ such extremals for each of the $N$ single-negative-entry chambers. That the convex hull of these $N(N-1)$ extremal velocities forms an inner bound to $\mc{J}_\textbf{p}$ follows by convexity of the latter. Now, the Finsler metric associated to $\|Q\|_1^\tx{diag}$ is $F_1({\bf{p}},{\bf{v}})=\tau\sum_{v_i\leq 0}|v_i|/p_i$, and it is easy to see that all the extremal vectors of $\mc{J}^\tx{inner}_{\bf{p}}$ have $F({\bf{p}},{\bf{v}}) = 1$ and, due to the linearity of $F$, we conclude that their convex hull coincides with its unit ball -- the set of achievable velocities under $\|Q\|^\tx{diag}_1\leq 1$. On the other hand, we may consider the $N$ half spaces $v_i \geq -p_i/\tau$ that support the largest velocities at each of the $N$ single-negative-entry chamber. Their intersection,
    \begin{equation}
        \mc{J}^\tx{outer}_{\bf{p}} := \{{\bf{v}}\in T_{\bf{p}}\Delta_{N-1} \tx{ s.t. } v_i\geq -p_i/\tau\} = \frac{1}{\tau}(\Delta_{N-1}-{\bf{p}}),
    \end{equation}
    which supports $\mc{J}_{\bf{p}}$ is, hence, the largest convex set enclosing it. The Finsler metric associated to $\|Q\|_\infty^\tx{diag}$ is $F_\infty({\bf{p}},{\bf{v}})=\tau\max_{v_i\leq 0}|v_i|/p_i$, and it is easy to see that $F({\bf{p}},{\bf{v}})\leq 1$ for vectors ${\bf{v}}\in\mc{J}^\tx{outer}_{\bf{p}}$ and, thus, its unit ball (i.e. the achievable velocities under $\|Q\|^\tx{diag}_\infty\leq 1$) encloses or equals the latter set. Since $\mc{J}^\tx{outer}_{\bf{p}}$ is shown to be outer bound for any isotropic norm of time-scale $\tau$, we conclude they must be equal.
    
\end{proof}

\noindent\bf{Theorem \ref{th: FT inner}}

\begin{proof}\label{pr: FT inner}
    First, consider the function $T_\alpha({\bf{p}},{\bf{q}}):=\tau\left(\sum_{i=1}^N\max\left(0,\ln\frac{p_i}{q_i}\right)^\alpha\right)^{1/\alpha}$. We will prove it is indeed a quasi-distance on the simplex $\Delta_{N-1}$:
    \begin{itemize}
        \item \textit{Non‑negativity} is ensured by construction.

        \item \textit{Point-separability:} If $\textbf{q} = \textbf{p}$, then $\ln\frac{p_i}{q_i} = 0 \ \forall i$; so $T_\alpha(\textbf{q},\textbf{p})=0$. And if $T(\textbf{q},\textbf{p})=0$, since each term in the sum (\ref{eq: InnerFinslergen}) is non‑negative, they must have $\ln\frac{p_i}{q_i}=0$, i.e.\ $p_i = q_i$, for every $i$ with $q_i \le p_i$. For indices where $q_i > p_i$, we would have $p_i - q_i < 0$ but, since $\sum_i p_i = \sum_i q_i = 1$, it follows that $\sum_{i: q_i > p_i} (p_i - q_i) = 0$, which forces the set $\{i : q_i > p_i\}$ to be empty. Hence $q_i = p_i$ for all $i$, and $\textbf{q} = \textbf{p}$.

        \item \textit{Triangle inequality:} For any $\textbf{p},\textbf{r},\textbf{q} \in \Delta_{N-1}$,
        \begin{align}\nonumber
            &T_\alpha({\bf{p}},{\bf{q}})=\tau\left(\sum_{i=1}^N\max\left(0,\ln\frac{p_i}{q_i}\right)^\alpha\right)^{1/\alpha} = \tau\left(\sum_{i=1}^N\max\left(0,\ln\frac{p_i}{r_i} + \ln\frac{r_i}{q_i}\right)^\alpha\right)^{1/\alpha}\\
            &\leq \tau\left(\sum_{i=1}^N\max\left(0,\ln\frac{p_i}{r_i}\right)^\alpha\right)^{1/\alpha} + \tau\left(\sum_{i=1}^N\max\left(0,\ln\frac{r_i}{q_i}\right)^\alpha\right)^{1/\alpha} = T_\alpha(\textbf{p},\textbf{r}) + T_\alpha(\textbf{r},\textbf{q}).
        \end{align}
    \end{itemize}
    Furthermore, for infinitesimally close states
    \begin{equation}
        \lim_{h\ra 0^+} \frac{T_\alpha(\textbf{p},\textbf{p} + h\textbf{v})}{h} = \tau \lim_{h\ra 0^+} \frac{1}{h}\left(\sum_{i \tx{ s.t. } v_i\leq 0} \!\!\!\!\! \ln(1+hv_i/p_i)^p \right)^{1/\alpha}= \tau\left(\sum_{v_i\leq 0}\left|\frac{v_i}{p_i}\right|^\alpha\right)^{1/\alpha},
    \end{equation}
    One can see that, according to \eqref{eq: TFinslernorm}, this Finsler metric is precisely that due to dynamics bounded by $\|Q\|_\alpha^\tx{ diag}:=\tau\left(\sum_i |Q_{ii}|^\alpha\right)^{1/\alpha}\leq1$. 
    But from this it does not directly follow that the geodesic quasi-distance of this $F$ coincides with the proposed $T_\alpha(\cdot,\cdot)$, as two different quasi-distances may share the same infinitesimal behavior. To show their equality, we have to prove that $T(\bf{p},\bf{q})$ is indeed the length of the shortest path joining $\bf{p}$ to $\bf{q}$ according to $F$. For arbitrary initial and target interior states, we can see that any curve $\gamma$ (parameterized so that it departs ${\bf{p}}$ at $s=0$ and reaches ${\bf{q}}$ at $s=1$) has a length
    \begin{align}
        L[\gamma] &= \int_0^1 F(\bar\gamma(s),\dot{\bar\gamma}(s)) = \tau \int_0^1 \left(\sum^N_{i=1}\max\left(0,-\frac{\dot p_i(s)}{p_i(s)}\right)^\alpha\right)^{1/\alpha}\\ &\geq \tau\left(\sum_{i=1}^N\max\left(0,-\int_0^1 \frac{\dot p_i(s)}{p_i(s)}\right)^\alpha\right)^{1/\alpha} = \tau\left(\sum_{i=1}^N\max\left(0,\ln\frac{p_i}{q_i}\right)^\alpha\right)^{1/\alpha} = T_\alpha({\bf{p}},{\bf{q}}).
    \end{align}
    where Minkowski's integral inequality was used. Hence, the $T_\alpha$ quasi-distance establishes a lower bound to the length of directed curves joining pairs of states. Now, we note that there exist semigroup evolutions that saturate this lower bound. Indeed, choose any initial and target interior states ${\bf{p}},{\bf{q}}$, let $I_-=\{i \tx{s.t.} q_i<p_i\}$ and $I_+:=\{i \tx{s.t.} q_i\geq p_i\}$ and, for simplicity, order the basis so that $I_-$-components appear first and $I_+$-components later. Construct the generator
    \begin{equation}
        \bar Q:=
        \begin{bmatrix}
            -D & 0\\
            WD & 0
        \end{bmatrix}
    \end{equation}
    where the $I_-\times I_-$ block $D_{ij}=\delta_{ij}\ln(p_i/q_i)$ and the $I_+\times I_-$ block $W_{ij}$ is an entrywise non-negative matrix such that $\sum_{j_\in I_-}W_{ij}(p_j-q_j) = q_i-p_i$ and $\sum_{i\in I_+}W_{ij}=1$. Its semigroup acts as
    \begin{equation}
        e^{\bar Q} = 
        \begin{bmatrix}
            e^{-D} & 0\\
            W(\mathbb{I}-e^{-D}) & \mathbb{I}
        \end{bmatrix} \then e^{\bar Q}{\bf{p}} =
        \begin{bmatrix}
            e^{-D} & 0\\
            W(\mathbb{I}-e^{-D}) & \mathbb{I}
        \end{bmatrix}
        \begin{bmatrix}
            p_-\\
            p_+
        \end{bmatrix} = 
        \begin{bmatrix}
            q_-\\
            W(q_--p_-) + p_+
        \end{bmatrix} = 
        \begin{bmatrix}
            q_-\\
            q_+
        \end{bmatrix} = {\bf{q}}.
    \end{equation}
    Since $\bar Q$ joins them in unit time, the normalized generator $Q:=\bar Q/\|\bar Q\|^\tx{diag}_\alpha$ -- giving the fastest parameterization of the semigroup path $\gamma$ allowed by the constraint $\|\cdot\|^\tx{diag}_\alpha\leq 1$ -- takes a time $L[\bar \gamma] = \|\bar Q\|^\tx{diag}_\alpha = \tau(\sum_{i\in I_-}\ln(q_i/p_i)^\alpha)^{1/\alpha}$. But this is exactly the quasi-distance $T_\alpha({\bf{p}},{\bf{q}})$ between them. Therefore, we conclude that $T_\alpha({\bf{p}},{\bf{q}}) = \inf_\gamma L[\gamma]$ over all directed curves joining ${\bf{p}}$ to ${\bf{q}}$ -- that the proposed $T(\cdot,\cdot)$ is the time quasi-distance arising from dynamics bounded as $\|Q\|_\alpha^\tx{diag}\leq 1$ -- and that among the fastest allowed evolutions are the above described semigroups.

\end{proof}

\noindent\textbf{Corollary \ref{th: FT outer}}

\begin{proof}\label{pr: FT outer}
    Consider an arbitrary isotropic norm, inducing the Finsler metric $F$ and corresponding time quasi-distance $T$. An equivalent statement of Proposition \ref{th: TS outer} is that $F_\infty({\bf{p}},{\bf{v}}) \leq F({\bf{p}},{\bf{v}}) \leq F_1({\bf{p}},{\bf{v}})$, where $F_1$ and $F_\infty$ denote, respectively, the Finsler metrics induced by dynamics constrained in the norms $\|\cdot\|_1^\tx{diag}$ and $\|\cdot\|^\tx{diag}_\infty$. It directly follows that    
    \begin{equation}
         T_\infty({\bf{p}},{\bf{q}}) \leq T({\bf{p}},{\bf{q}}) \leq T_1({\bf{p}},{\bf{q}}) \tx{ and }  \mc{S}_t^{1\pm}({\bf{p}}) \subset \mc{S}_t^\pm({\bf{p}}) \subseteq\mc{S}_t^{\infty\pm}({\bf{p}})
    \end{equation}
    where $T_1$ and $T_\infty$ denote the appropriate quasi-distances, with forms as given in Theorem \ref{th: FT inner}. The quasi-distance $T_\infty$ happens to be the well established Funk quasi-distance \cite{Funk1929,Papadopoulos2014}, which on the simplex is equivalent to the $\infty$-Rényi divergence, and its reachable sets take the form presented in \eqref{eq:sets_outer}.
\end{proof}

\noindent\textbf{Theorem~\ref{prop: threshold}}
\begin{proof}\label{app:threshold-proof}
Let $X_1,X_2,\ldots$ be independent standard exponential random
variables and, for every $N\geq2$, define
\begin{equation}
S_N:=\sum_{j=1}^N X_j,
\qquad
\mathbf q^{(N)}:=\frac{1}{S_N}(X_1,\ldots,X_N).
\label{eq: exponential-simplex-representation}
\end{equation}
Then $\mathbf q^{(N)}\sim\tx{Dir}(1,\ldots,1)$ and is uniformly
distributed on $\Delta_{N-1}$. Put $a_N:=S_N/N$. By the strong law of
large numbers, $a_N\to1$ almost surely. Since
$Nq_i^{(N)}=X_i/a_N$ and $\tau=1/N$,
\begin{equation*}
\begin{aligned}
T_N
:=T_1\bigl(\omega_N,\mathbf q^{(N)}\bigr)
=\frac{1}{N}\sum_{i=1}^N
\max\left\{0,-\ln\frac{X_i}{a_N}\right\}.
\end{aligned}
\end{equation*}
The map $x\mapsto\max\{0,x\}$ is $1$-Lipschitz, so
\begin{equation*}
\left|
T_N-\frac{1}{N}\sum_{i=1}^N\max\{0,-\ln X_i\}
\right|
\leq|\ln a_N|
\xrightarrow[N\to\infty]{\mathrm{a.s.}}0.
\end{equation*}
The variables $Y_i:=\max\{0,-\ln X_i\}$ are independent, identically
distributed, and integrable, with
\begin{equation}
\mathbb E[Y_1]
=\int_0^1(-\ln x)e^{-x}\,dx
=\gamma+\mathrm E_1(1)
=t_c.
\label{eq: threshold-expectation}
\end{equation}
A second application of the strong law gives
$N^{-1}\sum_{i=1}^N Y_i\to t_c$ almost surely. Consequently,
$T_N\to t_c$ almost surely under this coupling and hence in probability
with respect to the uniform target distribution.
It remains to verify the convergence of the expectation appearing in
Eq.~\eqref{eq: threshold}. Define $Z_N:=Nq_1^{(N)}$. The first marginal
of the uniform Dirichlet distribution $\tx{Dir}(1,\ldots,1)$ satisfies
~$q_1^{(N)}\sim\operatorname{Beta}(1,N-1)$, with density
$(N-1)(1-u)^{N-2}$ for $0<u<1$. Therefore, by the change of variables
$x=Nu$, the density of $Z_N$ is
\begin{equation}
\rho_N(x)
=\frac{N-1}{N}
\left(1-\frac{x}{N}\right)^{N-2},
\qquad 0<x<N.
\end{equation}
By exchangeability of the components of $\mathbf q^{(N)}$, we establish the average
\begin{align}
\mathbb E[T_N]
=\frac{1}{N}\sum_{i=1}^N
\mathbb E\!\left[\max\{0,-\ln(Nq_i^{(N)})\}\right]
=\mathbb E\!\left[\max\{0,-\ln Z_N\}\right]
=\int_0^1(-\ln x)\rho_N(x)\,dx.
\end{align}
For every $x\in(0,1)$,
$\rho_N(x)\xrightarrow[N\to\infty]{}e^{-x}$, while
$0\leq\rho_N(x)\leq1$. Since $-\ln x$ is integrable on $(0,1)$,
the dominated convergence theorem shows that $\mathbb E[T_N]$
converges to the integral evaluated in
Eq.~\eqref{eq: threshold-expectation}. Therefore,
\begin{equation}
\lim_{N\to\infty}\mathbb E[T_N]
=\mathbb E[Y_1]
=t_c.
\end{equation}
This proves the mean convergence stated in Eq.~\eqref{eq: threshold}. Moreover,
$\frac{V^+_{\tx{inner}}(t)}{V_{\tx{total}}}
=\Pr\!\left[T_N\leq t\right].$
Thus the inner-volume fraction tends to $0$ for fixed $t<t_c$ and to
$1$ for fixed $t>t_c$. Since
\begin{equation}
\frac{V^+_{\tx{outer}}(t)}{V_{\tx{total}}}
=(1-e^{-Nt})^{N-1}
\xrightarrow[N\to\infty]{}1
\label{eq: outer-volume-large-N}
\end{equation}
for every fixed $t>0$, division by the outer-volume fraction proves
Eq.~\eqref{eq: tightness-threshold}.
\end{proof}

\begin{proposition}[Incoming covering time]
\label{prop: saturation}
Let $N\geq2$, let $\omega_N=(1/N,\ldots,1/N)^T$, and let
$T:=T_1$ be the exact minimal-time function for the
diagonal-$\ell_1$ norm with $\tau=1/N$. Define
\begin{equation}
K_N^*:=\operatorname*{arg\,max}_{k\in\{1,\ldots,N-1\}}
\frac{k}{N}\ln\frac Nk,
\qquad
k_N^*\in K_N^*,
\label{eq: kNstar-definition}
\end{equation}
where $k_N^*$ is any chosen maximizer, and set
\begin{equation}
t_N^*:=\frac{k_N^*}{N}\ln\frac{N}{k_N^*}
=\max_{1\leq k<N}\frac{k}{N}\ln\frac Nk.
\label{eq: tNstar-definition}
\end{equation}
The maximizing integer need not be unique, but the value $t_N^*$ is
unique. Then
\begin{equation}
\max_{\mathbf q\in\Delta_{N-1}}T(\mathbf q,\omega_N)=t_N^*,
\label{eq: Dinmax}
\end{equation}
and the complete set of maximizers is
\begin{equation}
\begin{aligned}
\operatorname*{arg\,max}_{\mathbf q\in\Delta_{N-1}}
T(\mathbf q,\omega_N)
=\bigcup_{k\in K_N^*}\Bigl\{\frac1k\mathbf 1_I:
 I\subseteq\{1,\ldots,N\},\ |I|=k\Bigr\}.
\end{aligned}
\label{eq:incoming-maximizers}
\end{equation}
Here $\mathbf 1_I$ denotes the indicator vector of $I$. Consequently,
\begin{equation}
\begin{aligned}
\mathcal S_t^-(\omega_N)
&:=\left\{\mathbf q\in\Delta_{N-1}:
T(\mathbf q,\omega_N)\leq t\right\},\\
\mathcal S_t^-(\omega_N)=\Delta_{N-1}
&\quad\Longleftrightarrow\quad t\geq t_N^*.
\end{aligned}
\label{eq:diagonal-l1-saturation}
\end{equation}
Furthermore, for every sequence of choices $k_N^*\in K_N^*$,
\begin{equation}
t_N^*\leq\frac1e,
\qquad
\frac{k_N^*}{N}\xrightarrow[N\to\infty]{}\frac1e,
\qquad
t_N^*\xrightarrow[N\to\infty]{}\frac1e.
\label{eq: tstar}
\end{equation}
For any other isotropic norm $\|\cdot\|$ with the same time scale,
$t_N^*$ is an upper bound on the incoming covering time, but need not
be its exact value.
\end{proposition}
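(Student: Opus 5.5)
By Theorem~\ref{th: FT inner} with $\alpha=1$ and $\tau=1/N$, we have
\[
T(\mathbf q,\omega_N)=\frac1N\sum_{i:\,q_i>1/N}\ln(Nq_i),
\]
as in \eqref{eq: outgoing-incoming-costs}. This expression is continuous on all of $\Delta_{N-1}$, since the terms with $q_i=0$ do not contribute. So the task reduces to maximizing an explicit function on a compact set.

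\emph{Step 1: fix the support.} I will fix the set $I=\{i: q_i>1/N\}$ with $|I|=k$, where $1\le k\le N-1$. The case $k=N$ is impossible, and $k=0$ forces $\mathbf q=\omega_N$. Put $s=\sum_{i\in I}q_i\le 1$. Since $\ln$ is strictly concave, Jensen gives
\[
\sum_{i\in I}\ln(Nq_i)\le k\ln(Ns/k)\le k\ln(N/k),
\]
with equality iff all $q_i$, $i\in I$, equal $1/k$ and $s=1$. Equality therefore forces $q_j=0$ off $I$. The point $\frac1k\mathbf 1_I$ is consistent with $I$, because $1/k>1/N$. Hence
\[
\max T(\mathbf q,\omega_N)=\max_k \frac kN\ln\frac Nk=t_N^*,
\]
and the maximizers are exactly $\frac1k\mathbf 1_I$ with $k\in K_N^*$.

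\emph{Step 2: the covering statement.} The equivalence \eqref{eq:diagonal-l1-saturation} follows immediately, since $\mathcal S^-_t=\Delta_{N-1}$ iff $t\ge\max_{\mathbf q}T(\mathbf q,\omega_N)$.

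\emph{Step 3: asymptotics.} Write $g(x)=x\ln(1/x)$ on $(0,1]$. It is strictly concave with unique maximum $g(1/e)=1/e$, which gives $t_N^*\le 1/e$. The grid $\{k/N\}$ has mesh $1/N$, so some $k$ has $|k/N-1/e|\le 1/N$, and continuity then gives $t_N^*\ge g(1/e)-o(1)$, hence $t_N^*\to 1/e$. For $k_N^*/N\to1/e$ I will argue by contradiction: if a subsequence stayed at distance $\ge\varepsilon$ from $1/e$, strict concavity would keep $g(k_N^*/N)\le 1/e-\delta(\varepsilon)$, contradicting $t_N^*\to1/e$.

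\emph{Step 4: general isotropic norms.} By Corollary~\ref{th: FT outer}, $T\le T_1$, so every $\mathbf q$ satisfies $T(\mathbf q,\omega_N)\le t_N^*$ and $t_N^*$ is an upper bound on the covering time.

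\emph{Main obstacle.} The only delicate point is boundary states, where Theorem~\ref{th: FT inner} is stated with interior assumptions. Its formula covers the incoming case with $\mathbf q$ on the boundary: the target $\omega_N$ is interior, and the constructed generator $\bar Q$ only needs $q_i>0$ for $i\in I_-$ relative to the target. Still, the reverse-direction indices must be handled carefully. If that fails, I will fall back on continuity of $T$ together with the lower bound from the Minkowski argument.
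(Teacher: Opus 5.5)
Your proposal is correct and follows essentially the same route as the paper: the explicit formula for $T_1(\mathbf q,\omega_N)$, Jensen's inequality at fixed $k=|I_+|$ with the same equality conditions, maximization over $k$, and the same $x\ln(1/x)$ asymptotics. The differences are minor. You invoke Corollary~\ref{th: FT outer} for general isotropic norms where the paper uses $\|Q\|\le\|Q\|_1^{\mathrm{diag}}$ directly, and you explicitly check that $\frac1k\mathbf 1_I$ has $I_+=I$.

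Your boundary caveat is well placed, since every maximizer lies on $\partial\Delta_{N-1}$ and the paper applies the formula there without comment. The lower bound in fact needs no interiority. From $\dot p_i\ge Q_{ii}p_i$ you get $p_i(t)\ge p_i(0)\,e^{-\int_0^t|Q_{ii}(s)|\,ds}$, and hence $t\ge\frac1N\sum_{q_i>1/N}\ln(Nq_i)$ for any admissible control.
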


\begin{proof}
By~\eqref{eq: outgoing-incoming-costs},
\begin{equation*}
T(\mathbf q,\omega_N)
=\frac1N\sum_{i\in I_+}\ln(Nq_i),
\qquad
I_+:=\{i:q_i>1/N\}.
\end{equation*}
Write $k:=|I_+|$. If $k=0$, normalization gives
$\mathbf q=\omega_N$ and $T(\mathbf q,\omega_N)=0$; $k=N$ is
impossible because it would imply $\sum_iq_i>1$. For $1\leq k<N$, put
$m:=\sum_{i\in I_+}q_i\leq1$. Jensen's inequality gives
\begin{equation}
N T(\mathbf q,\omega_N)
=\sum_{i\in I_+}\ln(Nq_i)
\leq k\ln\frac{Nm}{k}
\leq k\ln\frac Nk.
\label{eq:incoming-jensen-chain}
\end{equation}
For a fixed $k$, equality throughout holds if and only if $m=1$ and
$q_i=1/k$ on $I_+$, equivalently
$\mathbf q=k^{-1}\mathbf 1_I$ for some $I$ with $|I|=k$. Maximizing
over $k\in\{1,\ldots,N-1\}$ proves~\eqref{eq: Dinmax}; the same
equality conditions give~\eqref{eq:incoming-maximizers}. The
sublevel-set definition then gives~\eqref{eq:diagonal-l1-saturation}.

For the asymptotic statement, extend $f(x):=-x\ln x$ continuously to
$[0,1]$ by setting $f(0)=f(1)=0$. Then
\begin{equation}
t_N^*=\max_{1\leq k<N}f\!\left(\frac{k}{N}\right).
\label{eq: discrete-envelope}
\end{equation}
The function $f$ has a unique maximum at $x=1/e$, with
$f(1/e)=1/e$, so $t_N^*\leq1/e$. Choose integers
$j_N\in\{1,\ldots,N-1\}$ with $j_N/N\to1/e$. Since
\begin{equation*}
f(j_N/N)\leq t_N^*\leq1/e,
\end{equation*}
we have $t_N^*\to1/e$. If a sequence $k_N^*\in K_N^*$ did not satisfy
$k_N^*/N\to1/e$, compactness would give a subsequence converging to
some $x\neq1/e$. Continuity would then imply
$f(x)=\lim_Nt_N^*=1/e$, contradicting uniqueness of the maximizer.
Therefore $k_N^*/N\to1/e$.

Finally, for every isotropic norm $\|\cdot\|$ with the same time scale,
$\|Q\|\leq\|Q\|_1^{\mathrm{diag}}$. Thus every control admissible for
the diagonal-$\ell_1$ norm is also admissible for $\|\cdot\|$, proving
the stated upper bound on the incoming covering time.
\end{proof}

%
%




\normalfont

\bibliographystyle{iopart-num-long}

\bibliography{Biblio}
\end{document}